\newcolumntype{Y}{>{\centering\arraybackslash}X}
\definecolor{hotpink}{rgb}{0.9,0,0.5}
\newcommand{\field}[1]{\mathbb{#1}}
\newcommand{\R}{\field{R}}
\newcommand{\N}{\field{N}}
\newcommand{\Exp}{\field{E}}
\newcommand{\Pro}{\field{P}}
\newcommand{\df}{\,\mathrm{d}}
\newtheorem{theorem}{Theorem}[section]
\newtheorem{lemma}{Lemma}[section]
\newtheorem{proposition}{Proposition}[section]
\newtheorem{assumption}{Assumption}[section]
\DeclareMathOperator{\regret}{Regret}
\DeclareMathOperator{\Var}{Var}
\DeclareMathOperator{\tr}{tr}
\DeclareMathOperator*{\argmax}{arg\,max}
\numberwithin{equation}{section}
\definecolor{keywordcolor}{rgb}{0.8,0.1,0.5}
\journal{Journal of \LaTeX\ Templates}
\begin{document}

\begin{frontmatter}

\title{Adaptive Pricing in Insurance: Generalized Linear Models and Gaussian Process Regression Approaches}

\author[address]{Yuqing Zhang \corref{cor1}}
\ead{yuqing.zhang@manchester.ac.uk}

\author[address]{Neil Walton\corref{cor2}}
\ead{neil.walton@manchester.ac.uk}
\ead[url]{https://sites.google.com/site/neilwaltonswebsite/}

\cortext[cor1]{Corresponding author}
\cortext[cor2]{Principal corresponding author}

\address[address]{School of Mathematics, The University of Manchester, Manchester, M13 9PL, UK.}



\begin{abstract}
We study the application of dynamic pricing to insurance.
We view this as an online revenue management problem where the insurance company looks to set prices to optimize the long-run revenue from selling a new insurance product.
We develop two pricing models:
an adaptive Generalized Linear Model (GLM) and an adaptive Gaussian Process (GP) regression model.
Both balance between exploration, where we choose prices in order to learn the distribution of demands \& claims for the insurance product, and exploitation, where we myopically choose the best price from the information gathered so far.
The performance of the pricing policies is measured in terms of regret: the expected revenue loss caused by not using the optimal price.
As is commonplace in insurance, we model demand and claims by GLMs.
In our adaptive GLM design, we use the maximum quasi-likelihood estimation (MQLE) to estimate the unknown parameters.
We show that, if prices are chosen with suitably decreasing variability, the MQLE parameters eventually exist and converge to the correct values, which in turn implies that the sequence of chosen prices will also converge to the optimal price.
In the adaptive GP regression model, we sample demand and claims from Gaussian Processes and then choose selling prices by the upper confidence bound rule.
We also analyze these GLM and GP pricing algorithms with delayed claims.
Although similar results exist in other domains, this is among the first works to consider dynamic pricing problems in the field of insurance. We also believe this is the first work to consider Gaussian Process regression in the context of insurance pricing. These initial findings suggest that online machine learning algorithms could be a fruitful area of future investigation and application in insurance.

JEL classification:
C44, 
C61, 
G22 
\end{abstract}

\begin{keyword}
Learning-pricing;
Regret;
Generalized Linear Models;
Gaussian Processes Regression;
Delayed claims
\end{keyword}

\end{frontmatter}

\section{Introduction}

We study the application of dynamic pricing from the perspective of an insurance company.
Here the insurance company looks to set prices to optimize the long-run revenue from selling insurance products which also experience a distribution of claims.
This can be cast as a revenue management problem, see Philips \cite{PhillipsRobert2005} and Talluri and Ryzin \cite{TalluriRyzin2005} for an overview of this field.
If the distribution of demand and claims was known to the insurance company, then this could be formulated as a relatively straight-forward optimization problem.
However, in the real world, demand for a new product at each price is not deterministic and known.
Thus, we assume that the insurance company only observes the realised demand and does not know the underlying distribution of demand and claims for the insurance product.
This is particular relevant for the release of new insurance products.

Given demand and claims are not known, the retailer faces a learning \& pricing problem.
This is sometimes known as a exploration-exploitation trade-off.
At the beginning of each selling period, we set a price close to the estimated best price
and then study the changes of demand and claims when varying prices.
This is the exploration process, which enables us to find the relationship between price and demand \& claims distributions.
Further by setting the price close to the estimated best price, we are able to exploit what we have learned.
This is the exploitation process.
Choosing prices that are far from the best estimated price encourages exploration but can be inefficient in exploiting available information.
On the other hand, choosing pricing close to the best estimate price may not discover enough about the underlying distribution of claims to converge to the optimal price.
Therefore, the insurance company must create a policy for pricing the insurance product that reveals sufficient information about the underlying demand and claims distributions so as to optimize the long-run revenue of the insurance company.
The policy that we consider provides a mechanism for efficiently exploring different prices offering, and then exploiting that knowledge to achieve the revenue maximization objective.

We consider this pricing problem as a multi-armed bandit problem, which has been widely used to address the trade-off between exploration and exploitation in sequential decision making.
We investigate two regression models for the learning \& pricing problem: the Generalized Linear Models (GLM) and the Gaussian Process (GP).
GLMs are a classical statistical technique, introduced by Nelder and Wedderburn \cite{NelderWedderburn1972}, and was first applied in  insurance rating by
McCullagh and Nedler \cite{McCullaghNelder1989}.
However, for sequential decision making, prices derived from maximum likelihood estimates may not be consistent due to insufficient exploration \cite{LaiRobbins1979}.
To solve this problem, the strong consistency of least squares estimates is required, which is established by Lai and Robbins \cite{LaiRobbins1981, LaiRobbins1982} and further generalized by Lai and Wei \cite{LaiWei1982}.
This analysis is a crucial step in the field of online estimation and optimization.
Lai \cite{Lai2003} gives a comprehensive survey of several related developments and discussions.
Its application to revenue management is given by den Boer and Zwart \cite{denBoerZwart2013}, in which bounds on cumulative regrets are analyzed as a measure of performance.
Here, regret is defined to be the difference between the expected revenue and the optimal revenue.
We develop these models for the setting of insurance where there are both demand and claims.
We, then, consider a second approach to this problem based on a Bayesian optimization.
It was proposed by Mockus \cite{Mockus1989, Mockus1994} for optimizing an unknown function using a Gaussian Process.
A Gaussian Process is a generalization of the Gaussian probability distribution, where random variables are modelled by stochastic processes.
Over the last two decades, GPs have been widely used in machine learning.
We investigate the upper-confidence bound (UCB) approach taken by Srinivas et al.\@ \cite{Srinivas2010, Srinivas2012}.
By maximizing the UCB aquisition function, we can determine the price at each time period.
For more details on Gaussian Process regression and Bayesian optimization in general, we refer to Rasmussen and Williams \cite{RasmussenWilliams2006} and Brochu \cite{Brochu2010}.

In a summary, the contributions of this work are as follows:
\begin{itemize}
    \item{
    We address the dynamic pricing problem with unknown demand \& claims by adaptive Generalized Linear Models and Gaussian Process regression approaches. To the best of our knowledge this paper is the first to consider online learning in the context of insurance pricing.}
    \item{
    In the GLM setting, based on den Boer and Zwart \cite{denBoerZwart2013}, we extend the pricing algorithm to insurance pricing by subtracting heavy-tail distributed claims.
    }
    \item{
    In the GP setting, we follow Srinivas et al.\@ \cite{Srinivas2010} for Bayesian optimization.
    GP with an alternative UCB function and additive kernel is applied to select the optimal price.
    }
    \item{
    We use cumulative regret to measure the performance of our algorithms, named GLM pricing algorithm and GP pricing algorithm.
    These have the following bounds:
    \begin{itemize}
        \item{
        The GLM pricing algorithm can achieve regret
        $O\left(\sqrt{T \log T}\right)$,
        }
        \item{
        The GP pricing algorithm has regret
        $ O \left(\sqrt{\gamma_{T} T \log T}\right)$.
        }
    \end{itemize}
    Here, $T$ is the length of the selling horizon and $\gamma_{T}$ is maximum information that the algorithm could learn about the demand and total claims functions.
    }
    \item{
    By our analysis, we show that these two mechanisms are simple, implementable and have good performance.
    }
\end{itemize}

Dynamic pricing and online learning have been successfully applied in a variety of industries such as airline ticketing, hotel bookings, car rentals, and fashion.
However, to the best of our knowledge, online learning has not been applied to insurance pricing in any literature. 
Thus, motivated by the powerful machine learning techniques and the growing applicability insurance industry,
this paper is among the first to investigate these methods to address problems in insurance pricing.
As insurance increasingly sold online and with insurance products continually changing, we believe that these methods will be important for actuaries now and in the future \cite{swissre2015}.


\subsection{Related Literature}
In this section, we provide a brief review of insurance pricing and dynamic pricing.
We also highlight related work on applying online learning to two statistical models: Generalized Linear Models and Gaussian Processes.
Finally, we discuss the previous work on revenue management with uncertainty.
\paragraph{Insurance Pricing and Dynamic Pricing}
Many researchers such as B{\"{u}}hlmann \cite{Buhlmann1970}, McClenahan \cite{McClenahan2001}, Jong and Heller \cite{JongHeller2008} point out that mathematical and statistical methods are needed to support actuaries to make pricing decisions.
The linear models have been applied extensively in actuarial work.
For example, early literature uses linear models in motor insurance, see Baxter \cite{Baxter1980} and Coutts \cite{Coutts1984}.
In 1960, Bailey and Simon \cite{BaileySimon1960} introduce the minimum bias technique in classification ratemaking, which is an important milestone in non-life insurance pricing development \cite{David2015}.
In the 1980s, British actuaries introduced GLMs to insurance pricing and this has now become a standard approach in many countries \cite{OhlssonJohansen2010}.
A good overview of the use of GLMs in different situations in actuarial work is available in Haberman and Renshaw \cite{HabermanRenshaw1996}; for further research applying GLMs in insurance pricing see \cite{JongHeller2008,OhlssonJohansen2010,KaasGDD2009,Frees2010}.
In the last few years, the non-life insurance market has changed due to the increase of online services.
Machine learning techniques have become more popular in applications in the insurance sector.
These enhance and supplement the standard GLMs analysis.
We refer to W{\"{u}}thrich and Buser \cite{WuthrichBuser2018} for an overview and insight into GLMs and machine learning methods in non-life insurance pricing.

Dynamic pricing is the study of how demand responds to prices in a changing environment.
In recent decades, interest in dynamic pricing has grown rapidly.
Early profit optimization problems assume sellers have complete knowledge of the market, which means demand functions are known or can be found from previous selling experience.
Evans \cite{Evans1924, Evans1930} is one of the first to propose a dynamic pricing model by adding time derivatives of prices to a static model.
Greenleaf \cite{Greenleaf1995}  numerically shows the significant effects of reference prices and develops an optimal dynamic pricing strategy in a monopoly setting.
Kopalle et al.\@ \cite{KopallePraveen1996} analytically generalize these results to a duopoly and an oligopoly settings.
Following this, Fibich et al.\@ \cite{FibichGadi2003}
then calculate explicitly the optimal pricing strategy in various nonsmooth optimization problems.
All of these works assume the demand function of consumers is deterministic and known.
Surveys by Aviv and Vulcano \cite{AvivVulcano2012} and den Boer \cite{DenBoer2015} provide an excellent  overview of this area.

\paragraph{Adaptive Generalized Linear Models}
Nelder and Wedderburn \cite{NelderWedderburn1972} first introduce Generalized Linear Models (GLM), which is an extension to classical linear regression.
As discussed above, it has become a well-established and standardised statistical technique to price the insurance products
\cite{OhlssonJohansen2010, Wuthrich2017}.
In the GLM framework, maximum likelihood estimation is a commonly used technique to find the parameters of a given Generalized Linear Models.
Wedderburn \cite{Wedderburn1974} proposes a method named quasi-likelihood estimation, an extension of likelihood estimations but only the first two moments of the observations are needed.
McCullagh and Nedler \cite{McCullaghNelder1989}
then apply GLMs with quasi-likelihood estimation to insurance ratemaking.
They fit a GLM to different types of data, including average claim costs for a motor insurance portfolio and claims frequency for marine insurance.


To price products, often a certainty equivalence rule is used. Here the optimal price is chosen for the estimated parameters. Thus when optimizing we treat estimates as if they were the true (unknown) parameters of the model.
Anderson and Taylor \cite{AndersonTaylor1976} apply a certainty equivalence rule to solve a multiperiod control problem.
However, strong consistency may not hold when applying a certainty equivalence rule to maximum quasi-likelihood estimates \cite{LaiRobbins1982, Lai2003}.
To deal with this problem, conditions are proposed to ensure the strong consistency for parameters estimators \cite{LaiRobbins1979, LaiRobbinsWei1979}.
Lai and Robbins \cite{LaiRobbins1981} introduce further conditions for an adaptive design, and
Lai and Wei \cite{LaiWei1982} generalize these conditions to multiple regression models with errors given by a martingale difference sequence.
Chen et al.\@ \cite{ChenHY1999} extend the results of \cite{LaiWei1982, LaiRobbinsWei1979} to GLMs under both fixed and adaptive designs.

\paragraph{Multi-armed Bandits and Bayesian Optimization}

A multi-arm bandit problem refers to a broad class of sequential decision making problems.
At each time step, one must choose an arm amoungst a set of arms, each of which has unknown rewards.
There is a trade-off between exploration, i.e. estimating the distribution of rewards for all arms in the past, and exploitation, i.e. choosing the arm with higher expected reward.
Bubeck and Cesa-Bianchi \cite{BubeckBianchi2012} present a comprehensive review of work on multi-armed bandit problems.
In multi-armed bandits problem, the upper confidence bound (UCB) rule is commonly used to select arms at each time period.
The UCB algorithm constructs a confidence interval for the mean of each arm, and then chooses the arm that maximizes revenue under this estimation.
The UCB strategy is introduced by Auer et al.\@ \cite{Auer2002} to address a specific bandit model, and is used for asymptotic analysis of regret as first discussed in Lai and Robbins \cite{LaiRobbins1985}.
Regret bounds for multi-armed bandits problems have attracted a great deal of interest in different cases, such as linear models \cite{Dani2008, RusmevichientongTsitsiklis2010}, Generalized Linear Models \cite{Filippi2010}, Lipschitz functions \cite{Bubeck2011, Kleinberg2008}, Gaussian Process \cite{Srinivas2010} and Thompson Sampling \cite{Agrawal2012, Agrawal2012b, Russo2013}.
In the insurance context, each price is an “arm” and its revenue is the “reward”.

Bayesian optimization \cite{Mockus1978} provides an efficient approach to address global optimization of an unknown potentially random or noisy function.
It is applicable and efficient when objective functions are unknown or are expensive to evaluate.
There are two significant stages in Bayesian optimization.
The first stage is to learn the objective function from available samples.
Bayesian optimization typically works by assuming the unknown function is sampled from a Gaussian Process (GP) \cite{Snoek2012}.
The second stage is to optimize a acquisition function to determine the next sampling points for the evaluation of the objective function.
High acquisition function values occur either because there is large uncertainty in the objective function (exploration) or a high prediction given by the model (exploitation).
Srinivas et al.\@ \cite{Srinivas2010} consider a GP-based Bayesian optimization.
In this work, the authors propose a Gaussian Process upper confidence bound (GP) algorithm, where they sample the reward function from a GP and apply a UCB algorithm to bound the regret.
They achieve sublinear regret in terms of the maximum information gain, the maximum amount of informarion the algorithm could learn about the reward function.
For a comprehensive review of the Bayesian optimization and its applications, we refer to Brochu et al.\@ \cite{Brochu2010}.
To the best of our knowledge this is the first paper to consider Bayesian optimization in the context of insurance.

\paragraph{Revenue Management with Unknown Demand}
Finally we discuss developments on revenue management with uncertainties.
Gallego and van Ryzin \cite{GallegoRyzin1994} introduce a single-product dynamical pricing to revenue management.
Subsequent works have adapted this model to allow for unknown demand.
One popular case is to consider a parametric setting, where demand can be modeled with fixed but not known parameters.
Aviv and Pazgal  \cite{AvivPazgal2005} are among the first to consider model uncertainty.
They derive a closed form model with a single unknown parameter and assume that the arrival of consumers follows a Poisson distribution.
Harrison et al.\@ \cite{HarrisonJ2012} improve the learning and profit performance by a new method named the myopic Bayesian policy.
Broder and Rusmevichientong  \cite{BroderRusmevichientong2012} present a maximum-likelihood based model for the analysis of regret in dynamic pricing problems with a general parametric model.
They show that in a general case, upper bound of the $T$-period regret is $O(\sqrt{T})$.
den Boer and Zwart \cite{denBoerZwart2013} propose a controlled variance pricing policy, in which they create taboo intervals around the average of previously chosen prices to ensure sufficient price dispersion.
This policy is the first to consider the parametric model with unknown demand by maximizing the quasi-likelihood estimation.
They obtain an asymptotic upper bound on $T$-period regret as $O(T^{1/2+\delta})$, where $\delta >0$ is extremely small.
This work forms the base of our insurance pricing model.

The pricing problem can also be addressed in a nonparametric way.
Kleinberg and Leighton  \cite{KleinbergLeighton2003}
provide an analysis of an online auction and introduce regret to measure of the performance of a pricing strategy.
Cope \cite{CopeEric2007} applies a nonparametric Bayesian approach using Dirichlet distributions as priors to achieve a revenue-maximizing goal in an e-commerce market.
Rusmevichientong et al.\@ \cite{RusmevichientongVanRoy2006} develop a nonparametric approach to a multiproduct pricing problem based on a real automobile data set.
Besbes and Zeevi \cite{BesbesZeevi2009, BesbesZeevi2012} use blind pricing policies to balance exploration-exploitation trade-offs and achieves asymptotically optimal.

\paragraph{Bandit Online Learning Problem} Traditionally, delays can be considered as a fixed constant.
Under this setting, Dudik et al.\@ \cite{Dudik2011} provide an efficient algorithm for stochastic contextual bandits and show that regret is additive.
Chapelle and Li \cite{ChapelleLi2012} present the influence of delayed feedback for contextual bandits in news article recommendation.
Cesa--Bianchi et al.\@ \cite{CesaBianchi2016} study networks of nonstochastic bandits.
Pike-Burke et al.\@ \cite{PikeBurke2018} discuss the case with delayed, aggregated anonymous feedback
and the expected delay is known.
They assume only the sum of regret is available while individual regret is unknown.
In general, delays may be a stochastic process. Agarwal and Duchi \cite{AgarwalDuchi2011} analyze stochastic gradient-based optimization algorithms  when delays are i.i.d randomly distributed.
Desautels et al.\@ \cite{Desautels2012} study parallel experiments with a bounded delay between an experiment and observation in a Gaussian Process bandit problems.
Vernade et al.\@ \cite{Vernade2017} consider infinite stochastic delays where some feedback can not be observed after a threshold.
For a systematic study of online learning with delayed feedback and the effects of delay on regret, we refer to Joulani et al.\@ \cite{Joulani2013}.
In their work, they show that delays additively increases regret in stochastic problems without requiring knowledge on distributions of delays.

\subsection{Organization}
The sections of the paper are structured as follows.
In Section \ref{section_ModelsAssumptions}, we describe the optimization pricing problem in the insurance setting and define our pricing models.
We study GLM and GP models, as well as assumptions and estimation methods associated with each of these models.
In Section \ref{section_Pricingpolicy}, we propose GLM and GP pricing algorithms and explain how they work, respectively.
The main result of this paper is presented in Section \ref{section_RegretBounds}.
We consider bounds on cumulative regret, which help to measure the performance of each pricing policy.
In Section \ref{section_DelayedCase}, we extend both models with unknown delayed claims.
Section \ref{section_NumericalExamples} illustrates an experimental set-up and numerical results .
Finally, a conclusion and discussion of future work are provided in Section \ref{section_Conclusion}.
Auxiliary results and proofs are gathered in the \nameref{Appendix}.

\section{Models and Assumptions} \label{section_ModelsAssumptions}
In this section, we introduce two regression models and important assumptions.
We give a brief overview of the insurance pricing problem in Section \ref{OverviewProblem}.
In Section \ref{section_ParaModel}, we discuss an adaptive Generalized Linear Model (GLM), which is an parametric model, and explain how to estimate unknown parameters by quasi-likelihood estimation.
This model is built on ideas of den Boer and Zwart \cite{denBoerZwart2013}, and Lai and Wei \cite{LaiWei1982}.
In Section \ref{section_NonparaModel}, we introduce an adaptive Gaussian Process (GP) model with an UCB rule.

\subsection{Overview}\label{OverviewProblem}

We consider an insurance company which sells a single product over a selling horizon $T > 0$.
The selling price $p_{t}$ is determined at the beginning of each time period $t \in \{0,\dots,T\}$.
We define the set of acceptable prices by $\mathcal{P} = [p_l, p_h]$, where $0 < p_l < p_h$ are the minimum and maximum selling prices.

We assume that dynamic pricing is only associated with past prices.
Given a determined selling price $p_{t}$ at time period $t$, the insurance company observes demand $d_{t} = D_{t} (p_{t})$, which is independent realisations of the random demand function $D(\cdot)$ for the selling price $p_{t}$.
Similarly, we denote the total claims as $C_{t} (p_{t})$ during the time period $t$ and observe the total claims $c_{t} = C_{t} (p_{t})$.
(Often we will suppress the sub-script $t$ from i.i.d. random variables $D(\cdot)$ and $C(\cdot)$.)

In insurance, the premium is the expected income that the insurance company earns and claims are the amount that the insurance company loses.
If the selling price is known, the revenue collected in a single time period $t$ is $p_{t} d_{t} - c_{t}$.
The expected revenue at time $t$ is given by
\begin{align}\label{equ:revenuep}
    r(p_{t})
    = \Exp[p_{t} D(p_{t}) - C(p_{t})] \,.
\end{align}
Both demand and total claims respond to changes in prices at each time period simultaneously.
Once the price is specified, we assume the demand and total claims are independent of each other.
The insurance company aims to find an optimal pricing policy that generates maximum revenue, based on previous selling prices $\{p_i: \, i = 1, \dots, t-1\}$ and observations $\{d_i \,, c_i: \, i = 1, \dots, t-1\}$.
We use cumulative regret to measure the performance of pricing policies.
The regret is the expected revenue loss caused by not using the optimal price. More formally,
we define the cumulative regret over time horizon $T$ as
\begin{align}\label{eq:regretT}
    \regret(T)
    &=
    \Exp
    \left[
    \sum_{t=1}^{T}
    {r({p}^{*})-r({p}_{t})}
    \right]  \,.
\end{align}
Here, $r({p}^{*})$ is the revenue generated by the optimal price ${p}^{*}$:
\begin{equation}\label{eq:optimalprice}
    p^{*}
    =
    \mathop{\arg  \max}_{p \in \mathcal{P}} \,
    r(p_{t})\, .
\end{equation}
The objective of the seller is to maximize the sum of revenue, that is, to minimize the cumulative regret.

\subsection{Generalized Linear Pricing Model}\label{section_ParaModel}
We first consider the dynamic optimization pricing problem in a GLM  setting.
Here the expected revenue can not be calculated directly because it depends on unknown parameters that must be inferred.
We apply the maximum quasi-likelihood estimation (MQLE) to estimate the unknown parameters in the model.
We are concerned about the strong consistency for MQLE of regression parameters in the Generalized Linear Models (GLMs).

Our model is based on the work of den Boer and Zwart \cite{denBoerZwart2013}.
However, we consider a single insurance product with demand and heavy-tailed claims claims.
Here, we use the log of claims to describe large insurance claims.

\subsubsection{Model and Assumptions}
We assume that the insurance company knows the functional forms of the first two moments of demand and claims.
The model for demand distribution at time $t$ is given by
\begin{align*}
    \begin{split}
        \Exp[D(p_{t})]&=h_{1} \left(a_0+a_1 p_{t} \right) \, ,
        \\
        \Var(D(p_{t}))&=\sigma_{1}^{2} v_{1} (\Exp[D(p_{t})])
        \, .
    \end{split}
\end{align*}
Similarly, we assume the log of total claims is with expectation and variance, given by
\begin{align*}
    \begin{split}
        \Exp[\log C(p_{t})]&=h_{2}\left(b_0 + b_1 p_{t}\right) \, ,
        \\
        \Var(\log C(p_{t}))&=\sigma_{2}^2 v_{2} (\Exp[\log C(p_{t})])
        \, .
    \end{split}
\end{align*}
Here, parameters $a_0, a_1$ and $b_0, b_1$ are all unknown.
Notice here that we take the logarithm of total claims which is slightly non-standard when compared with results in revenue management.
In the context of insurance this can be used to model heavy-tailed claims distributions, such as the log-normal distribution.

We consider functions $h_{1}(\cdot), h_{2}(\cdot)$ are known link functions of price $p$ and unknown parameters.
The variance functions $v_{1}(\cdot), v_{2}(\cdot)$ are the variance of the expected demand \& claims.
The variances of the randomly distributed demand \& claims are functions of the variance functions with constants $\sigma_{1}, \sigma_{2} > 0$.
Functions $h(\cdot)$ and $v(\cdot)$ are twice continuously differentiable with first and second derivatives denoted by $\dot{h}(\cdot), \ddot{h}(\cdot)$ and $\dot{v}(\cdot), \ddot{v}(\cdot)$, respectively.
The link function is called a canonical link function when $\dot{h}(x) = v (h)$, otherwise it is called a general link function.

Denote $\bm{a} = (a_{0}, a_{1})^{\top}$ and $\bm{b} = (b_{0}, b_{1})^{\top}$, the expected revenue in \eqref{equ:revenuep} can be written as
\begin{align*}
    r(p_{t}) = r(p_{t}, \bm{a}, \bm{b}) \,.
\end{align*}
Moreover, the cumulative regret in \eqref{eq:regretT} after $T$ time periods becomes
\begin{equation*} \label{Regretab}
    \regret(T)
    =
    \Exp
    \left[
    \sum_{t=1}^{T}
    {r({p}^{*}, \bm{a}, \bm{b})-r(p_{t}, \bm{a}, \bm{b})}
    \right]
    \, .
\end{equation*}
Here, the optimal price ${p}^{*}$ is defined in \eqref{eq:optimalprice}.

Finally, we define the design matrix $P_{t}$ to be the sum of the transpose matrices achieved from price vectors
$\bm{p}(i) = (1, p_i)^{\top}$ for $i = 1, \dots, t$.
For $\bm{p} \in  \mathcal{P} \times \{1\}$,
the design matrix is given by
\begin{equation*}\label{designMatrix}
        P_{t}
        =
        \sum_{i=1}^{t}
        \bm{p}(i)\bm{p}(i)^\top \, .
\end{equation*}
We denote the largest eigenvalue of the design matrix $P_{t}$ as $\lambda_{\max} (t) = \lambda_{\max} (P_{t})$ and denote the smallest eigenvalue as $\lambda_{\min} (t) = \lambda_{\min} (P_{t})$.

\subsubsection{Estimation of unknown parameters}\label{Estimation of unknown parameters}

The optimal policy cannot be calculated directly because regret depends on unknown parameters $\bm{a}, \bm{b}$.
To simplify the notations, we define parameter matrix as $\bm{\beta} = \left(\bm{a}, \bm{b}\right)$.
And we use $\bm{\beta}_{0}$ to denote the true values of regression parameter $\bm{\beta}$.
The maximum quasi-likelihood estimators, denoted by $\widehat{\bm{\beta}}_{t}$, are solutions to
\begin{equation}\label{eq:MQLE_beta}
    \bm{l}_{t}(\widehat{\bm{\beta}}_{t})
    =
    \sum_{i=1}^{t}
    \frac{\dot{h}(\bm{p}^\top(i)\widehat{\bm{\beta}}_{t}  )}{\sigma^2 v ( h(\bm{p}^\top(i)\widehat{\bm{\beta}}_{t}))}
    \bm{p}(i)
    \left(y_{i}-{h \left(\bm{p}^\top(i)\widehat{\bm{\beta}}_{t}\right)} \right)
    =\bm{0} \, .
\end{equation}
Let the filtration
$(\mathcal{F}_{t})_{t\in \N}$ be generated by $\{p_i \,,d_i \,, c_i: \, i = 1, \dots, t-1\}$ for each $t$.
Write
$\eta_{i} = y_{i} - {h (\bm{p}^\top(i)\widehat{\bm{\beta}}_{t})}$.
The error terms $\eta_{i}$ form a martingale difference sequence w.r.t.\@ $\mathcal{F}_{t}$, that is,
$\eta_{i}$ is $\mathcal{F}_{t}$-measurable and $\Exp[\eta_{i} \,| \,\mathcal{F}_{i-1}] = 0 $.
We also assume that for some $\gamma > 2$ almost surely,
\begin{enumerate}[({A}1)]
    \item{
    \begin{equation*}
        \sup_{i\in \mathbb N} \Exp[\eta_{i}^{2} \,| \,\mathcal{F}_{i-1}]
        \leq \sigma^2 < \infty \quad \text{and} \quad
        \sup_{i\in \mathbb N} \Exp[|\eta_{i}|^{\gamma}] < \infty  \,.
    \end{equation*}
    }\label{Assum_A1}
    \item{
    \begin{equation*}
        \lambda_{\min}(t) \to \infty
        \quad  \text{ and } \quad
        \log \lambda_{\max} (t) = o(\lambda_{\min} (t)) \,.
    \end{equation*}
    }\label{Assum_A2}
\end{enumerate}

\subsection{Gaussian Process Pricing Model}\label{section_NonparaModel}
Now, we construct a  Bayesian model by sampling the expected demand and expected total claims from Gaussian Processes (GP).
Our pricing model is an extension of Srinivas et al.'s\@ \cite{Srinivas2010} with an alternative UCB rule to the field of insurance where demands and claims are considered.

First, we offer a brief introduction of Gaussian Process regression and more complete details can be found in Rasmussen and Williams \cite{RasmussenWilliams2006}.
A Gaussian Process is a collection of random variables, any finite number of which have a joint Gaussian distribution.
It is completely specified by its mean function
$\mu({p})$
and covariance function (or kernel)
$k({p}, {p}')$ given by
\begin{align*}
    \begin{split}
        \mu({p}) & = \Exp [f({p})]\,, \\
        k({p}, {p}') & = \Exp [\left(f({p})-\mu({p})\right)\left(f({p}')-\mu({p}')\right)] \,.
    \end{split}
\end{align*}
Then, we can generate a Gaussian Process as
\begin{equation*}
    f({p}) \sim \mathcal{GP}(\mu({p}),  k({p}, {p}')) \,.
\end{equation*}
Without loss of generality, we assume that mean is a constant and covariance function is strictly bounded.

For a noisy sample
$\bm{y}_{T} = [y_{1}, \dots, y_{T}]^{\top}$, given a collection of input points $\{{p}_{1}, \dots, {p}_{T}\}$.
We define ${p}_{t}$ as the $t$-th sample and ${y}_{t} = f({p}_{t}) + \varepsilon_{t}$, here $\varepsilon_{t} \sim \mathcal{N}(0,\,\sigma^{2})$ is independent and identically distributed Gaussian noise with variance $\sigma^{2}$.
As Gaussian Process can describe a distribution over functions, we use $\mathcal{GP}(\mu({p}), k({p}, {p}'))$ as the prior distribution over $f$.
The posterior over $f$ is also a GP distribution with mean $\mu_{T}({p})$ and covariance function $k_{T}({p},{p}')$ given by
\begin{align}\label{posteriorGP}
    \begin{split}
        \mu_{T}({p})
        & = \bm{k}_{T}({p})^{\top} \left(\bm{K}_{T} + \sigma^{2}\bm{I}\right)^{-1} \bm{y}_{T} \,,\\
        k_{T}({p},{p}')
        & = k({p}, {p}') - \bm{k}_{T}({p})^{\top}\left(\bm{K}_{T} + \sigma^{2}\bm{I}\right)^{-1}\bm{k}_{T}({p}')\,,
    \end{split}
\end{align}
where $\bm{k}_{T}({p}) = [k({p}_{1},{p}), \dots, k({p}_{T},{p})]^{\top}$ and covariance matrix $\bm{K}_{T}$ is the positive
definite matrix whose entries are $\bm{K}_{i,j} = k({p_{i}},{p}_{j})$ for $i, j = 1, \dots, T$.

The kernel determines how observations influence the prediction of nearby or similarity inputs.
There are two commonly used kernels: the squared exponential kernel $k_{\sigma, l}$ and the Mat\'{e}rn kernel $k_{\nu, l}$, given by
\begin{align*}
    \begin{split}
        k_{\sigma, l}({p}, {p}')
        & = \exp \left( -\frac{1}{2l^{2}}|{p} - {p}'|^{2}\right) \,,\\
        k_{\nu, l}({p}, {p}')
        & =\frac{1}{2^{\nu-1}\Gamma(\nu)}\left( \frac{2\sqrt{\nu}|{p} - {p}'|}{l}\right)^{\nu} B_{\nu}\left( \frac{2\sqrt{\nu}|{p} - {p}'|}{l}\right) \,.
    \end{split}
\end{align*}
Here, $l$ is the length-scale and $\nu$ is the smoothness parameter.
Moreover, $\Gamma(\cdot)$ is the Gamma function and $B_{\nu}(\cdot)$ is the modified Bessel function of the second kind of order $\nu$.
Note that the Mat\'ern kernel reduces to the exponential kernel $k_{\nu, l}(r)= \exp \left(-{r }{l}\right)$ when the smoothness parameter $\nu = 1/2$ and reduces to the squared exponential kernel when $\nu \rightarrow \infty$.

We define the function of expected demand at price $p_{t}$ as $f_{d}(p_{t})$,
and similarly define the expected claims as $f_{c}(p_{t})$.
Functions $f_{d}(\cdot), f_{c}(\cdot)$ are independently sampled from GPs with known means $\mu_{d}, \mu_{c}$ and kernels $k_{d}(p,p'), k_{c}(p,p')$.
That is, $f_{d} \sim \mathcal{GP} (\mu_{d}, k_{d})$ and $f_{c} \sim \mathcal{GP} (\mu_{c}, k_{c})$.
The posteriors over $f_{d}$ and $f_{c}$ are GPs and also follow GP posterior update in \eqref{posteriorGP}.

The expected revenue function given a determined price $p_{t}$ at time $t$ is
\begin{align*}
    r(p_{t})
    = p_{t}\cdot f_{d}(p_{t}) - f_{c}(p_{t}) + \varepsilon_{t}^{r}\,.
\end{align*}
Here, the noise term $\varepsilon_{r} \sim \mathcal{N}(0, \sigma_{r}^2)$ is a combination of demand noise and claims noise.

We can see that $r(\cdot)$ is sampled from a GP as well, with an additive kernel.
This is because the sum of GPs is also a GP, and the kernel has the form of a direct sum.
Then, we have
$r \sim \mathcal{GP}(\mu_{r}, k_{r})$
with known $\mu_{r} = p \cdot \mu_{d} - \mu_{c}$ and $k_{r} = p^2 \cdot k_{d} + k_{c}$.
The cumulative regret over time horizon $T$ becomes
\begin{align*}
    \begin{split}
        \regret(T)
        & =
        \Exp
        \left[
        \sum_{t=1}^{T}
        \left({p}^{*} \cdot f_{d}({p}^{*}) - f_{c}({p}^{*})\right)
        -\left({p}_{t} \cdot f_{d}({p}_{t}) - f_{c}({p}_{t})\right)
        \right]  \,.
    \end{split}
\end{align*}
Our model is an extension of Srinivas et al.'s\@ \cite{Srinivas2010} work to the field of insurance.
Srinivas et al.\@ \cite{Srinivas2010} study a only one function $f$.
In our case, we consider an additive form, that is the revenue function $r$ contains two components: $f_{d}(\cdot)$ and $f_{c}(\cdot)$ and is sampled with an additive kernel.
Here, the samplings of $f_{d}(\cdot)$ and $f_{c}(\cdot)$ are independent of each other.
\section{Pricing Policy} \label{section_Pricingpolicy}
In this section, we propose two algorithms called ``GLM Pricing Algorithm" and ``GP Pricing Algorithm" in Section \ref{section_ParaAlgo} and Section \ref{section_NonparaAlgo}, respectively.

A popular pricing policy is certainty equivalent pricing, proposed by Anderson and Taylor in a simple linear regression model \cite{AndersonTaylor1976} and further developed by the authors in a more general multiple regression model \cite{AndersonTaylor1979}.
We denote the certainty equivalent price by $\bm{p}_{cep}$.
It is efficient provided current parameter estimates are correct.
Given MQLE $\widehat{\bm{\beta}}_{t}$,  $\bm{p}_{cep}$ is the price that maximizes the expected revenue, given by
\begin{equation}\label{eq:CEP}
    \bm{p}_{cep}
    =
    \mathop{\arg  \max}_{ \bm{p} \in \mathcal{P} } \ r(\bm{p}, \widehat{\bm{\beta}}_{t})\, .
\end{equation}
Certainty equivalent pricing is popular, essentially because it separates the statistical problem from the problem of optimizing revenue and reward.
However, it is possible that parameter estimates converge to incorrect values due to convergence being too quick.
Specifically, Lai and Robbin \cite[Section 2]{LaiRobbins1982} prove that inconsistency may occur for a linear demand function with constant variance under an iterative least squared policy.
 den Boer and Zwart \cite{denBoerZwart2013} further relax the assumptions on the values of prices and propose a variant certainty equivalent pricing: controlled variance pricing policy.  By creating taboo intervals around mean prices that have been chosen, they choose the next price outside these intervals to obtain more information.
Keskin and Zeevi \cite{KeskinZeevi2014} also point out that the certainty equivalent pricing has poor performance of due to the existence of “uninformative” price and therefore introduce a constrained iterative least squared policy.

\subsection{Adaptive GLM Pricing}
\label{section_ParaAlgo}
Based on the work of den Boer and Zwart \cite{denBoerZwart2013},
we propose a dynamic pricing policy with an additional constraint on the smallest eigenvalue of the design matrix with the aim to control the convergence of the pricing policy.
If we ensure a lower bound on $\lambda_{\min}(t)$, we can say that there is sufficient price dispersion guaranteeing the strong convergence of the MQLE.
However, there is no simple explicit relation between
$\lambda_{\min}(t)$ and $\lambda_{\min}(t+1)$.
We introduce the inverse of the trace of the inverse design matrix $\tr(P_{t}^{-1})^{-1}$.
For any $n \times n$ positive definite matrix $A$, we have
$\tr(A^{-1})^{-1} \leq \lambda_{\min}(A) \leq n \tr(A^{-1})^{-1}$.

Let $\bm{v}_{1}, \bm{v}_{2}$ be associated unit eigenvectors, which are an orthonormal basis of $\mathbb{R}^{2}$.
For $t > 2$, the optimal price can be written as a linear combination of these unit eigenvectors, that is $\bm{p}_{cep} = \sum_{i=1}^{2}\alpha_{i}\bm{v}_{i}$.

Let $\mathcal{L}$ be a class of positive differentiable monotone increasing functions $L$ such that $L(t)\to \infty$ and
$t \to \frac{1}{L(t)}$ is convex.
Choose a function $L_1(t) \in \mathcal{L}$, and let $\tr(P_{t}^{-1})^{-1}\geq L_{1}(t)$ for all $t \geq 2$,
then we have
\begin{equation*}
    \lambda_{\min}(t) \geq L_{1}(t) \,.
\end{equation*}
A proper choice of function
$L_1(t)$ guarantees sufficient price dispersion, and therefore, guarantees
the convergence of parameter estimates to the true parameters, as well as the asymptotical convergence of our price sequence to the optimal price.
Specifically, we present the optimal regret bounds when $L_{1}(t) = c \sqrt{t\log t}$ for some $c >0$.

Now, we show how the pricing algorithm works.
First, we choose three linearly independent initial price vectors to estimate the unknown parameters ${\widehat{\bm{\beta}}_{t}}$ by \eqref{eq:MQLE_beta}.
If the MQLE ${\widehat{\bm{\beta}}_{t}}$ does not exist or the constraint on the price dispersion $L_{1}(t)$ is not met, we repeat previous prices until we can find solutions of ${\widehat{\bm{\beta}}_{t}}$ to the MQLE and satisfy the sufficient price dispersion requirement as shown in \eqref{betanotexist} in Algorithm~\ref{Algo_GLMPricing}.
Condition $\tr ({P_{t+j}}^{-1})^{-1} \geq L_{1}(t+j)$ will eventually be met because $j$ is always finite.
If the MQLE ${\widehat{\bm{\beta}}_{t}}$ exist and the constraint on the price dispersion $L_{1}(t)$ is also satisfied, we set the next price to be the certainty equivalent price $\bm{p}(t+1) = \bm{p}_{cep}$, and check whether the condition in \eqref{eq:traceL1} in Algorithm~\ref{Algo_GLMPricing} holds or not.
If it does not hold, we then choose $\bm{p}(t+1) =  \bm{p}_{cep} + \bm{\phi}_{t}$.
Here,
\begin{equation}\label{eq:phi}
    \bm{\phi}_{t} = K \sqrt{\dot{L}_{1}(t)} \left(v_{2,1}\bm{p}_{cep} - \bm{v}_{2} \right) \,,
\end{equation}
and $v_{2,1}$ is the first component of $\bm{v}_{2}$.
Constant $K>0$ must satisfy $\left|K \sqrt{\dot{L}_{1}(t)}\right| \leq 1 $.
These pricing steps are given in Algorithm~\ref{Algo_GLMPricing}.

\begin{algorithm}[ht]
\caption{GLM Pricing Algorithm}

\SetKwInput{Initialisation}{Initialisation}
\Initialisation{}
{Choose $L_{1} \in \mathcal{L}$.}

{Choose linearly independent initial price vectors $\bm{p}(1), \bm{p}(2), \bm{p}(3)$ in $\mathcal{P}$.}

For all $t \geq 3$:

\SetKwInput{Estimation}{Estimation}
\Estimation{}
Calculate ${\widehat{\bm{\beta}}_{t}}$ using \eqref{eq:MQLE_beta}.

\SetKwInput{Pricing}{Pricing}
\Pricing{}
\begin{enumerate}[(I)]
    \item\label{betanotexist}{
    If ${\widehat{\bm{\beta}}_{t}}$ does not exist or $\tr(P_{t}^{-1})^{-1}\ngeq L_{1}(t)$,
    then set $\bm{p}(t+1) = \bm{p}(1), \cdots, \bm{p}(t+j) = \bm{p}(j)$,
    here $j$ is the smallest integer satisfies $\tr ({P_{t+j}}^{-1})^{-1} \geq L_{1}(t+j)$.
    }
    \item\label{eq:traceL1}{
    If ${\widehat{\bm{\beta}}_{t}}$ exists and $\tr(P_{t}^{-1})^{-1}\geq L_{1}(t)$, then we set $\bm{p}(t+1) = \bm{p}_{cep}$ and consider
    \begin{equation*}
        \tr\left(\left({P_{t}+ \bm{p}(t+1) \bm{p}(t+1)^{\top}}\right)^{-1}\right)^{-1} \geq  L_{1}(t+1)\,.
    \end{equation*}
    If it does not hold, we instead set $\bm{p}(t+1) = \bm{p}_{cep} + \bm{\phi}_{t}$ and $\bm{\phi}_{t}$ is defined in \eqref{eq:phi}.
    Here we can choose $\|\bm{\phi}_{t}\|^2 =  \dot{L}_{1}(t) \left(1+\max_{\bm{p}\in \mathcal{P}}\|\bm{p}\|^{2}\right)$, such that the above requirement is satisfied}.
\end{enumerate}
\label{Algo_GLMPricing}
\end{algorithm}

The following proposition guarantees that when prices are chosen, the price dispersion condition \eqref{eq:TraceL1t} is satisfied.
\begin{proposition}\label{proposition_PropGLMPricingNext}
If ${\widehat{\bm{\beta}}_{t}}$ exists and $\tr(P_{t}^{-1})^{-1}\geq L_{1}(t)$,
we set the next price to be $\bm{p}(t+1) =  \bm{p}_{cep} + \bm{\phi}_{t}$, then
\begin{equation}\label{eq:TraceL1t}
    \tr\left(\left({P_{t}+ \bm{p}(t+1) \bm{p}(t+1)^{\top}}\right)^{-1}\right)^{-1}
    \geq  L_{1}(t+1) \, .
\end{equation}

\end{proposition}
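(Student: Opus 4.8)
The plan is to treat $P_{t+1} := P_t + \bm{p}(t+1)\bm{p}(t+1)^\top$ as a rank-one update of $P_t$ and reduce \eqref{eq:TraceL1t}, which is equivalent to $\tr(P_{t+1}^{-1}) \le 1/L_1(t+1)$, to a single scalar inequality. Writing $\bm{u} = \bm{p}(t+1) = \bm{p}_{cep} + \bm{\phi}_t$ and applying the Sherman--Morrison formula, then taking traces, I would obtain
\begin{equation*}
    \tr(P_{t+1}^{-1}) = \tr(P_t^{-1}) - \Delta_t, \qquad \Delta_t := \frac{\bm{u}^\top P_t^{-2}\bm{u}}{1 + \bm{u}^\top P_t^{-1}\bm{u}} \ge 0 \,.
\end{equation*}
(For a $2\times 2$ matrix one may equivalently track $\tr(P_{t+1}^{-1})^{-1} = \det(P_{t+1})/\tr(P_{t+1})$ via the matrix determinant lemma, but the trace form meshes better with the convexity hypothesis.) Since the hypothesis gives $\tr(P_t^{-1}) \le 1/L_1(t)$, it suffices to show the decrease is large enough, namely $\Delta_t \ge 1/L_1(t) - 1/L_1(t+1)$.

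Next I would exploit the defining property of the class $\mathcal{L}$: since $t \mapsto 1/L_1(t)$ is convex and $L_1$ is increasing, the tangent-line estimate for a convex function gives
\begin{equation*}
    \frac{1}{L_1(t)} - \frac{1}{L_1(t+1)} \le \frac{\dot{L}_1(t)}{L_1(t)^2} \,,
\end{equation*}
so the target becomes the cleaner bound $\Delta_t \ge \dot{L}_1(t)/L_1(t)^2$, involving only the present time step.

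For the final and hardest step I would expand everything in the orthonormal eigenbasis $\{\bm{v}_1, \bm{v}_2\}$ of $P_t$, with eigenvalues $\lambda_{\max}(t) \ge \lambda_{\min}(t)$, giving
\begin{equation*}
    \Delta_t = \frac{u_1^2/\lambda_{\max}(t)^2 + u_2^2/\lambda_{\min}(t)^2}{1 + u_1^2/\lambda_{\max}(t) + u_2^2/\lambda_{\min}(t)}, \qquad u_i = \bm{v}_i^\top \bm{u} \,.
\end{equation*}
The design of $\bm{\phi}_t$ is precisely what makes $\Delta_t$ large: its direction $v_{2,1}\bm{p}_{cep} - \bm{v}_2$ keeps the first coordinate of $\bm{p}(t+1)$ equal to $1$ (so the price stays admissible), while injecting a component along the smallest eigenvector $\bm{v}_2$, which is exactly where $P_t^{-1}$ and $P_t^{-2}$ are largest; the normalisation $\|\bm{\phi}_t\|^2 = \dot{L}_1(t)(1 + \max_{\bm{p}\in\mathcal{P}}\|\bm{p}\|^2)$ guarantees $u_2^2 \gtrsim \dot{L}_1(t)$. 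Combined with $\lambda_{\min}(t) \ge L_1(t)$, which follows from the stated inequality $\tr(P_t^{-1})^{-1} \le \lambda_{\min}(t)$ and the hypothesis, the dominant term $u_2^2/\lambda_{\min}(t)^2$ should deliver $\Delta_t \ge \dot{L}_1(t)/L_1(t)^2$.

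I expect the main obstacle to be exactly this last lower bound on $\Delta_t$, which is delicate because $\Delta_t$ depends on both the eigenvalue spread of $P_t$ and the alignment of $\bm{u}$ with the eigenbasis. Two regimes must be separated. If $\lambda_{\min}(t)$ is already comfortably large, say $\lambda_{\min}(t) \ge 2 L_1(t+1)$, then $\tr(P_t^{-1}) \le 2/\lambda_{\min}(t) \le 1/L_1(t+1)$ already, and \eqref{eq:TraceL1t} holds even with $\Delta_t$ discarded; the real work is in the complementary regime $\lambda_{\min}(t) = \Theta(L_1(t))$, where I must show the denominator $1 + \bm{u}^\top P_t^{-1}\bm{u}$ stays $O(1)$ (because $\|\bm{\phi}_t\| \to 0$ while $\lambda_{\min}(t) \to \infty$) and must ensure the $\bm{v}_2$-component $u_2$ does not cancel against the contribution of $\bm{p}_{cep}$ --- this is where I would use the freedom in the constant $K$ and in the sign of $\bm{v}_2$. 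Verifying these estimates uniformly in $t$, subject to the admissibility constraint $|K\sqrt{\dot{L}_1(t)}| \le 1$, is the crux of the argument.
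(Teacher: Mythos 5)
Your plan is essentially the paper's own proof: Sherman--Morrison applied to the rank-one update, passing to traces, using convexity of $t \mapsto 1/L_1(t)$ to reduce the claim to $\Delta_t \ge \dot{L}_1(t)/L_1(t)^2$, and then expanding $\bm{p}(t+1)$ in the eigenbasis of $P_t$ with the sign of the perturbation chosen so that the $\bm{v}_2$-component does not cancel. One small correction: the clause invoking $\lambda_{\min}(t) \ge L_1(t)$ points the wrong way (a \emph{lower} bound on $\lambda_{\min}(t)$ makes $u_2^2/\lambda_{\min}(t)^2$ \emph{smaller}); what is needed is an upper bound $\lambda_{\min}(t) \lesssim L_1(t+1)$, which the paper extracts from the assumed failure of the trace condition at $\bm{p}_{cep}$ and which your two-regime case split supplies equivalently, so your final paragraph already repairs this.
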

\begin{proof}
See \nameref{Appendix} \ref{Proof_PropGLMPricingNext}.
\end{proof}

\subsection{Adaptive GP Pricing Model}\label{section_NonparaAlgo}
In the GP setting, we determine our pricing policy by the upper confidence bound (UCB) rule.
We start by reviewing Srinivas et al.'s work \cite{Srinivas2010}, where
the posterior GP is used to construct a UCB function.
At each time step $t-1$, they set the next sampling point to be the one that maximizes the UCB function, given by
\begin{equation*}
    {p}_{t} = \argmax_{{p} \in \mathcal{P}}{{\mu}_{t-1}({p})+ \sqrt{\varphi_{t}} {\sigma}_{t-1}({p})} \,.
\end{equation*}
Here, the posterior mean ${\mu}_{t-1}(\cdot)$ is obtained after $t-1$ observations and is the current estimate of $f$.
The posterior standard deviation ${\sigma}_{t-1}(\cdot)$ is the uncertainty associated with this estimate.
Term ${\mu}_{t-1}(\cdot)$ is the explicit exploitation on what we have known and ${\sigma}_{t-1}(\cdot)$ is the exploration on what we haven't known.
Parameter $\varphi_{t}$ is used to balance the trade-off between exploitation ${\mu}_{t-1}(\cdot)$ and exploration ${\sigma}_{t-1}(\cdot)$.

In our work, we update the UCB function with additive mean and kernels, defined by
\begin{equation*}
    {p}_{t} = \argmax_{{p} \in \mathcal{P}}{{\mu}_{t-1}^{r}({p})+ \sqrt{\varphi_{t}} {\sigma}_{t-1}^{r}({p})} \,.
\end{equation*}
Here,
\begin{align}\label{eq:rmusdv}
    \begin{split}
        {\mu}_{t-1}^{r} = p_{t-1}{\mu}_{t-1}^{d} + {\mu}_{t-1}^{c}  \,\quad
        {\sigma}_{t-1}^{r}({p})  = p_{t-1}{\sigma}_{t-1}^{d}({p}) + {\sigma}_{t-1}^{c}({p}) \,.
    \end{split}
\end{align}

Now, we present the implementation of GP algorithm for pricing.
At time $t-1$, the algorithm generates a price that maximizes the UCB function, which is the optimal price ${p}^{*}$.
Each price is determined by the history $\mathcal{F}_{t-1}$ and the policy followed by the UCB rule.
In the continuous price set, the optimal price ${p}^{*}$ exists and is unique.
Then, we set the optimal price to be the next sampling price ${p}_{t}$.
Given the price ${p}_{t}$,
we sample $f_{d}({p}_{t})$ and $f_{c}({p}_{t})$ individually.
Since the revenue is determined by two parts: premium ${p}_{t} \cdot f_{d}({p}_{t})$ and total claims $f_{c}({p}_{t})$ and chosen price can be considered as a constant, the sampled revenue is a linear combination of sampled functions $f_{d}({p}_{t})$ and $f_{c}({p}_{t})$.
Therefore, we can obtain the sampled revenue function.
Next, we perform the GP posterior update  \eqref{posteriorGP} to obtain posterior distributions for demand and total claims and update the posterior distribution of $f_{d}$ and $f_{c}$ given $\mathcal{F}_{t} = \mathcal{F}_{t-1}\cup \{p_{t}, d_{t}, c_{t}\}$, respectively.
Finally, we obtain the mean ${\mu}_{t}^{r}$ and standard variance ${\sigma}_{t}^{r}$ defined in \eqref{eq:rmusdv} and then updated the UCB function to determine the next selling price.
The pseudocode for pricing a new released insurance product via GP pricing algorithm is shown in Algorithm~\ref{algo_GPPricing}.

\begin{algorithm}[ht]
\caption{GP Pricing Algorithm}
\label{algo_GPPricing}
\KwIn{GP Prior ${\mu}_{0}, {\sigma}_{0}, k $}

\For{$t = 1, 2, \dots$}{
    Select price:
    ${p}_{t} \gets \argmax_{{p} \in \mathcal{P}}{{\mu}_{t-1}^{r}({p})+ \sqrt{\varphi_{t}} {\sigma}_{t-1}^{r}({p})}$ \;
    Sample the revenue function:
    ${r}_{t} \gets {p}_{t} \cdot f_{d}({p}_{t}) - f_{c}({p}_{t}) + \varepsilon_{t}^{r}$ \;
    Update estimate:
    \vspace{-0.5em}
    \renewcommand\labelitemi{$\vcenter{\hbox{\tiny$\bullet$}}$}
    \begin{itemize}
        \item{
        ${\mu}_{t}^{d}$ and ${\sigma}_{t}^{d}$  by performing the GP posterior update  \eqref{posteriorGP}\,\;
        }
        \vspace{-1em}
        \item{
        ${\mu}_{t}^{c}$ and ${\sigma}_{t}^{c}$ by performing the GP posterior update  \eqref{posteriorGP}\,;
        }
        \vspace{-1em}
        \item{
        Obtain $\mu_{t}^{r}$,  ${\sigma}_{t}^{r}$\,;
        }\vspace{-1em}
    \end{itemize}
    }
\end{algorithm}

\section{Bounds on the Regret}\label{section_RegretBounds}
In this section, we show theoretical analysis obtained by proposed approaches.
Our main results on cumulative regret are stated in Theorem \ref{theorem_GLMBound} in Section \ref{section_ParaResults} and Theorem \ref{theorem_GPBounds} in Section \ref{section_NonparaResults}, as well as proofs .

\subsection{Adaptive GLM Pricing}\label{section_ParaResults}
We show the strong consistency and convergence of the maximum quasi-likelihood estimators in Generalized Linear Models (GLMs).
Lai and Wei \cite{LaiWei1982} study the least-squares estimate in linear stochastic regression models.
Chen et al.\@ \cite{ChenHY1999} extend the results of Lai and Wei \cite{LaiWei1982} to  GLMs with canonical link functions.
Regret bounds depend on the lower bound on the smallest eigenvalue $\lambda_{\min}(t)$ of the design matrix $P_{t}$ and the expected value of difference between the chosen prices and optimal prices.

We will first show that studying the bound on the regret is equivalent to studying the bound on $\left\|{\bm{\beta}} -{\bm{\beta}}_{0}\right\|^2$.

\begin{proposition}\label{proposition_GLMBeta}

Assume there is an open, bounded neighbourhood
$V \in \R^{2 \times 3}$ around true $\bm{\beta}_{0}$, such that, for all $\bm{\beta} \in V$, we can find a unique optimal price $\bm{p}^{*}$ that maximizes the revenue function $r(\bm{p}, \bm{\beta})$.
Given $\bm{p}({\bm{\beta}_{0}}) \in \mathcal{P}$,
$\frac{\partial r(\bm{p}^{*}, \bm{\beta}_{0})}{\partial p} = 0$ and $\frac{\partial^2 r(\bm{p}, \bm{\beta}_{0})}{ \partial p^2} < 0$,
we can derive
\begin{equation}\label{eq:r&p}
    \left|r(\bm{p}, \bm{\beta}_{0})-r(\bm{p}^{*}, \bm{\beta}_{0})\right|
    =
    O\left( \left\|\bm{p}-\bm{p}^{*}\right\|^2 \right) \, .
\end{equation}
Further, if we assume there exists $t_0 \in \mathbb N$ such that $\widehat{\bm{\beta}}_t \in V$  for all $t \geq t_0$, then
\begin{equation}\label{eq:p&b}
    \left\|\bm{p}({\widehat{\bm{\beta}}_{t}})-\bm{p}(\bm{\beta}_{0})\right\|^2
    =
    O\left(\left\| \widehat{\bm{\beta}}_{t} - \bm{\beta}_{0}\right\|^2 \right) \, .
\end{equation}
\end{proposition}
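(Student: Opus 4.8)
The plan is to prove the two estimates separately, each by a local analysis near the true optimum. For \eqref{eq:r&p} I would Taylor expand the revenue in the scalar price variable $p$ about the optimal price $p^{*}$; for \eqref{eq:p&b} I would show that the optimal price depends smoothly on the parameters via the implicit function theorem and is therefore locally Lipschitz. Throughout, note that since the price vectors have the form $\bm{p} = (1, p)^{\top}$, the Euclidean distance $\|\bm{p} - \bm{p}^{*}\|$ collapses to the scalar distance $|p - p^{*}|$, so the vector norms in the statement may be freely replaced by absolute values of the price component.

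For the first estimate, fix $\bm{\beta}_{0}$ and regard $p \mapsto r(\bm{p}, \bm{\beta}_{0})$ as a smooth scalar function, the smoothness being inherited from the twice-continuous differentiability of the link and variance functions. A second-order Taylor expansion about $p^{*}$ gives
\begin{equation*}
    r(\bm{p}, \bm{\beta}_{0}) - r(\bm{p}^{*}, \bm{\beta}_{0})
    = \frac{\partial r(\bm{p}^{*}, \bm{\beta}_{0})}{\partial p}\,(p - p^{*})
    + \frac{1}{2}\,\frac{\partial^2 r(\bar{\bm{p}}, \bm{\beta}_{0})}{\partial p^2}\,(p - p^{*})^2 ,
\end{equation*}
where $\bar{\bm{p}}$ lies on the segment between $\bm{p}$ and $\bm{p}^{*}$. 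The first-order condition $\partial r(\bm{p}^{*}, \bm{\beta}_{0})/\partial p = 0$ eliminates the linear term, and the strict concavity hypothesis $\partial^2 r/\partial p^2 < 0$ together with continuity of the second derivative on the compact interval $\mathcal{P}$ bounds $|\partial^2 r/\partial p^2|$ by a constant. This yields $|r(\bm{p}, \bm{\beta}_{0}) - r(\bm{p}^{*}, \bm{\beta}_{0})| = O(|p - p^{*}|^2) = O(\|\bm{p} - \bm{p}^{*}\|^2)$, which is \eqref{eq:r&p}.

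For the second estimate, I would treat the optimal price as an implicitly defined function of $\bm{\beta}$. Writing $g(p, \bm{\beta}) = \partial r(\bm{p}, \bm{\beta})/\partial p$, the interior maximizer satisfies $g(p^{*}(\bm{\beta}), \bm{\beta}) = 0$. Since $\partial g/\partial p = \partial^2 r/\partial p^2 \neq 0$ at $(\bm{p}(\bm{\beta}_{0}), \bm{\beta}_{0})$ by the second-order condition, the implicit function theorem produces a continuously differentiable map $\bm{\beta} \mapsto p^{*}(\bm{\beta})$ on a neighbourhood of $\bm{\beta}_{0}$, whose derivative, obtained by implicit differentiation of $g = 0$, is controlled by the ratio of the mixed second derivative $\partial^2 r/(\partial p\,\partial\bm{\beta})$ to $\partial^2 r/\partial p^2$. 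Both quantities are continuous, hence bounded on the compact closure of $V$, so $p^{*}(\cdot)$ is Lipschitz on $V$ with some constant $C$. Applying this to $\widehat{\bm{\beta}}_{t}, \bm{\beta}_{0} \in V$ (guaranteed for $t \geq t_0$) gives $\|\bm{p}(\widehat{\bm{\beta}}_{t}) - \bm{p}(\bm{\beta}_{0})\| = |p^{*}(\widehat{\bm{\beta}}_{t}) - p^{*}(\bm{\beta}_{0})| \leq C\,\|\widehat{\bm{\beta}}_{t} - \bm{\beta}_{0}\|$, and squaring delivers \eqref{eq:p&b}.

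The hard part will be the second estimate: I must justify that the maximizer is an interior point with vanishing first derivative, so that the first-order condition and hence the implicit function theorem genuinely apply, rather than a boundary maximizer on $\partial\mathcal{P}$ where $\partial r/\partial p$ need not vanish. I also need the non-degeneracy $\partial^2 r/\partial p^2 \neq 0$ and the uniqueness of the maximizer to persist \emph{uniformly} across all of $V$, not merely at $\bm{\beta}_{0}$. The assumption that $V$ is an open, bounded neighbourhood on which a unique optimal price exists is precisely what I would lean on to secure this uniformity; by contrast, the first estimate is essentially a routine Taylor argument.
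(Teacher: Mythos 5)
Your proposal is correct and follows essentially the same route as the paper's proof: a second-order Taylor expansion in $p$ about $p^{*}$ (with the linear term killed by the first-order condition and the second derivative bounded on the compact price set) for \eqref{eq:r&p}, and the implicit function theorem applied to the stationarity condition to obtain local Lipschitz dependence of the optimal price on $\bm{\beta}$ for \eqref{eq:p&b}. Your closing caveats about interior maximizers and uniform non-degeneracy on $V$ are legitimate points that the paper's own proof passes over by simply asserting the hypotheses of the implicit function theorem, but they do not change the argument.
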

\begin{proof}
See \nameref{Appendix} \ref{Proof_PropGLMBeta}\,.
\end{proof}
We focus on a simple case, where link functions are canonical, that is,
$\dot{h}(\cdot) = v (h(\cdot))$.
This gives
\begin{equation*}\label{eq:MQLE__Clink_beta}
    \bm{l}_{t}(\widehat{\bm{\beta}}_{t})
    =
    \sum_{i=1}^{t}
    \frac{1}{\sigma^2}
    \bm{p}(i)
    \left(y_{i}-{h \left(\bm{p}^\top(i)\widehat{\bm{\beta}}_{t}\right)} \right)
    =\bm{0} \, .
\end{equation*}
Under the Assumptions A\ref{Assum_A1} and A\ref{Assum_A2},
we show the MQLE $\widehat{\bm{\beta}}_{t}$ eventually exists and this estimator is also strongly consistent.

\begin{proposition}\label{proposition_GLMboundsClinkF}
Suppose Assumptions A\ref{Assum_A1} and A\ref{Assum_A2} are satisfied and
\begin{equation*}
    \sup_{\bm{p}\in{\mathcal{P}}}\|\bm{p}\|^{2} \leq r^{2} < \infty \,,
\end{equation*}
where $r = p_{h} - p_{l}$.
Then, MQLE $\widehat{{\bm{\beta}}}_{t}$ in (\ref{eq:MQLE_beta}) exists and
$\lim_{t \to \infty}{\widehat{{\bm{\beta}}}_{t}} = {{\bm{\beta}}}_{0}$ a.s. and
\begin{align*}
    \Exp\left[\left\|\widehat{{\bm{\beta}}}_{t}-{\bm{\beta}}_{0}\right\|^2\right]
    = O \left(\frac{\log(t)}{L_{1}(t)}\right) \,,
\end{align*}
given $\lambda_{\min}(t) \geq L_1(t)$.
\end{proposition}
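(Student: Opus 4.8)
The plan is to follow the martingale framework of Lai and Wei \cite{LaiWei1982}, extended to canonical-link GLMs by Chen et al.\@ \cite{ChenHY1999}, but to run the existence, consistency and rate arguments in the norm induced by the design matrix $P_t$ rather than the Euclidean norm. Throughout write $\epsilon_i = y_i - h(\bm p^\top(i)\bm\beta_0)$ for the residual at the \emph{true} parameter; under the canonical link $\Exp[y_i\mid\mathcal F_{i-1}] = h(\bm p^\top(i)\bm\beta_0)$, so $(\epsilon_i)$ is a martingale difference sequence satisfying the moment bounds of A\ref{Assum_A1}. Put $S_t=\sum_{i=1}^t\bm p(i)\epsilon_i$, so that $\bm l_t(\bm\beta_0)=\sigma^{-2}S_t$ is the score evaluated at the truth.

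\textbf{Key estimate.} The crux is the self-normalised martingale bound
\begin{equation*}
    S_t^\top P_t^{-1} S_t = O\!\left(\log\lambda_{\max}(t)\right)\qquad\text{a.s.},
\end{equation*}
which is exactly what Lai and Wei \cite{LaiWei1982} supply under the $\gamma$-th moment condition ($\gamma>2$) of A\ref{Assum_A1}. Since $\bm p(i)=(1,p_i)^\top$ with $p_i\in\mathcal P$ compact, $\lambda_{\max}(t)\le\tr(P_t)\le(1+p_h^2)t$, so $\log\lambda_{\max}(t)=O(\log t)$. I expect this to be the main obstacle: the naive Euclidean control $\|S_t\|^2=O(t\log t)$ is far too weak — combined with $\lambda_{\min}(t)\ge L_1(t)\asymp\sqrt{t\log t}$ it yields only $\|\widehat{\bm\beta}_t-\bm\beta_0\|^2=O(1)$ and fails to give consistency — whereas the self-normalised quantity, being of order $\log t$ rather than $t\log t$, drives both the convergence and the advertised rate.

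\textbf{Existence and consistency.} Under the canonical link $\bm l_t$ is the gradient of the strictly concave quasi-log-likelihood, with Hessian $-W_t(\bm\beta)$ where $W_t(\bm\beta)=\sigma^{-2}\sum_{i=1}^t\dot h(\bm p^\top(i)\bm\beta)\,\bm p(i)\bm p(i)^\top$. As $\dot h$ is continuous and strictly positive, on any fixed bounded neighbourhood $V$ of $\bm\beta_0$ there is $c>0$ with $W_t(\bm\beta)\succeq cP_t$ for all $\bm\beta\in V$. A first-order Taylor expansion gives $\bm l_t(\bm\beta)=\sigma^{-2}S_t-W_t(\bar{\bm\beta})(\bm\beta-\bm\beta_0)$, so by Cauchy–Schwarz in the $P_t$-norm, on the ellipsoid $(\bm\beta-\bm\beta_0)^\top P_t(\bm\beta-\bm\beta_0)=\rho_t^2$ one bounds
\begin{equation*}
    \langle\bm l_t(\bm\beta),\,\bm\beta-\bm\beta_0\rangle \le \sigma^{-2}\sqrt{S_t^\top P_t^{-1}S_t}\,\rho_t - c\,\rho_t^2 ,
\end{equation*}
which is negative once $\rho_t > c^{-1}\sigma^{-2}\sqrt{S_t^\top P_t^{-1}S_t}$. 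Taking $\rho_t\asymp\sqrt{\log t}$ and applying the standard fact that a continuous field pointing inward on the boundary of an ellipsoid must vanish inside it (Brouwer) produces a root $\widehat{\bm\beta}_t$ of \eqref{eq:MQLE_beta} within that ellipsoid for all large $t$. Its Euclidean radius is at most $\rho_t/\sqrt{\lambda_{\min}(t)}\to0$ by A\ref{Assum_A2} and $\lambda_{\min}(t)\ge L_1(t)$, so the ellipsoid eventually lies inside $V$ (justifying the constant $c$ a posteriori) and $\widehat{\bm\beta}_t\to\bm\beta_0$ a.s.

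\textbf{Rate and expectation.} The containment gives $(\widehat{\bm\beta}_t-\bm\beta_0)^\top P_t(\widehat{\bm\beta}_t-\bm\beta_0)\le\rho_t^2=O(\log t)$ and hence
\begin{equation*}
    \|\widehat{\bm\beta}_t-\bm\beta_0\|^2 \le \frac{\rho_t^2}{\lambda_{\min}(t)} = O\!\left(\frac{\log t}{\lambda_{\min}(t)}\right) = O\!\left(\frac{\log t}{L_1(t)}\right)\quad\text{a.s.}
\end{equation*}
To upgrade this to the stated $L^2$ bound I would either invoke an expectation form of the self-normalised inequality or argue by uniform integrability: on $\{\widehat{\bm\beta}_t\in V\}$ the error is bounded by $\operatorname{diam}(V)$, while the deterministic floor $\lambda_{\min}(t)\ge L_1(t)$ guaranteed by Algorithm~\ref{Algo_GLMPricing}, combined with A\ref{Assum_A1}, controls the low-probability event on which the root escapes $V$. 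This transfer from the almost-sure to the expected bound is the only genuinely delicate bookkeeping; the remaining estimates are routine once the self-normalised inequality and the eigenvalue growth of A\ref{Assum_A2} are in hand.
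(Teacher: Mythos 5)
Your proposal is correct and follows essentially the same route as the paper's proof: both hinge on the Lai--Wei self-normalised martingale bound $S_t^\top P_t^{-1}S_t = O(\log\lambda_{\max}(t))$ (which the paper re-derives through a Sherman--Morrison recursion for the quadratic form $Q_t$ while you cite it directly), followed by a local inversion of the score equation near $\bm{\beta}_0$ and division by $\lambda_{\min}(t)\ge L_1(t)$. Your inward-pointing-field-on-an-ellipsoid argument is a repackaging of the paper's expansivity-plus-domain-invariance lemma for $H_t = P_t^{-1}\bm{l}_t$, and you are in fact more candid than the paper about the unargued passage from the almost-sure rate to the stated bound in expectation.
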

\begin{proof}
See \nameref{Appendix} \ref{Proof_PropGLMboundsClinkF}.
\end{proof}
{Theorem \ref{theorem_GLMBound} provides an upper bound on the regret in terms of the function $L_{1}(t)$.}

\begin{theorem}\label{theorem_GLMBound}
Suppose there exists $t_{0} \in \N$, if MQLE $\widehat{{\bm{\beta}}}_{t}$ is  strongly consistent and following conditions are satisfied:
\begin{enumerate}
    \item{
    $\lambda_{\min}(t) \geq L_{1}(t) = c \sqrt{t \log(t)}$ a.s. for all $t \geq t_{0}$ and $c>0$,
    }
    \item \label{Ubd_2} {
    $\sum_{t=1}^{T} \left\| \bm{p}(t) - \bm{p}(\widehat{{\bm{\beta}}}_{t}) \right\|^2 \leq K L_{1}(T)$ a.s.\@ for all $T \geq t_{0}$ and some $K_1 >0$,
    }
\end{enumerate}
then the regret after $T$ time periods is
\begin{align*}
    \regret\,(T)
    & =
    O \left(L_{1}(T)+ \sum_{t=1}^{T} \frac{\log(t)}{L_{1}(t)}  \right) \, .
\end{align*}
\end{theorem}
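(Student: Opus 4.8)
The plan is to reduce the regret directly to the two quantities already controlled by Propositions~\ref{proposition_GLMBeta} and~\ref{proposition_GLMboundsClinkF}: the exploration penalty $\sum_t \|\bm{p}(t) - \bm{p}(\widehat{\bm{\beta}}_{t})\|^2$ and the estimation error $\Exp[\|\widehat{\bm{\beta}}_{t} - \bm{\beta}_{0}\|^2]$. First I would note that, because $\bm{p}^* = \bm{p}(\bm{\beta}_{0})$ maximizes $r(\cdot, \bm{\beta}_{0})$ over $\mathcal{P}$, each instantaneous regret $r(\bm{p}^*, \bm{\beta}_{0}) - r(\bm{p}(t), \bm{\beta}_{0})$ is nonnegative, so the summation and expectation may be interchanged (Tonelli) and absolute values inserted freely. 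The quadratic revenue bound \eqref{eq:r&p} of Proposition~\ref{proposition_GLMBeta} then gives, with a constant uniform over the compact set $\mathcal{P}$,
\begin{equation*}
    r(\bm{p}^*, \bm{\beta}_{0}) - r(\bm{p}(t), \bm{\beta}_{0}) = O\!\left(\|\bm{p}(t) - \bm{p}^*\|^2\right),
\end{equation*}
reducing the task to bounding $\sum_{t=1}^{T} \Exp[\|\bm{p}(t) - \bm{p}^*\|^2]$.

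The next step is to split the price error into an exploration part and an estimation part by inserting the certainty-equivalent price $\bm{p}(\widehat{\bm{\beta}}_{t})$ and using $\|u+w\|^2 \le 2\|u\|^2 + 2\|w\|^2$:
\begin{equation*}
    \|\bm{p}(t) - \bm{p}^*\|^2 \le 2\,\|\bm{p}(t) - \bm{p}(\widehat{\bm{\beta}}_{t})\|^2 + 2\,\|\bm{p}(\widehat{\bm{\beta}}_{t}) - \bm{p}(\bm{\beta}_{0})\|^2.
\end{equation*}
The first summand is precisely the deviation of the chosen price from the certainty-equivalent price (the perturbation $\bm{\phi}_{t}$ of Algorithm~\ref{Algo_GLMPricing}), and its running sum is controlled by hypothesis~\ref{Ubd_2}, $\sum_{t=1}^{T} \|\bm{p}(t) - \bm{p}(\widehat{\bm{\beta}}_{t})\|^2 \le K\,L_{1}(T)$ almost surely, contributing $O(L_{1}(T))$. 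For the second summand, strong consistency of $\widehat{\bm{\beta}}_{t}$ places it inside the neighbourhood $V$ for all $t$ beyond some index, so the sensitivity estimate \eqref{eq:p&b} applies and gives $\|\bm{p}(\widehat{\bm{\beta}}_{t}) - \bm{p}(\bm{\beta}_{0})\|^2 = O(\|\widehat{\bm{\beta}}_{t} - \bm{\beta}_{0}\|^2)$; taking expectations and invoking Proposition~\ref{proposition_GLMboundsClinkF} under the eigenvalue bound $\lambda_{\min}(t) \ge L_{1}(t)$ of hypothesis~(1) yields $\Exp[\|\bm{p}(\widehat{\bm{\beta}}_{t}) - \bm{p}(\bm{\beta}_{0})\|^2] = O(\log(t)/L_{1}(t))$, whose sum is $O\!\left(\sum_{t=1}^{T} \log(t)/L_{1}(t)\right)$.

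Finally I would collect the two contributions and absorb the initial segment $t < t_{0}$: since $\mathcal{P} = [p_l,p_h]$ is bounded and $r(\cdot, \bm{\beta}_{0})$ is continuous, each of the finitely many early instantaneous regrets is bounded by a fixed constant, so their total is $O(1)$ and is dominated by $L_{1}(T) \to \infty$. Combining everything gives
\begin{equation*}
    \regret(T) = O\!\left(L_{1}(T) + \sum_{t=1}^{T} \frac{\log(t)}{L_{1}(t)}\right),
\end{equation*}
as claimed. The argument is essentially bookkeeping once the two propositions are in hand; the only points requiring care are that the $O(\cdot)$ constant in \eqref{eq:r&p} be uniform over $\mathcal{P}$ (so it survives being pulled outside $\Exp[\cdot]$ and the sum) and that strong consistency genuinely be used to license \eqref{eq:p&b} for all large $t$. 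I expect the mildly delicate step to be this uniformity-plus-interchange of the big-$O$ estimates with $\Exp[\cdot]$ and $\sum_t$, rather than any single hard inequality.
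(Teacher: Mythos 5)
Your proposal is correct and follows essentially the same route as the paper's own proof: reduce regret to $\sum_{t}\Exp[\|\bm{p}(t)-\bm{p}^{*}\|^{2}]$ via \eqref{eq:r&p}, split with $\|u+w\|^{2}\leq 2\|u\|^{2}+2\|w\|^{2}$ into the exploration term controlled by hypothesis~\ref{Ubd_2} and the estimation term controlled by \eqref{eq:p&b} together with Proposition~\ref{proposition_GLMboundsClinkF}. Your added remarks on uniformity of the implied constants, the interchange of $\Exp[\cdot]$ with the sum, and the treatment of the initial segment $t<t_{0}$ are points the paper leaves implicit, but they do not change the argument.
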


\begin{proof}
By \eqref{eq:r&p} in Proposition \ref{proposition_GLMBeta},
the regret over selling horizon $T$ can be derived in terms of $\left\|\bm{p}(t)-\bm{p}^{*} \right\|^{2}$, that is
\begin{align*}
    \regret(T)
    & =
    O \left(
    \Exp\left[\sum_{t=t_{0}}^{T} \left\|\bm{p}(t)-\bm{p}^{*} \right\|^{2}\right]
    \right) \,.
\end{align*}
We can expand the expectation above as following and we will explain it step by step afterwards.
\begin{align*}
   \begin{split}
    \Exp\left[\sum_{t=t_{0}}^{T} \left\|\bm{p}(t)-\bm{p}^{*} \right\|^{2}\right]
    & \leq 2 \Exp\left[\sum_{t=t_{0}}^{T} \left\|\bm{p}(t)-\bm{p}(\widehat{\bm{\beta}}_{t}) \right\|^{2} \right]
    +
    2 \Exp\left[\sum_{t=t_{0}}^{T} \left\|\bm{p}(\widehat{\bm{\beta}}_{t}) - \bm{p}^{*}\right\|^{2} \right] \\
    & \leq 2K L_{1}(T)
    +
    2 K_{2} \Exp\left[\sum_{t=t_{0}}^{T} \left\|\widehat{\bm{\beta}}_{t} -  {\bm{\beta}}_{0}\right\|^{2} \right] \\
    & = O \left(L_{1}(T)+ \sum_{t=1}^{T}
    \left(
    \frac{\log(t)}{L_{1}(t)}
    \right)
    \right) \,,
   \end{split}
\end{align*}
here $K_{2}> 0$ is some non-random constant .
In the first inequality, we use
$(a+b)^2 \leq 2a^2 +2b^2$, that is
\begin{align*}
    \begin{split}
        \left\|\bm{p}(t)-\bm{p}^{*} \right\|^{2}
        & = \left\|\bm{p}(t)-\bm{p}(\widehat{\bm{\beta}}_{t}) + \bm{p}(\widehat{\bm{\beta}}_{t}) - \bm{p}^{*} \right\|^{2} \\
        & \leq
        2  \left\|\bm{p}(t)-\bm{p}(\widehat{\bm{\beta}}_{t}) \right\|^{2} +
        2 \left\|\bm{p}(\widehat{\bm{\beta}}_{t}) - \bm{p}^{*}\right\|^{2} \,.
    \end{split}
\end{align*}
In the second inequality, by Assumption \ref{Ubd_2} in Theorem \ref{theorem_GLMBound}, we bound the terms to obtain the first term
$$
2 \Exp\left[\sum_{t=t_{0}}^{T} \left\|\bm{p}(t)-\bm{p}(\widehat{\bm{\beta}}_{t}) \right\|^{2} \right]
\leq
2 K L_{1}(T) \,.
$$
By \eqref{eq:p&b} from Proposition \ref{proposition_GLMBeta}, we have
\begin{equation*}
    \left\|\bm{p}({\bm{\beta}}_{0})-\bm{p}(\widehat{\bm{\beta}}_{t}) \right\|^{2}
    \leq K_{2} \left\|  {\bm{\beta}}_{0}-\widehat{\bm{\beta}}_{t} \right\|^{2}  \, .
\end{equation*}
This gives the second term
\begin{equation*}
    2 \Exp\left[\sum_{t=t_{0}}^{T} \left\|\bm{p}(\widehat{\bm{\beta}}_{t}) - \bm{p}^{*}\right\|^{2} \right]
    \leq
    2 K_{2} \Exp\left[\sum_{t=t_{0}}^{T} \left\|\widehat{\bm{\beta}}_{t} -  {\bm{\beta}}_{0}\right\|^{2} \right] \,.
\end{equation*}
Finally, by Proposition \ref{proposition_GLMboundsClinkF} we can bound the regret by
\begin{equation*}
    \regret \, =
    O \left(L_{1}(T)+ \sum_{t=1}^{T}\frac{\log(t)}{L_{1}(t)}
    \right) \,.
\end{equation*}
\end{proof}
When $L_{1}(t) = c \sqrt{t \log(t)}$, we find an optimal pricing strategy that achieves $\regret (T) = O \left(\sqrt{T\log{T}}\right)$.
The regret bound that we obtained corresponds to the results in Kleinberg and Leighton \cite[Theorem 1.2]{KleinbergLeighton2003} but with different algorithms.

\subsection{Adaptive GP Pricing Model}\label{section_NonparaResults}
We now establish cumulative regret for the Gaussian Process (GP) Bayesian optimization, which is bounded by the maximum information gain $\gamma_{T}$.
Suppose the subset $\mathcal{A} = \{p_{1}, \dots, p_{T}\} \subset \mathcal{P}$ is finite.
Let $\bm{y}_{\mathcal{A}} = f(\bm{p}_{\mathcal{A}}) + \bm{\varepsilon}_{\mathcal{A}}$ denote the observations and let $f_{\mathcal{A}}$ denote the function values at these points, that is $f_{\mathcal{A}} = f(\bm{p}_{\mathcal{A}})$.
We introduce the Shannon Mutual Information and denote it as $I(\cdot)$.
For a Gaussian Process, $I\left(\bm{y}_{\mathcal{A}}; f_{\mathcal{A}}\right) = H(\bm{y}_{\mathcal{A}}) - H(\bm{y}_{\mathcal{A}}|f)$, where $H$ is the entropy.
In the multivariate Gaussian case,
$H(\mathcal{N}(\mu, \bm{\Sigma})) = \frac{1}{2} \log |2\pi e \bm{\Sigma}|$, so that
$I\left(\bm{y}_{\mathcal{A}}; f_{\mathcal{A}}\right) = \frac{1}{2} \log |\bm{I} + \sigma^{-2} \bm{K}_{\mathcal{A}}|$, where
$\bm{K}_{\mathcal{A}} = [k({p}, {p}')]_{\bm{p},\bm{p}'\in \mathcal{A}}$.
After $T$ rounds, the maximum information gain between $\bm{y}_{\mathcal{A}}$ and $f_{\mathcal{A}}$ is
\begin{equation}\label{eq:gamma}
    \gamma_{T}(k) = \max_{\mathcal{A} \subset \mathcal{P} : |{\mathcal{A}}|=T} I\left(\bm{y}_{\mathcal{A}}; f_{\mathcal{A}}\right) \,.
\end{equation}
The bounds on information gains $\gamma_{T}$ depend on the kernels used.
For example, $\gamma_{T} = O(d \log T)$ under a linear regression, $\gamma_{T} = O((\log T)^{d+1})$ under a squared exponential kernel, and $\gamma_{T} = O\left(T^{d(d+1)/(2\nu + d(d+1))}(\log T)\right)$ under a Mat\'{e}rn kernel with $\nu > 1$.
We refer to Srinivas at al.\@ \cite{Srinivas2010} for more details.
In our model, we denote the maximum information gain as $\gamma_{T}(k_{d}+k_{c})$, which describes the maximum amount of information that the algorithm could learn about the demand and total claims function.

Assume that both demand and total claims functions $f_{d}, f_{c}$ satisfy the following Assumption \ref{Ass:f}.
It allows us to derive the bound on the cumulative regret w.r.t.\@ the maximum information gain $\gamma_{T}(k_{d}+k_{c})$ in Theorem \ref{theorem_GPBounds}.

\begin{assumption}\label{Ass:f}
Let $f$ be sampled from a GP with kernel ${k}(p, p')$ 
and function $f$ is almost surely continuously differentiable.
Consider partial derivatives ${\partial f /\partial p_{j}}$ of this sample path for $j=1,\dots, T$ satisfy the following high probability bound.
For some constants $a', b' > 0$,
\begin{equation*}
    \Pro \left(\sup_{p \in \mathcal{P}} \left| \frac{\partial f }{\partial p_{j}}\right| > J \right) \leq a' e^{-(J/b')^2}\,.
\end{equation*}
\end{assumption}
\begin{theorem}\label{theorem_GPBounds}
Pick $\delta \in (0,1)$ and choose
\begin{equation*}
    \varphi_{t} = 2 \log(2 t^2 \pi^2 / (3\delta))
    + 2 \log\left(t^2 \right)
    \in O(\log t)\,.
\end{equation*}
With high probability $1 - \delta$ and for any $T \geq 1$,
\begin{equation*}
    \regret(T) = O \left(\sqrt{\gamma_{T}T \log T}\right)\,.
\end{equation*}
Here $\gamma_{T}$ is short for $\gamma_{T}(k_{d}+k_{c})$.
\end{theorem}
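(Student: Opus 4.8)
The plan is to adapt the GP-UCB regret analysis of Srinivas et al.\ \cite{Srinivas2010} to the additive revenue process $r(p)=p\cdot f_{d}(p)-f_{c}(p)$, proceeding in four steps: (i) establish high-probability confidence intervals for $r$, (ii) bound the instantaneous regret by the posterior standard deviation, (iii) apply Cauchy--Schwarz to the cumulative regret, and (iv) control the resulting sum of posterior variances by the maximum information gain $\gamma_{T}$.

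First I would show that, with probability at least $1-\delta$,
\begin{equation*}
    \left| r(p)-\mu_{t-1}^{r}(p) \right| \leq \sqrt{\varphi_{t}}\,\sigma_{t-1}^{r}(p)
\end{equation*}
holds simultaneously for all $p\in\mathcal{P}$ and all $t\geq 1$. Since $f_{d}$ and $f_{c}$ are independent Gaussian Processes, $r$ is itself a GP, so at any fixed price the deviation of $r(p)$ from its posterior mean is Gaussian and a standard Gaussian tail bound gives the interval on a finite grid. The first summand $2\log(2t^{2}\pi^{2}/(3\delta))$ in $\varphi_{t}$ absorbs the union bound over the time index $t$ (the factor $\pi^{2}/6$ coming from $\sum_{t}t^{-2}$), while the second summand $2\log(t^{2})$ pays for discretising the continuous interval $\mathcal{P}$: I would lay down a grid whose size grows polynomially in $t$ and use Assumption \ref{Ass:f}, which bounds the sample-path derivatives with high probability, to control the error incurred between grid points. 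This is essentially Lemma 5.1 of \cite{Srinivas2010}, specialised to the scalar price domain.

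Given the confidence bounds, the UCB selection rule $p_{t}=\argmax_{p}\{\mu_{t-1}^{r}(p)+\sqrt{\varphi_{t}}\,\sigma_{t-1}^{r}(p)\}$ yields $r(p^{*})\leq \mu_{t-1}^{r}(p_{t})+\sqrt{\varphi_{t}}\,\sigma_{t-1}^{r}(p_{t})$, because the upper confidence value at $p_{t}$ dominates that at $p^{*}$, which in turn upper-bounds $r(p^{*})$. Combining with the lower confidence bound at $p_{t}$ gives the instantaneous bound $r(p^{*})-r(p_{t})\leq 2\sqrt{\varphi_{t}}\,\sigma_{t-1}^{r}(p_{t})$. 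Summing over $t$, using $\varphi_{t}\leq\varphi_{T}$ and Cauchy--Schwarz produces
\begin{equation*}
    \regret(T) \leq 2\sqrt{\varphi_{T}\,T\sum_{t=1}^{T}\left(\sigma_{t-1}^{r}(p_{t})\right)^{2}}\,.
\end{equation*}

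The final and hardest step is bounding $\sum_{t}(\sigma_{t-1}^{r}(p_{t}))^{2}$ by $O(\gamma_{T})$. Here the additive structure requires care: from \eqref{eq:rmusdv} and $(a+b)^{2}\leq 2a^{2}+2b^{2}$ together with $p_{t}\leq p_{h}$, I would split
\begin{equation*}
    \left(\sigma_{t-1}^{r}(p_{t})\right)^{2} \leq 2p_{h}^{2}\left(\sigma_{t-1}^{d}(p_{t})\right)^{2}+2\left(\sigma_{t-1}^{c}(p_{t})\right)^{2}\,,
\end{equation*}
and then bound each of $\sum_{t}(\sigma_{t-1}^{d})^{2}$ and $\sum_{t}(\sigma_{t-1}^{c})^{2}$ by the corresponding information gain via the identity $\frac{1}{2}\sum_{t}\log(1+\sigma^{-2}\sigma_{t-1}^{2}(p_{t}))=I(\bm{y}_{T};f_{T})$ and the elementary inequality $s^{2}\leq C_{\sigma}\log(1+\sigma^{-2}s^{2})$, valid for bounded $s$ with $C_{\sigma}=(\log(1+\sigma^{-2}))^{-1}$; boundedness of the kernels guarantees the posterior variances stay bounded, so the inequality applies. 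Because $f_{d}$ and $f_{c}$ are sampled independently and observed at the same prices, the two mutual informations are additive and are controlled by $\gamma_{T}(k_{d}+k_{c})$. Substituting $\sum_{t}(\sigma_{t-1}^{r}(p_{t}))^{2}=O(\gamma_{T})$ and $\varphi_{T}=O(\log T)$ into the displayed bound then gives $\regret(T)=O(\sqrt{\gamma_{T}T\log T})$. I expect the main obstacle to be the bookkeeping in this last step, namely reconciling the conservative sum bound $\sigma^{r}=p\,\sigma^{d}+\sigma^{c}$ adopted in \eqref{eq:rmusdv} with the information gain of the genuine additive revenue kernel $p^{2}k_{d}+k_{c}$, and making precise the additivity of information gain across the two independent processes.
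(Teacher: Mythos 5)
Your proposal is correct and follows essentially the same route as the paper's proof: high-probability confidence bounds for $r$ obtained by discretizing $\mathcal{P}$ and invoking Assumption \ref{Ass:f}, the UCB argument giving $r(p^{*})-r(p_{t})\leq 2\sqrt{\varphi_{t}}\,\sigma_{t-1}^{r}(p_{t})$ plus an $O(t^{-2})$ discretization term that sums to a constant, Cauchy--Schwarz, and finally the variance-to-information-gain conversion. The only cosmetic difference is in the last step, where you split $(\sigma_{t-1}^{r})^{2}\leq 2p_{h}^{2}(\sigma_{t-1}^{d})^{2}+2(\sigma_{t-1}^{c})^{2}$ and bound each component's variance sum by its own information gain before adding, whereas the paper bounds the combined quantity directly (absorbing the same bookkeeping into its constant $C_{1}=8/\log(1+\sigma^{-2})$) and only afterwards uses $\gamma_{T}(k_{d}+k_{c})\leq\gamma_{T}(k_{d})+\gamma_{T}(k_{c})$; both yield the same bound.
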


The proof follows Srinivas et al.\@ \cite{Srinivas2010}.
We extend their model to the case of demands and claims.
We present related lemmas and results from Srinivas et al.\@ \cite{Srinivas2010} in \nameref{Appendix} \ref{Appendix_GPUCB:RegBounds}.
\begin{proof}
For any price $p \in \mathcal{P}$, we consider demand and claims functions $f_{d}, f_{c} \colon \mathcal{P} \to \R$ are sampled from GPs.
The cumulative regret over time horizon $T$ is given by
\begin{align*}
    \begin{split}
        \regret(T)
        & =
        \Exp
        \left[
        \sum_{t=1}^{T}
        \left({p}^{*} \cdot f_{d}({p}^{*}) - f_{c}({p}^{*})\right)
        -\left({p}_{t} \cdot f_{d}({p}_{t}) - f_{c}({p}_{t})\right)
        \right] \\
        & = \Exp
        \left[
        \sum_{t=1}^{T} r(p^{*}) -  r(p_{t})\right]\,.
    \end{split}
\end{align*}
The price set $\mathcal{P}$ is bounded, and expected demand and claims functions $f_{d}, f_{c}$ are independently multivariate Gaussian distributed with bounded variances.
Without loss of generality, we assume that $k_{r} \leq 1$.
By the UCB rule, we can obtain the bound on regret (see Lemma \ref{lemma:simpleregretbound2} in  \nameref{Appendix} \ref{Appendix_GPUCB:RegBounds}), given by
\begin{align}\label{eq1:simpleregret}
    \begin{split}
        r(p^{*}) -  r(p_{t})
        \leq 2\sqrt{\varphi_{t}} {\sigma}_{t-1}^{r}(p_{t})+  \frac{r b  \sqrt{\log(2a/\delta)}}{t^2} \,.
    \end{split}
\end{align}
Here,
${\sigma}_{t-1}^{r}({p}) = p_{t-1}{\sigma}_{t-1}^{d}({p}) + {\sigma}_{t-1}^{c}({p})$.
By the convexity of logarithm function, we know
$u^2\leq v^2\log{(1+u^2)}/ \log{(1+v^2)}$ for $u \leq v$.
Since ${\sigma}_{t-1}^{r}({p_{t}})^{2} \leq k({p}_{t}, {p}_{t})=1$ and ${\sigma}^{-2}{\sigma}_{t-1}^{r}({p_{t}})^{2} \leq {\sigma}^{-2}$, we let $u^{2} = {\sigma}^{-2}{\sigma}_{t-1}^{r}({p_{t}})^{2}$ and $v^{2} = {\sigma}^{-2}$.
Then we have
\begin{align*}
    \begin{split}
        {\sigma}_{t-1}^{r}({p_{t}})^{2}
        & \leq \frac{1}{\log\left(1+{\sigma}^{-2}\right)} \log\left(1+{\sigma}^{-2}{\sigma}_{t-1}^{r}({p_{t}})^{2}\right)\,.
    \end{split}
\end{align*}
By Jensen's inequality and $\varphi_{t}$ is nondecreasing, we have
\begin{align}\label{eq:sumvarphisigma}
    \begin{split}
        \sum_{t=1}^{T} 2 \sqrt{\varphi_{t}} {\sigma}_{t-1}^{r}(p_{t})
        & \leq
        2 \sqrt{\varphi_{T}}\sqrt{T\sum_{t=1}^{T}{\sigma}_{t-1}^{r}(p_{t}) ^{2}} \\
        & \leq
        2 \sqrt{\varphi_{T}}\sqrt{T \sum_{t=1}^{T} \frac{\log\left(1+{\sigma}^{-2}{\sigma}_{t-1}^{r}({p_{t}})^{2}\right)  }{\log\left(1+{\sigma}^{-2}\right)}}\\
        & \leq
        \sqrt{ T \varphi_{T}}
        \sqrt{\frac{8 I(\bm{y}_{T}; f_{T})  }{\log\left(1+{\sigma}^{-2}\right)}}\\
        & \leq
        \sqrt{ C_{1} T \varphi_{T}  \gamma_{T}}\,.
    \end{split}
\end{align}
here $C_{1} = 8/\log\left(1+{\sigma}^{-2}\right)$ and $\gamma_{T} = \gamma_{T}(k_{d}+k_{c})$.
The last inequality is obtained by the definition of information gain $I$ (see Lemma \ref{Lemma:InfoGain} in \nameref{Appendix} \ref{Appendix_GPUCB:RegBounds}) and maximum information gain $\gamma_{T}$ \eqref{eq:gamma}.

By summing up \eqref{eq1:simpleregret} and using \eqref{eq:sumvarphisigma}, we obtain with probability greater than $ 1 - \delta$ for all $T \geq 1$
\begin{align*}
    \begin{split}
        \sum_{t=1}^{T}
        r(p^{*}) -  r(p_{t})
        & \leq  \sum_{t=1}^{T}2
        \left(\sqrt{\varphi_{t}} {\sigma}_{t-1}^{r}({p}_{t})+  \frac{r b  \sqrt{\log(2a/\delta)}}{t^2}\right) \\
        & = \sum_{t=1}^{T}  2\sqrt{\varphi_{t}} {\sigma}_{t-1}^{r}(p_{t})
        + \sum_{t=1}^{T} \left(\frac{r b  \sqrt{\log(2a/\delta)}}{t^2}\right)\\
        & \leq
        \sqrt{ C_{1} T \varphi_{T}  \gamma_{T}} + C_{2} \,.
    \end{split}
\end{align*}
Here, $C_{2} = \sum_{t=1}^{T} \left( {r b \sqrt{\log(2a/\delta)}}/{t^2}\right)$ is a constant since $a, b, r, \delta$ are constants and $\sum 1/t^2 = \pi^2/6$.
Finally, we obtain the bounds on the cumulative regret.
\end{proof}
Note that for the combination of additive kernels, we have
$\gamma_{T}(k_{d}+k_{c}) \leq \gamma_{T}(k_{d}) + \gamma_{T}(k_{c})$ (see Lemma \ref{lemma:InfoGain2kernels} in  \nameref{Appendix} \ref{Appendix_GPUCB:RegBounds} for more details).

\section{Delayed Case}\label{section_DelayedCase}
In previous sections, we assume that the insurance premium and total claims are paid at the beginning of each insurance period.
This gives the basic outline of how these methods can be applied in an insurance setting.
However, in the real world, claims are triggered when the insured events happen, thus are not paid out when an insurance product is purchase.
An immediate pricing decision during the delay may give a wrong optimal price and increase the regret.
As a consequence, we involve delayed claims here and consider our pricing problem as a delayed bandit problem, based on the idea of Joulani \cite{Joulani2016}.

\subsection{Delayed Models}
We assume that demands are generated and observed immediately when a price is chosen, while claims might be delayed.
It might lead to delayed revenue.
Suppose a price is chosen at time $t$ and denote the corresponding delayed time is $\tau_{t}$.
We assume that $(\tau_{t})_{t \geq 1}$ is an i.i.d. sequence and is independent of prices and claims.
It is possible that $\tau_{t} = 0$ when there is no delay.
Assume there is a maximum waiting time $m$, that the delayed claims can only be observed by time $t + m$ or if $\tau_{t} \leq m$.
Without loss of generality, we assume that all information (demands and claims) can be received by the end of time horizon $T$.

When delay occurs, the insurance company  observes revenue $r_{t}$ at time $t + \tau_{t}$.
Therefore, the next selling price $p_{t}$ is chosen based on
history $\mathcal{H}_{t}$, which is the past information of prices and demands $\{p_i\,, d_i : \, i = 1, \dots, t-1\}$ and delayed claims $\{c_{i}: \, i + \tau_{i} \leq t-1 \}$.
If $\tau_{t} = 0$ for all $t$, we have
$\mathcal{H}_{t} = \mathcal{F}_{t}$.

If a price $p_{t'}$ is chosen at time $t'$, we denote the claims observed at time $t$ $(= t' + \tau_{t'} )$ by
${c}_{t' + \tau_{t'}}$.
The insurance company collects the set of start times for delayed claims: $\mathcal{C}_{t} = \{t':\, t' = t - \tau_{t'}\}$ by the end of time $t$.
Here, the delayed time set can be $\mathcal{C}_{t} = \emptyset$ for instance when $\tau_{t'} = 0$ or $\tau_{t'} > m$ for all $t'$.
The corresponding observed revenue at time $t$ is $r_t = p_{t'} d_{t'} - {c}_{t' + \tau_{t'}}$.
It is worth noticing that there is less information to decide $\bm{p}(t+1)$ due to the delay of claims.

We denote the time that the insurance company observes the $s$-th claim as $\rho(s)$ and the corresponding revenue as $r_{\rho(s)}$, here $s = 1, \dots, T$.
Let $\widetilde{r}_{s} = r_{\rho(s)}$.
The insurance company determines the next selling price $\widetilde{p}_{s + 1}$ after observing $\widetilde{r}_{s}$.
For $t = 1, \dots, T$, we denote the number of claims observed by the end of time $t-1$ as $N(t) = \sum_{i=1}^{t-1} \mathbbm{1}\{{i + \tau_{i} < t}\}$.
Since the company determines the next selling price by the latest observed claim, we have $p_{t} = \widetilde{p}_{N(t) + 1}$.

Let $\widetilde{\tau}_{s} = s - 1 - N(\rho(s))$.
Here, $s - 1$ is the number of claims that can be observed if there is no delay.
$N(\rho(s))$ is the actual number of claims that observed by time $\rho(s) -1$.
Therefore, $\widetilde{\tau}_{s}$ is the number of delays that have not been updated when the algorithm chooses $p_{\rho(s)}$ at time $\rho(s)$ until the corresponding revenue $r_{\rho(s)}$ can be observed.
Then we have
$p_{\rho(s)} = \widetilde{p}_{s-\widetilde{\tau}_{s}}$ and $\sum_{s=1}^{T}\widetilde{\tau}_{s} = \sum_{t=1}^{T}{\tau}_{t}$ (for a proof we refer to Lemma \ref{lemma:sumtildetau} in \nameref{Appendix} \ref{Appendix_DelayGLM}).
Hence, the regret is
\begin{align*}\label{eq:regretT}
    \begin{split}
        \sum_{t=1}^{T}
        \sum_{t'\in \mathcal{C}_{t}}
        {r({p}^{*})-r({p}_{t})}
        &=
        \sum_{s=1}^{T} {r_{\rho(s)}({p}^{*})-r_{\rho(s)}({p}_{\rho(s)})}\\
        &=
        \sum_{s=1}^{T} {\widetilde{r}_{s}({p}^{*})-\widetilde{r}_{s}(\widetilde{p}_{s - \widetilde{\tau}_{s}})}\\
        &=
        \sum_{s=1}^{T} \widetilde{r}_s({p}^{*})- \widetilde{r}_{s}(\widetilde{p}_{s})
        +
        \sum_{s=1}^{T}\widetilde{r}_{s}(\widetilde{p}_{s}) - \widetilde{r}_{s}(\widetilde{p}_{s - \widetilde{\tau}_{s}})\,.
    \end{split}
\end{align*}
We can see that the regret with delayed claims has two parts: the non-delayed regret and an additional regret caused by delayed claims. We need to bound each of these two terms to get the overall regret bound.

\subsection{Adaptive GLM Pricing with Unknown Delays}
In GLMs regression model without delays, we find the unknown parameters by maximizing the quasi-likelihood estimation.
In the delayed case, we use a similar method.
We denote the new least squares estimator of delayed claims by $\widehat{\bm{b}}_{t}'$. Specifically $\widehat{\bm{b}}_{t}'$ is the modified MQLE given by
\begin{equation*}
    \bm{l}_{t}(\widehat{\bm{b}}_{t}')
    =
    \sum_{i=1}^{t}
    \sum_{i'\in \mathcal{C}_{i}}
    \frac{1}{\sigma^2}
    \bm{p}(i)
    \left({c}_{i'+\tau_{i'}}-{h \left(\bm{p}^\top(i)\widehat{\bm{b}}_{t}'\right)} \right)
    =\bm{0} \,.
\end{equation*}
The following result extends Theorem \ref{theorem_GLMBound} to delayed claims.

\begin{theorem}\label{theorem_GLMDelayBound}
Suppose there exists $t_{0} \in \N$, if MQLE $\widehat{{\bm{\beta}}}_{t}$ is  strongly consistent and following conditions are satisfied:
\begin{enumerate}
    \item{
    $\lambda_{\min}(t) \geq L_{1}(t) = c \sqrt{t \log(t)}$ a.s. for all $t \geq t_{0}$ and $ c>0$,
    }
    \item \label{DelayUbd_2} {
    $\sum_{t=1}^{T} \left\| \bm{p}(t) - \bm{p}(\widehat{{\bm{\beta}}}_{t}) \right\|^2 \leq \left(2m+K\right) L_{1}(T)$ a.s.\@ for all $T \geq t_{0}$ and some $K_1 >0$.
    }
\end{enumerate}
The regret bound is
\begin{align*}
    \regret\,(T)
    & =
    O \left(L_{1}(T)+ \sum_{t=1}^{T} \frac{\log(t)}{L_{1}(t)}  \right)
\end{align*}
\end{theorem}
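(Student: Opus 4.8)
The plan is to begin from the regret decomposition already established, which expresses the delayed cumulative regret as a sum of a \emph{non-delayed} term $\sum_{s=1}^{T}\widetilde{r}_s(p^*)-\widetilde{r}_s(\widetilde{p}_s)$ and a \emph{delay-penalty} term $\sum_{s=1}^{T}\widetilde{r}_s(\widetilde{p}_s)-\widetilde{r}_s(\widetilde{p}_{s-\widetilde{\tau}_s})$. The first term is, by construction, the regret of the certainty-equivalent policy indexed along the observation timeline $s$, so I would bound it by replaying the argument of Theorem \ref{theorem_GLMBound} line by line. The second term is the genuinely new object in the delayed setting, and its control is precisely where the extra factor $2m$ in hypothesis (\ref{DelayUbd_2}) is spent.

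For the non-delayed term I would first invoke (\ref{eq:r&p}) of Proposition \ref{proposition_GLMBeta} to pass from revenue gaps to squared price deviations, giving $\sum_s \widetilde{r}_s(p^*)-\widetilde{r}_s(\widetilde{p}_s)=O\!\left(\Exp\!\left[\sum_s\|\widetilde{p}_s-p^*\|^2\right]\right)$. Splitting $\|\widetilde{p}_s-p^*\|^2\le 2\|\widetilde{p}_s-\bm{p}(\widehat{\bm{\beta}}_s)\|^2+2\|\bm{p}(\widehat{\bm{\beta}}_s)-p^*\|^2$ exactly as in Theorem \ref{theorem_GLMBound}, I would bound the first piece by hypothesis (\ref{DelayUbd_2}) and the second by (\ref{eq:p&b}) together with the estimation rate $\Exp[\|\widehat{\bm{\beta}}_s-\bm{\beta}_0\|^2]=O(\log s/L_1(s))$ of Proposition \ref{proposition_GLMboundsClinkF}. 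This already yields $O\!\left(L_1(T)+\sum_{t=1}^{T}\log(t)/L_1(t)\right)$ for the non-delayed term, the constant $2m+K$ merely rescaling the leading $L_1(T)$.

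The hard part will be the delay-penalty term. I would rewrite each summand as $[\widetilde{r}_s(p^*)-\widetilde{r}_s(\widetilde{p}_{s-\widetilde{\tau}_s})]-[\widetilde{r}_s(p^*)-\widetilde{r}_s(\widetilde{p}_s)]$ and bound both brackets by squared deviations from $p^*$ via (\ref{eq:r&p}). The structural fact that makes the sum collapse is the identity $\sum_{s=1}^{T}\widetilde{\tau}_s=\sum_{t=1}^{T}\tau_t$ of Lemma \ref{lemma:sumtildetau}, combined with $\tau_t\le m$: since at most $m$ claims can be pending at any moment, $\widetilde{\tau}_s\le m$, and hence each observation index is the stale target $s-\widetilde{\tau}_s$ of at most $m$ later steps. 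Consequently $\sum_s\|\widetilde{p}_{s-\widetilde{\tau}_s}-p^*\|^2$ is dominated by $m\sum_{s'}\|\widetilde{p}_{s'}-p^*\|^2$, and feeding back the non-delayed estimate the penalty becomes $O(m\,L_1(T))$, which is exactly the additional budget supplied by the $2m$ in (\ref{DelayUbd_2}).

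Adding the two contributions and absorbing the fixed constant $m$ into the $O(\cdot)$, the delayed regret inherits the undelayed bound $\regret(T)=O\!\left(L_1(T)+\sum_{t=1}^{T}\log(t)/L_1(t)\right)$, and in particular collapses to $O(\sqrt{T\log T})$ when $L_1(t)=c\sqrt{t\log t}$. The one subtlety I would verify carefully is that Proposition \ref{proposition_GLMboundsClinkF} still delivers the $O(\log t/L_1(t))$ rate for the delayed claims estimator $\widehat{\bm{b}}_t'$; this should hold because the delayed design matrix differs from $P_t$ by at most $m$ rank-one terms, leaving the eigenvalue-growth condition (A\ref{Assum_A2}) and the martingale-difference structure of the errors intact.
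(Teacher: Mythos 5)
Your proposal is correct in outline and reaches the stated bound, but it handles the delay\-/penalty term by a genuinely different argument from the paper. The paper's proof does not actually pass through the two\-/term revenue decomposition: it works directly with $\Exp\bigl[\sum_t\|\bm{p}(t)-\bm{p}^*\|^2\bigr]$, inserts the stale prices through $\sum_s\|\widetilde{\bm{p}}(s-\widetilde{\tau}_s)-\widetilde{\bm{p}}(s)\|^2$, and controls that sum by chaining consecutive price increments of the algorithm, $\|\widetilde{\bm{p}}(\cdot+j)-\widetilde{\bm{p}}(\cdot+j+1)\|^2=O(\dot{L}_1(t))$ (this is where $\|\bm{\phi}_t\|^2$ from Algorithm~\ref{Algo_GLMPricing} enters), then telescopes $\sum_t\dot{L}_1(t)\le L_1(T)$ and applies Lemma~\ref{lemma:tildetau}. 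Your route instead bounds $\sum_s\|\widetilde{p}_{s-\widetilde{\tau}_s}-p^*\|^2$ by a multiplicity count --- each index is the stale target of at most $O(m)$ later steps --- and recycles the non\-/delayed estimate. Both are valid; the paper's argument exploits the specific bounded\-/increment structure of the GLM pricing policy, while yours is policy\-/agnostic and would survive any pricing rule, at the cost of needing the counting fact and of multiplying the entire non\-/delayed bound (including the $\sum_t\log(t)/L_1(t)$ part) by $m$ rather than only the $L_1(T)$ part --- harmless here since $m$ is a fixed constant and $\sum_t\log(t)/L_1(t)=O(L_1(T))$ for the stated choice of $L_1$. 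Two small points: the correct uniform bound is $\widetilde{\tau}_s\le 2m$, not $m$ (Lemma~\ref{lemma:tildetau}), which only changes constants in your multiplicity count; and your closing caveat about the delayed estimator $\widehat{\bm{b}}_t'$ is well taken, since the paper likewise assumes rather than proves that Proposition~\ref{proposition_GLMboundsClinkF} transfers to the delayed design, so flagging it is appropriate rather than a defect of your argument.
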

\begin{proof}
Proof is similar to  Theorem \ref{theorem_GLMBound}.
We consider the cumulative regret over selling horizon $T$ in terms of $\left\|\bm{p}(t)-\bm{p}^{*} \right\|^{2}$.
This is
\begin{align*}
    \regret(T)
    & =
    O \left(
    \Exp\left[\sum_{t=t_{0}}^{T} \left\|\bm{p}(t)-\bm{p}^{*} \right\|^{2}\right]
    \right) \,.
\end{align*}
Here, the expected value of $\left\|\bm{p}(t)-\bm{p}^{*} \right\|^{2}$ is given by
\begin{align*}
   \begin{split}
    \Exp\left[\sum_{t=t_{0}}^{T} \left\|\bm{p}(t)-\bm{p}^{*} \right\|^{2}\right]
    & \leq 2 \Exp\left[\sum_{t=t_{0}}^{T} \left\|\bm{p}(t)-\bm{p}(\widehat{\bm{\beta}}_{t}) \right\|^{2} \right]
    +
    2 \Exp\left[\sum_{t=t_{0}}^{T} \left\|\bm{p}(\widehat{\bm{\beta}}_{t}) - \bm{p}^{*}\right\|^{2} \right] \\
    & \leq 2 \left(2m
    + K \right) L_{1}(T)
    +
    2 K_{2} \Exp\left[\sum_{t=t_{0}}^{T} \left\|\widehat{\bm{\beta}}_{t} -  {\bm{\beta}}_{0}\right\|^{2} \right] \\
    & = O \left(L_{1}(T)+ \sum_{t=1}^{T}
    \left(
    \frac{\log(t)}{L_{1}(t)}
    \right)
    \right)
   \end{split}
\end{align*}
In the first inequality, we use
$(a+b)^2 \leq 2a^2 +2b^2$, that is
\begin{align*}
    \begin{split}
        \left\|\bm{p}(t)-\bm{p}^{*} \right\|^{2}
        & = \left\|\bm{p}(t)-\bm{p}(\widehat{\bm{\beta}}_{t}) + \bm{p}(\widehat{\bm{\beta}}_{t}) - \bm{p}^{*} \right\|^{2} \\
        & \leq
        2  \left\|\bm{p}(t)-\bm{p}(\widehat{\bm{\beta}}_{t}) \right\|^{2} +
        2 \left\|\bm{p}(\widehat{\bm{\beta}}_{t}) - \bm{p}^{*}\right\|^{2} \,.
    \end{split}
\end{align*}
We extend the first term in the above inequality, that is,
\begin{align*}
    \begin{split}
        \sum_{t=1}^{T} \left\|\bm{p}(t)-\bm{p}(\widehat{\bm{\beta}}_{t}) \right\|^{2}
        & \leq
        \sum_{s=1}^{T} \left\|\widetilde{\bm{p}}({s - \widetilde{\tau}_{s}})-
        \widetilde{\bm{p}}(s)
        + \widetilde{\bm{p}}(s)- \bm{p}( \widehat{\bm{\beta}}_{t}) \right\|^{2} \\
        & \leq
        \sum_{s=1}^{T} \left\|\widetilde{\bm{p}}({s - \widetilde{\tau}_{s}})-
        \widetilde{\bm{p}}(s)\right\|^{2}
        + \sum_{s=1}^{T} \left\|\widetilde{\bm{p}}(s)- \bm{p}( \widehat{\bm{\beta}}_{t}) \right\|^{2}\,.
    \end{split}
\end{align*}
Here by the algorithm, we can obtain
\begin{align*}
    \begin{split}
        \left\|\widetilde{\bm{p}}({s - \widetilde{\tau}_{s}})-
        \widetilde{\bm{p}}(s)\right\|^{2}
        & \leq
        \sum_{j = 0}^{\widetilde{\tau}_{s} -1}
        \left\|\widetilde{\bm{p}}({s - \widetilde{\tau}_{s}} + j)-
        \widetilde{\bm{p}}(s - \widetilde{\tau}_{s} + j + 1) \right\|^{2}\\
        & \leq \widetilde{\tau}_{s} {K}{\dot{L}_{1}(t)}
        \,,
    \end{split}
\end{align*}
for some ${K} > 0$.
Since $L_{1}(t)$ is an increasing concave function, by mean value theorem we have $L_{1}(t+1) - L_{1}(t) = \dot{L}_{1}(s)$ for any $s \in [t,t+1]$.
It implies $L_{1}(t+1) - L_{1}(t) \geq \dot{L}_{1}(t+1)$ and
\begin{equation*}
    \sum_{t=0}^{T-1} \dot{L}_{1}(t+1) \leq \sum_{t=0}^{T-1} L_{1}(t+1) - L_{1}(t) =  L_{1}(T) - L_{1}(0)
    = L_{1}(T)\,.
\end{equation*}
Together with Assumption \ref{DelayUbd_2} in Theorem \ref{theorem_GLMDelayBound} and $\widetilde{\tau}_{s} \leq 2m$ for any $s$ (for a proof we refer to Lemma \ref{lemma:tildetau} in  \nameref{Appendix} \ref{Appendix_DelayGLM}), we obtain the bound on the first term in the second inequality,
\begin{align*}
    \begin{split}
        \Exp\left[\sum_{t=t_{0}}^{T} \left\|\bm{p}(t)-\bm{p}(\widehat{\bm{\beta}}_{t}) \right\|^{2} \right]
        & \leq 2m L_{1}(T)  + K_1 L_{1}(T) \,,
    \end{split}
\end{align*}
for some $K_1 > 0$.
By Proposition \ref{proposition_GLMboundsClinkF},  the bound on the second term is given by
\begin{align*}
    \Exp\left[\left\|\widehat{{\bm{\beta}}}_{t}-{\bm{\beta}}_{0}\right\|^2\right]
    &= O \left(\frac{\log(t)}{L_{1}(t)}\right) \, .
\end{align*}
Finally, the regret bound is obtained.
\end{proof}

\subsection{Adaptive GP Pricing with Unknown Delays}
Currently, there do not exist theoretical bound for GP with delays.
In this section, we present the implementation of an alternative GP pricing algorithm for insurance in a delayed-claim setting.
For example, at each time $t$, we set a price by GP Algorithm~\ref{algo_GPPricing} and collect the set of start times for delayed claims: $\mathcal{C}_{t} = \{t':\, t' = t - \tau_{t'}\}$.
For each time $t' \in \mathcal{C}_{t}$, the premium can be observed at time $t'$ given as ${p}_{t'} \cdot f_{t'}^{d}({p}_{t'})$, while claims are received at time $t'+\tau_{t'}$.
We denote the delayed claims by $f_{t'+\tau_{t'}}^{c}({p}_{t'})$.
After receiving the delayed claims, we update the GP for each claim and its premium. This then determines the next selling price.
The pseudocode for pricing a new released insurance product with delays via GP algorithm is shown in Algorithm~\ref{algo_DelayGPPricing}.
\begin{algorithm}[ht]
\caption{GP Pricing Algorithm with Delayed Claims}
\label{algo_DelayGPPricing}
\KwIn{GP Pricing Algorithm~\ref{algo_GPPricing}}

\For{$t = 1, 2, \dots$}{
    Collect the set of delay time $\mathcal{C}_{t}$ received by time $t$ \;
    \For{$t' \in \mathcal{C}_{t}$}{
        Select price:
        ${p}_{t'} \gets$ GP Pricing Algorithm~\ref{algo_GPPricing} \;
        Update GP with ${p}_{t'}, f_{t'}^{d}({p}_{t'}), f_{t'+\tau_{t'}}^{c}({p}_{t'})$:
        }
    }
\end{algorithm}

\section{Explicit Formulas and Numerical Examples} \label{section_NumericalExamples}
In this section, we present the simulation result of the GLMs and GP pricing algorithms and both with delayed claims in numerical examples.

\subsection{Adaptive GLM Pricing Without Delays}
Assume demand follows logistics distribution and total claims follow lognormal distribution under canonical link functions.
For example, let the demand and claims model be
\begin{align*}
    \begin{split}
        \Exp[D(p)]&=11 - 0.8 p\, ,\\
        \Exp[\log C(p)]&=3 + 0.25 p\, .
    \end{split}
\end{align*}
\noindent
Set three initial price vectors to be $\bm{p}(1) = (1, 3)^{\top}$,
$\bm{p}(2) = (1, 3.3)^{\top}$ and
$\bm{p}(3) = (1, 4.7)^{\top}$,
with the lowest and highest price
${p}_{l} = 0.5$ and ${p}_{h} = 10$.
We now go about estimating the parameters of the above optimization.

We plot price dispersion and convergence of parameter estimates.
According to Figure \ref{fig:Pricedispersion}, the price dispersion $\tr(P_{t}^{-1})^{-1}/\sqrt{t \log(t)}$ converges to 0.01.
Thus, we set $L_{1}(t) = 0.01 \sqrt{t \log(t)}$, which can guarantee sufficient price dispersion.
Figure \ref{fig:BataConvergence} presents $\left\| \widehat{\bm{\beta}}_{t} - {\bm{\beta}}_{0}\right \|^2$, the squared norm of the difference between the parameter estimates and the true parameters.
As $t \to \infty$, $\left\| \widehat{\bm{\beta}}_{t} - {\bm{\beta}}_{0}\right \|^2 \to 0$, that is, the parameter estimates converge to the true parameters.
It implies the strong consistency of parameter estimates.
\begin{figure}[t]
    \begin{center}
    \centering
     \begin{subfigure}[b]{0.48\textwidth}
        \includegraphics[width=\textwidth]{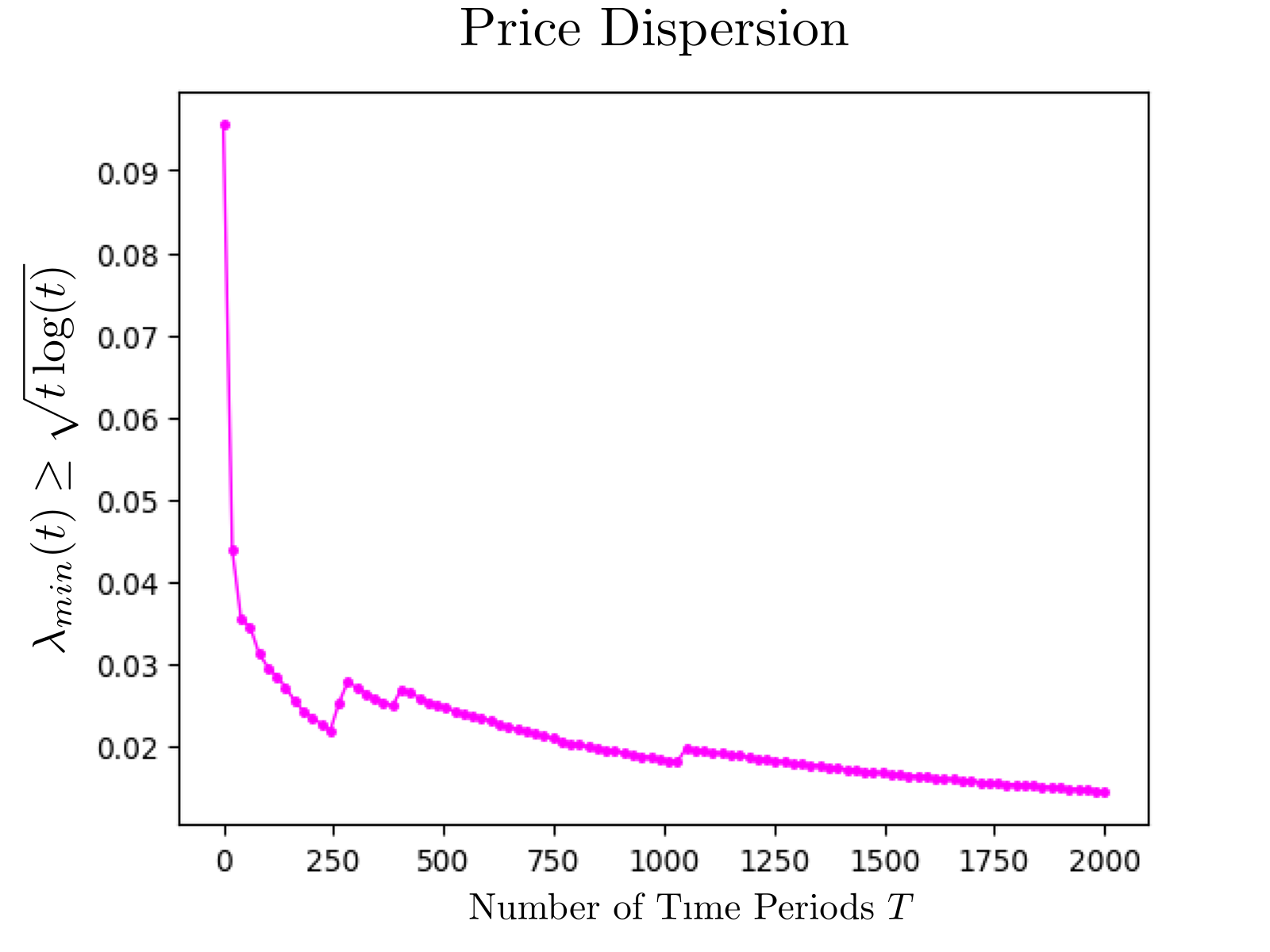}
        \caption{}
        \label{fig:Pricedispersion}
    \end{subfigure}
    \hfill
    \begin{subfigure}[b]{0.48\textwidth}
        \includegraphics[width=\textwidth]{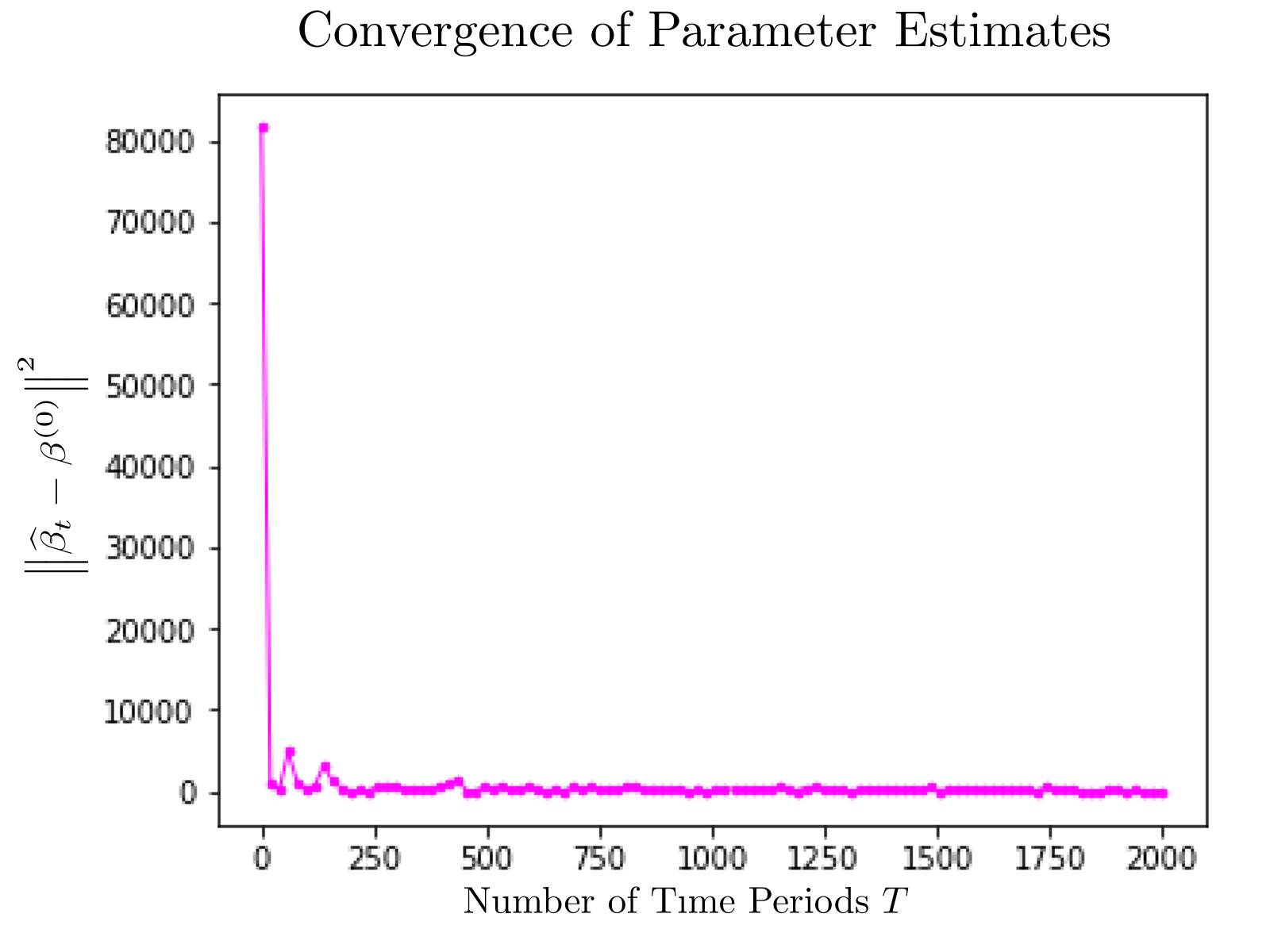}
        \caption{}
        \label{fig:BataConvergence}
    \end{subfigure}
    \caption{Price dispersion and convergence of parameter estimates.}
    \end{center}
\end{figure}
\subsection{Adaptive GP Pricing Without Delays}
\noindent
The most interesting kernels for machine learning are Mat\'{e}rn kernels with $\nu = 3/2$ and $\nu = 5/2$, given by
\begin{align*}
    \begin{split}
        k_{3/2}(r)
        & =\left(1+ \frac{\sqrt{3} r }{l}\right) \exp \left(-\frac{\sqrt{3} r }{l}\right)  \,,\\
        k_{5/2}(r)
        & =\left(1+ \frac{\sqrt{5} r }{l} +\frac{\sqrt{5} r^2 }{3 l^2} \right) \exp \left(-\frac{\sqrt{5} r }{l}\right) \,.
    \end{split}
\end{align*}
We sample random functions $f_{d}, f_{c}$ from GPs with Mat\'{e}rn kernels with $\nu = 1.5, 2.5$ and length scale $l=1$, respectively.
Set sample noise to be $\sigma = 0.05$.
The GP algorithm runs for $T = 100$ iterations with $\delta = 0.1$.
This indicates that our choice of the parameter $\bm{\beta}$ depends on $t$.

\subsection{Delay cases}
In order to find the effects of delays, we consider same settings as those in non-delayed cases.
Delays are non-negative integers and unknown, and we generate delays randomly.
\subsection{Compare GLM and GP Algorithms Without and With Delays}
We present the numerical results in Figure \ref{fig:DelayRegretGLM}  and Figure \ref{fig:DelayRegretGPUCB}.
The cumulative regret incurred by the GLM pricing algorithm is plotted in \ref{fig:DelayCumRegretGLM}.
The regret in delayed setting is always larger than that in non-delayed setting, which suggests that the insurance company suffers from an extra loss.
As we mentioned above, in the delayed case, there is less information to decide selling prices due to the delay of claims.
Figure \ref{fig:DelayRegretGLM} illustrates the regret  converges at the rate of $1/\sqrt{T \log(T)}$.
The convergence rate of regret with delayed claims is asymptotic of the rate without delayed claims.

The results obtained by GP pricing algorithm in Figure \ref{fig:DelayRegretGPUCB} is similar to GLM pricing algorithm.
\ref{fig:DelayCumRegretGPUCB} shows that
regret in delayed setting is larger than the regret in non-delayed setting and \ref{fig:DelayRegretRateGPUCB} shows that convergence rate of regret with delayed and without delayed claims are asymptotic.

By comparing Figure \ref{fig:DelayCumRegretGLM} and Figure \ref{fig:DelayCumRegretGPUCB}, we can see that the cumulative regret obtained by GP pricing algorithm converges faster and gives smaller regret.
Figure \ref{fig:DelayRegretRateGLM} and Figure \ref{fig:DelayRegretRateGPUCB} illustrate the same results that both algorithms achieve same convergence rate, that is, $1/\sqrt{T \log(T)}$.
Overall, the GP pricing algorithm outperforms GLM pricing algorithm.

\begin{figure}[t]
     \centering
     \begin{subfigure}[b]{0.48\textwidth}
         \centering
         \includegraphics[width=\textwidth]{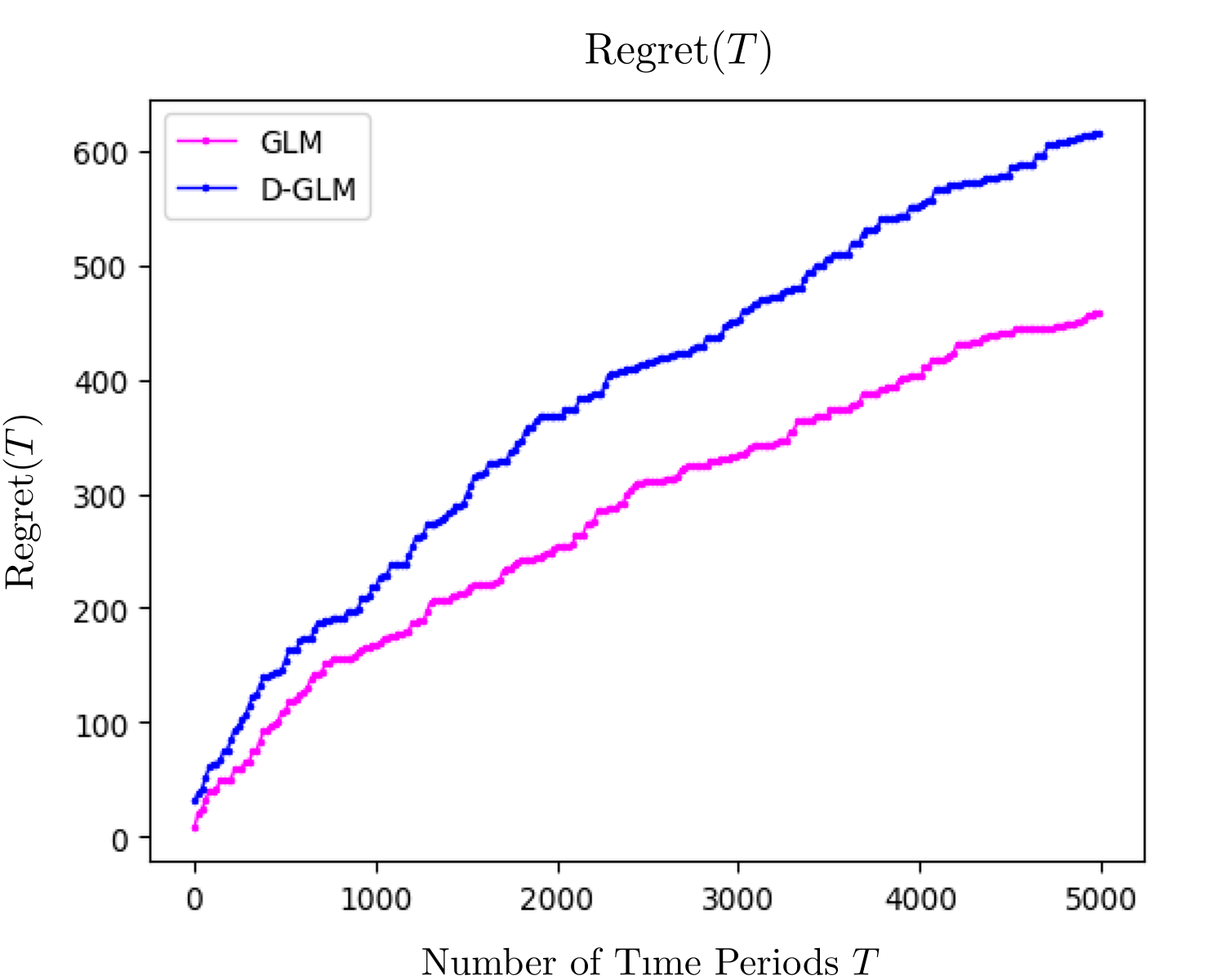}
         \caption{Cumulative regret}
         \label{fig:DelayCumRegretGLM}
     \end{subfigure}
     \hfill
     \begin{subfigure}[b]{0.48\textwidth}
         \centering
         \includegraphics[width=\textwidth]{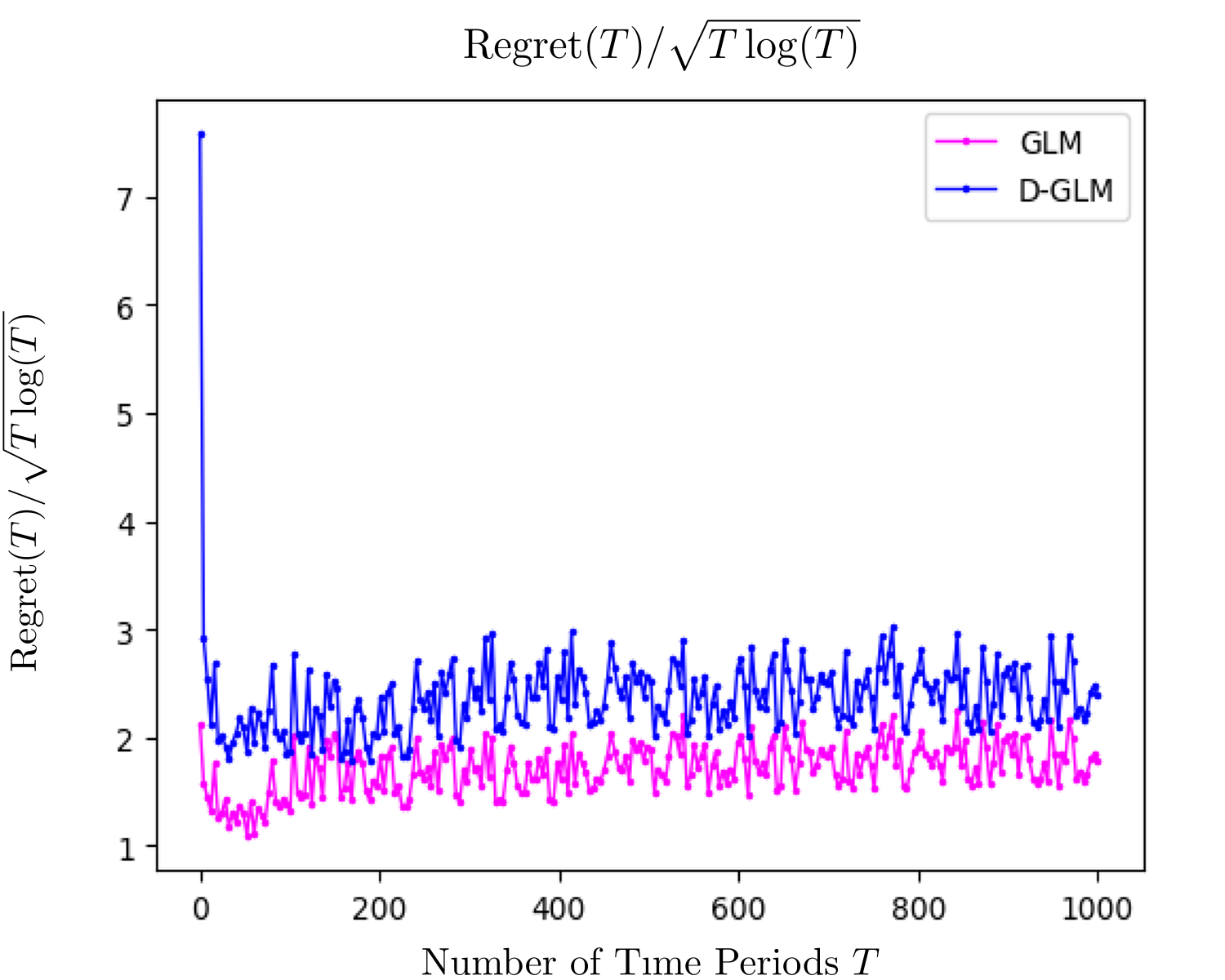}
         \caption{Rate of convergence}
         \label{fig:DelayRegretRateGLM}
     \end{subfigure}
    \caption{Cumulative regret and convergence rate for GLM algorithm. GLM denotes the non-delayed case and D-GLM denotes the delayed case. }
    \label{fig:DelayRegretGLM}
\end{figure}
\begin{figure}[t]
     \centering
     \begin{subfigure}[b]{0.48\textwidth}
         \centering
         \includegraphics[width=\textwidth]{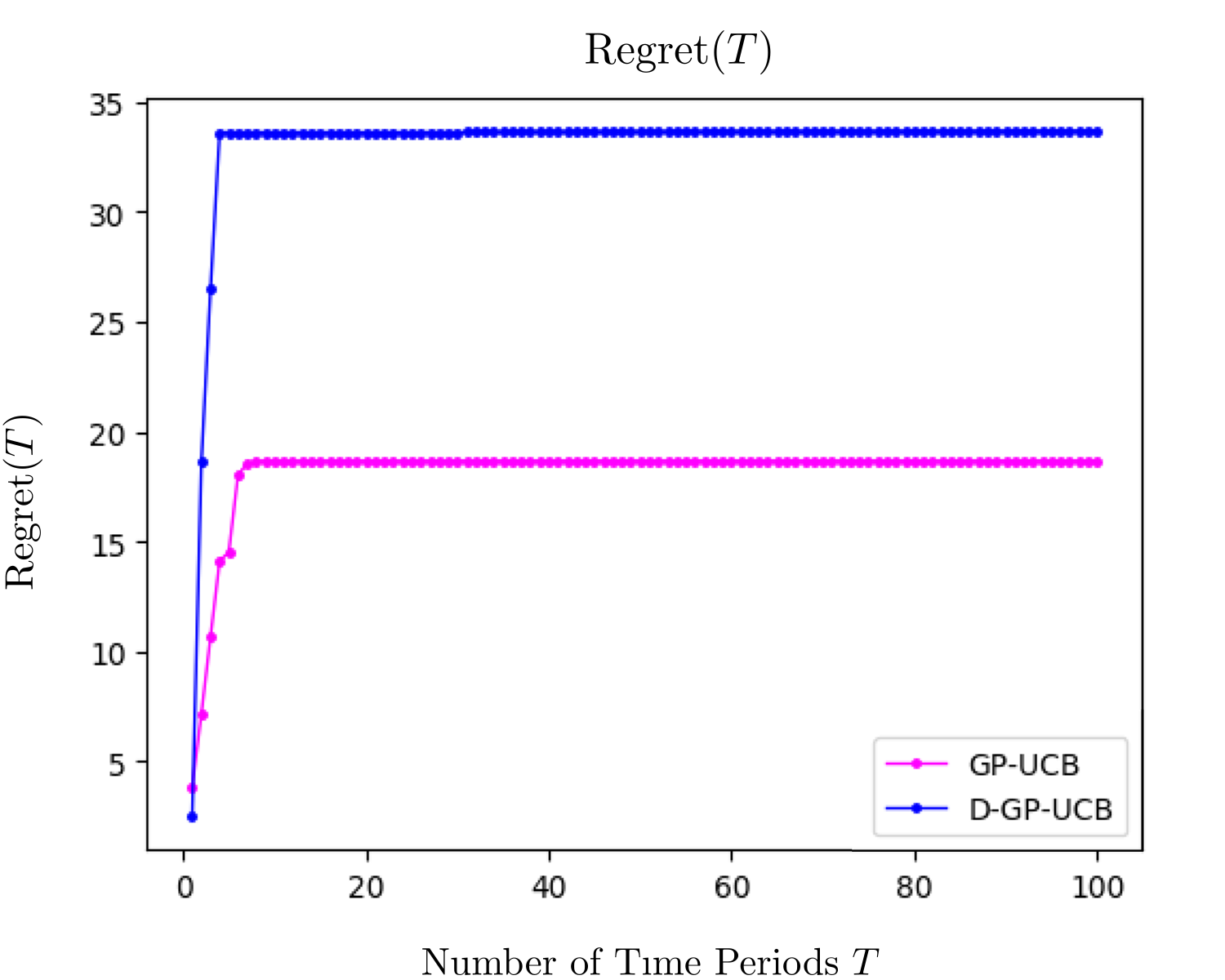}
         \caption{Cumulative regret}
         \label{fig:DelayCumRegretGPUCB}
     \end{subfigure}
     \hfill
     \begin{subfigure}[b]{0.48\textwidth}
         \centering
         \includegraphics[width=\textwidth]{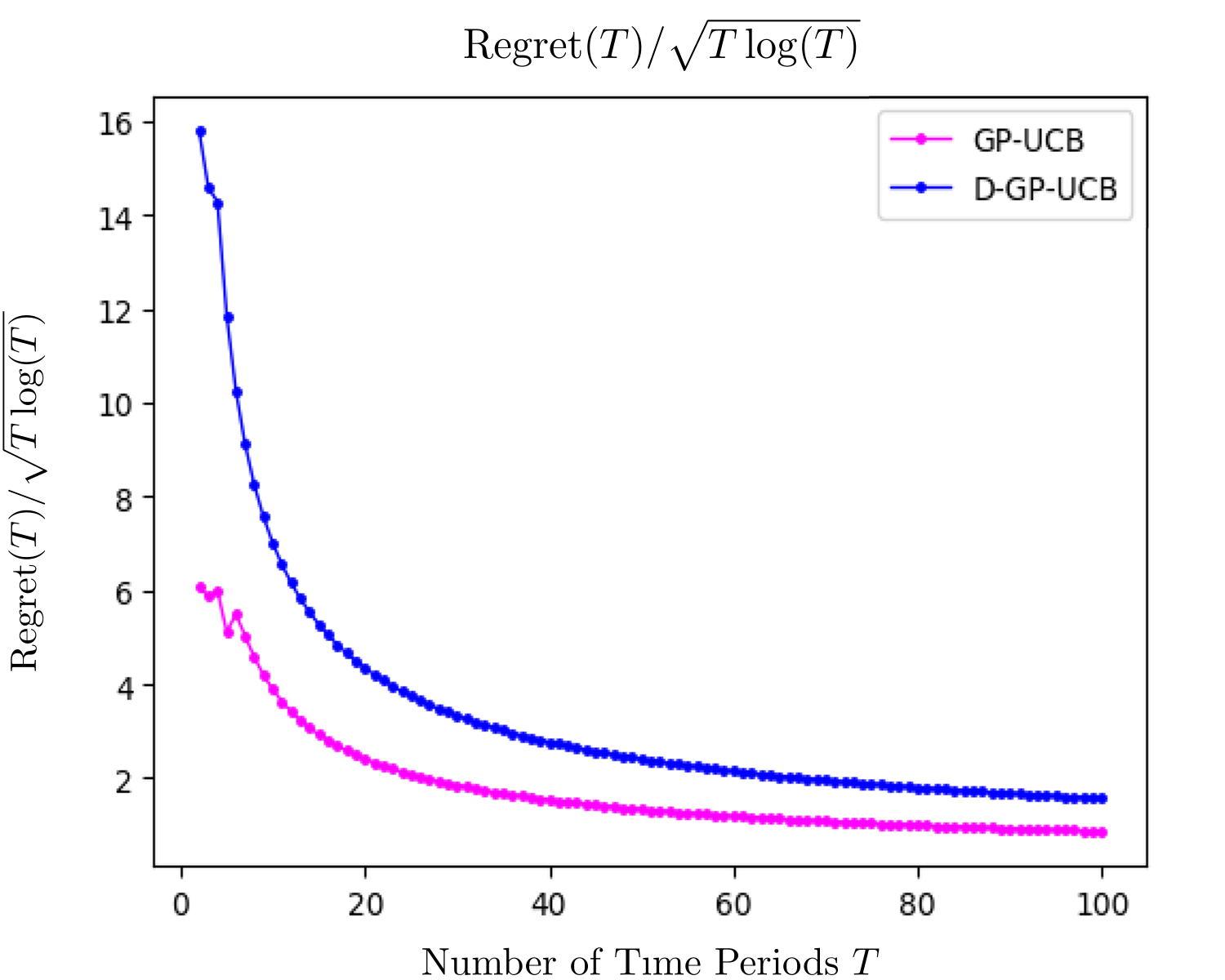}
         \caption{Rate of convergence}
         \label{fig:DelayRegretRateGPUCB}
     \end{subfigure}
    \caption{Cumulative regret and convergence rate for GP algorithm. GP denotes the non-delayed case and D-GP denotes the delayed case. }
    \label{fig:DelayRegretGPUCB}
\end{figure}
\section{Conclusion and Discussion}
\label{section_Conclusion}
We considered dynamic learning and pricing problems in an insurance context.
Our work is among the first to apply online learning to insurance.
We found both GLM and GP pricing models have good performance.
Applying prior work to an insurance setting with both demand and claims, we demonstrate this theoretically with precise bounds on regret.
In numerical results, we found GP pricing has better convergence; however, GLMs have a long history of suitable implementation in insurance. Thus it is important to also consider this setting.
More broadly, our findings suggest that the new Gaussian Process regression is potentially applicable in insurance.
However it is currently under studied.

%
Since an insurance company cannot immediately observe claims,
in Section \ref{section_DelayedCase}, we extended the GLM pricing model algorithm to have delayed claims. 
%
We prove that the asymptotic regret bound with delayed claims is the same as that achieved in the non-delayed case. The same result for the GP algorithm remains open, but numerical results suggest the same regret bound is achieved.

There are several ideas for future investigation.
Thus far we focus on the long-run revenue.
In the insurance market, claims may cause loss and even bankruptcy to the insurance company.
One idea is to incorporate ruin probability, which is a measure for the risk used for decision taking.

Another possible idea is to implement reinforcement learning (RL) with these online revenue management problems.
Bandit problems, as considered in this paper, are simple reinforcement learning routines. However, more complex future market interactions might be considered.
This problem may be modeled as a Markov Decision Process (MDP).
RL iteratively interacts with a simulation of the insurance model and then use the feedback from the environment to select actions that maximize the insurer's objective.
Although MDPs have been applied for many years in insurance, extending the online learning framework considered here to incorporate forward planning is an area that is yet to be studied.

\section*{Acknowledgements}\label{Acknowledgements}
This work was supported by China Scholarship Council (CSC) Grant.

\section*{Appendix}\label{Appendix}
\addcontentsline{toc}{section}{Appendices}
\renewcommand{\thesubsection}{\Alph{subsection}}

\subsection{Proof of Proposition \ref{proposition_PropGLMPricingNext}}\label{Proof_PropGLMPricingNext}
Recall from Algorithm~\ref{Algo_GLMPricing}, if ${\widehat{\bm{\beta}}_{t}}$ exists and $\tr(P_{t}^{-1})^{-1}\geq L_{1}(t)$,
we first set the next price to be $\bm{p}(t+1) =  \bm{p}_{cep}$.
If the condition \eqref{eq:TraceL1t} does not hold, that is
\begin{equation}\label{eq:traceLt}
    \tr\left(\left({P_{t}+ \bm{p}_{cep} \bm{p}_{cep}^{\top}}\right)^{-1}\right)^{-1} <  L_{1}(t+1) \,,
\end{equation}
we then choose the next price to be
$\bm{p}{'}=  \bm{p}_{cep} + \bm{\phi}_{t}$.
We will show there exists $\bm{\phi}_{t}$, such that
\begin{equation}\label{eq:traceLt1}
    \tr\left(\left({P_{t}+ \bm{p}{'} \bm{p}{'}^{\top}}\right)^{-1}\right)^{-1}
    \geq L_{1}(t+1) \, ,
\end{equation}
is satisfied.
By the Sherman-Morrison formula in Bartlett \cite{Bartlett1951},
we have
\begin{align*}
    \begin{split}
        \left(P_{t} + \bm{p}{'}\bm{p}{'}^{\top}\right)^{-1}
        & = P_{t}^{-1} - \frac{P_{t}^{-1}\bm{p}{'}\bm{p}{'}^{\top} P_{t}^{-1}}{1 + \bm{p}{'}^{\top} P_{t}^{-1}\bm{p}{'}} \,.
    \end{split}
\end{align*}
Then, we can derive
\begin{align}
    \begin{split}
        \tr\left(\left(P_{t} + \bm{p}{'}\bm{p}{'}^{\top}\right)^{-1} \right)
        & = \tr\left( P_{t}^{-1} - \frac{P_{t}^{-1}\bm{p}{'}\bm{p}{'}^{\top} P_{t}^{-1}}{1 + \bm{p}{'}^{\top} P_{t}^{-1}\bm{p}{'}}\right)  \\
        & = \tr\left( P_{t}^{-1}\right) -   \tr\left( \frac{P_{t}^{-1}\bm{p}{'}\bm{p}{'}^{\top} P_{t}^{-1}}{1 + \bm{p}{'}^{\top} P_{t}^{-1}\bm{p}{'}}\right)  \\
        & = \tr\left( P_{t}^{-1}\right) -   \frac{\tr\left(P_{t}^{-1}\bm{p}{'}\bm{p}{'}^{\top} P_{t}^{-1}\right)}{1 + \bm{p}{'}^{\top} P_{t}^{-1}\bm{p}{'}}
        \\
        & = \tr\left( P_{t}^{-1}\right)
        -   \frac{\left\|P_{t}^{-1}\bm{p}{'}\right\|}{1 + \bm{p}{'}^{\top} P_{t}^{-1}\bm{p}{'}} \\
        & \leq \frac{1}{L_{1}(t)} + \frac{\df}{\df t} \left(\frac{1}{L_{1}(t)}\right)
        \,.
    \end{split}
\end{align}
The last inequality is obtained because $t \to \frac{1}{L_{1}(t)}$ is convex and $\frac{1}{L_{1}(t+1)} \geq \frac{1}{L_{1}(t)} + \frac{\df}{\df t} \left(\frac{1}{L_{1}(t)} \right)$.
Now we want to prove that
\begin{equation}
    \frac{\tr\left(P_{t}^{-1}\bm{p}{'}\bm{p}{'}^{\top} P_{t}^{-1}\right)}{1 + \bm{p}{'}^{\top} P_{t}^{-1}\bm{p}{'}}
    \geq
    - \frac{\df}{\df t} \left(\frac{1}{L_{1}(t)}\right)
\end{equation}

Here, we discuss a general case.
Let $t > n+1$ and let $\lambda_{1} \geq \dots \geq \lambda_{n+1} > 0$ be the eigenvalues of $P_{t} \subset \mathbb{R}^{n+1}$, and $\bm{v}_{1}, \dots, \bm{v}_{n+1}$ be associated eigenvectors. 
Since $P_{t}$ is a symmetric positive definite matrix and $\bm{v}_{1}, \dots, \bm{v}_{n+1}$ are an orthonormal basis of $\mathbb{R}^{n+1}$,
we can define the optimal price $\bm{p}_{cep} = \sum_{i=1}^{n
+1}\alpha_{i}\bm{v}_{i}$ as a linear combination of these unit eigenvectors.
Let the next price be $\bm{p}{'} = \bm{p}_{cep} + \epsilon \left(v_{n+1,1}\bm{p}_{cep} - \bm{v}_{n+1} \right)$, here $v_{n+1,1}$ is the first component of $\bm{v}_{n+1}$.
We know that $\|\bm{v}_{i}\| = 1$ and $|v_{n+1,i}|\leq 1$ for all $i$.
Then we have
\begin{align*}
    \begin{split}
        \|\bm{p}{'}- \bm{p}_{cep} \|^2
        = \epsilon^2 \| v_{n+1,1}\bm{p}_{cep} - \bm{v}_{n+1} \|^2
        \leq \epsilon^2  \left(1+\max_{\bm{p}\in \mathcal{P}}\|\bm{p}\|^{2}\right)\,.
    \end{split}
\end{align*}
It demonstrates that
\begin{equation}\label{eq:boundphi}
    \|\bm{\phi}_{t}\|^2 \leq \epsilon^2  \left(1+\max_{\bm{p}\in \mathcal{P}}\|\bm{p}\|^{2}\right)\,.
\end{equation}
We choose $|\epsilon| \leq 1$ such that
\begin{align}\label{eq:epsilon}
    \begin{split}
        \epsilon \geq 0 \text{ if } \alpha_{n+1} \leq 0 \,, \epsilon < 0 \text{ if } \alpha_{n+1} > 0 \,,
    \end{split}
\end{align}
and
\begin{align}\label{eq:LvsEpsilon}
    \begin{split}
        \dot{L}_{1}(t)
        \leq \epsilon^2  (n+1)^{-2}\left(1 + L_{1}(n+1)^{-1} \max_{\bm{p}\in \mathcal{P}}\|\bm{p}\|^{2}\right)^{-1} \,.
    \end{split}
\end{align}
We write $\|\bm{p}{'}\|_{P_{t}^{-1}}^2 = \bm{p}{'}^{\top} P_{t}^{-1}\bm{p}{'}$.
Since $\lambda_{\max} \left( P_{t}^{-1}\right) = \lambda_{\min} \left(P_{t}\right)^{-1}$,  $\lambda_{\min} \left( P_{t}\right)\geq L_{1}(t)$ and $t > n+1$, we have
\begin{align}\label{dividend}
    \begin{split}
        1 + \|\bm{p}{'}\|_{P_{t}^{-1}}^2
        & \leq 1 + \lambda_{\max} \left( P_{t}^{-1}\right) \|\bm{p}{'}\|^{2} \\
        & = 1 + \lambda_{\min} \left(P_{t}\right)^{-1} \|\bm{p}{'}\|^{2}\\
        & \leq 1 + L_{1}(t)^{-1} \|\bm{p}{'}\|^{2} \\
        & \leq 1 + L_{1}(n+1)^{-1} \max_{\bm{p}\in \mathcal{P}}\|\bm{p}\|^{2} \,.
    \end{split}
\end{align}
Moreover,
\begin{align}\label{eq:normPtp}
    \begin{split}
        \left\|P_{t}^{-1}\bm{p}{'}\right\|^2
        & = \left\|P_{t}^{-1}\left(\sum_{i=1}^{n+1}\alpha_{i}\bm{v}_{i} + \epsilon \left(v_{n+1,1}\left(\sum_{i=1}^{n+1}\alpha_{i}\bm{v}_{i}\right) - \bm{v}_{n+1} \right)\right)\right\|^2\\
        & = \left\|P_{t}^{-1}\left(\sum_{i=1}^{n+1}\alpha_{i}\bm{v}_{i} + \epsilon \left(v_{n+1,1}\left(\sum_{i=1}^{n}\alpha_{i}\bm{v}_{i} + \alpha_{n+1}\bm{v}_{n+1}\right) - \bm{v}_{n+1} \right)\right)\right\|^2\\
        & = \left\|P_{t}^{-1}
        \left(\left(\alpha_{n+1} + \epsilon \left(v_{n+1,1}\alpha_{n+1}- 1\right)\right)\bm{v}_{n+1}
        + \sum_{i=1}^{n}\left(1+\epsilon \,v_{n+1,1}\right)\right)\alpha_{i}\bm{v}_{i}\right\|^2\\
        & = \left\|
        \left(\alpha_{n+1} + \epsilon \left(v_{n+1,1}\alpha_{n+1}- 1\right)\right)\lambda_{n+1}^{-1}\bm{v}_{n+1}
        + \sum_{i=1}^{n}\left(1+\epsilon \,v_{n+1,1}\right)\lambda_{i}^{-1}\alpha_{i}\bm{v}_{i}\right\|^2\\
        & = \left( \left(\alpha_{n+1} + \epsilon \left(v_{n+1,1}\alpha_{n+1}- 1\right)\right)\lambda_{n+1}^{-1}\right)^{2} \|\bm{v}_{n+1}\|^2
        + \sum_{i=1}^{n}\left( \left(1+\epsilon \,v_{n+1,1}\right)\lambda_{i}^{-1}\alpha_{i}\right)^{2} \|\bm{v}_{i}\|^2
        \\
        & \geq \left(\left(1 + \epsilon v_{n+1,1}\right) \alpha_{n+1}- \epsilon \right)^{2} \lambda_{n+1}^{-2}\|\bm{v}_{n+1}\|^2
        \\
        & \geq \epsilon^{2} \lambda_{n+1}^{-2}\,.
    \end{split}
\end{align}
Since $|\epsilon| \leq 1$ and then $1 + \epsilon v_{n+1,1} \geq 0$.
Together with \eqref{eq:epsilon},
we can obtain the second inequality that
\begin{align*}
    \begin{split}
        \left(\left(1 + \epsilon v_{n+1,1}\right) \alpha_{n+1}- \epsilon \right)^{2}
        \geq
        \left(1 + \epsilon v_{n+1,1}\right)^{2}\alpha_{n+1}^{2} + \epsilon^{2}
        \geq
        \epsilon^{2} \,.
    \end{split}
\end{align*}
Since $P_{t}$ is a symmetric positive definite matrix, we have
$\tr(P_{t}^{-1})^{-1} \leq \lambda_{\min}(P_{t}) \leq n \tr(P_{t}^{-1})^{-1}$.
Given \eqref{eq:traceLt} and by the Sherman-Morrison formula, then we have
\begin{equation}\label{eq:lambdaL}
    \lambda_{n+1} \leq (n+1)\tr(P_{t}^{-1})^{-1} \leq (n+1)\tr\left(\left(P_{t} + \bm{p}_{cep}\bm{p}_{cep}^{\top}\right)^{-1} \right)^{-1}
    < (n+1)L_{1}(t+1) \,.
\end{equation}
By \eqref{eq:normPtp} and \eqref{eq:lambdaL},
\begin{align}\label{divisor}
    \begin{split}
        \left\|P_{t}^{-1}\bm{p}{'}\right\|^2
        \geq \epsilon^{2} (n+1)^{-2}L_{1}({t+1})^{-2}
        \,.
    \end{split}
\end{align}
Together with \eqref{dividend} and \eqref{divisor}, we have
\begin{align*}
    \begin{split}
        \frac{\|P_{t}^{-1}\bm{p}{'}\|^2}{1 + \|\bm{p}{'}\|_{P_{t}^{-1}}^2}
        \geq \frac{\epsilon^{2} (n+1)^{-2}L_{1}({t+1})^{-2}}{1 + L_{1}(n+1)^{-1} \max_{\bm{p}\in \mathcal{P}}\|\bm{p}\|^{2}}
        \,.
    \end{split}
\end{align*}
Choose $\epsilon = K \sqrt{\dot{L}_{1}(t)}$
and given the LHS of the above inequality, we
let $$\kappa^{2} = K^{2} (n+1)^{-2}\left(1 + L_{1}(n+1)^{-1} \max_{\bm{p}\in \mathcal{P}}\|\bm{p}\|^{2}\right)^{-1}.$$
Here $\kappa \geq 1$ by \eqref{eq:LvsEpsilon}.
Then we obtain
\begin{align*}
    \begin{split}
        \frac{\|P_{t}^{-1}\bm{p}{'}\|^2}{1 + \|\bm{p}{'}\|_{P_{t}^{-1}}^2}
        \geq
        \frac{\kappa^{2}\dot{L}_{1}(t)}{L_{1}({t+1})^{2}}
        \,.
    \end{split}
\end{align*}
Due to the convexity of $\frac{1}{L_{1}(t)}$, there exists $\kappa > 1$, such that $\kappa L_{1}(t) \geq L_{1}(t+1)$, and
\begin{align*}
    \begin{split}
        \frac{\|P_{t}^{-1}\bm{p}{'}\|^2}{1 + \|\bm{p}{'}\|_{P_{t}^{-1}}^2}
        \geq \frac{\kappa^{2}{\dot{L}_{1}(t)}}{L_{1}({t+1})^{2}}
        \geq \frac{{\dot{L}_{1}(t)}}{L_{1}({t})^{2}}
        \,.
    \end{split}
\end{align*}
It shows that condition \eqref{eq:traceLt1} holds.
\subsection{Proof of Proposition \ref{proposition_GLMBeta}}\label{Proof_PropGLMBeta}


The price vector can be written as a function in terms of $\bm{\beta}$ given by $\bm{p} = \bm{p}({\bm{\beta}})$.
Since $\bm{p}({\bm{\beta}_{0}}) \in \mathcal{P}$ and
$\frac{\partial r(\bm{p}^{*}, {\bm{\beta}_{0}})}{\partial {p}_{i}} = 0$ at the optimal price $\bm{p}^{*}$,
by the Taylor series expansion, we can derive
\begin{equation*}
    \left|r(\bm{p}, {\bm{\beta}_{0}})-r(\bm{p}^{*}, {\bm{\beta}_{0}})\right|
    \leq \frac{1}{2} \left(\text{sup}_{\bm{p} \in \mathcal{P}}\left|\frac{\partial^2 r(\bm{p}, {\bm{\beta}_{0}})}{ \partial {p}_{i}^2}\right|\right) \left\|\bm{p}-\bm{p}^{*}\right\|^2 \, ,
\end{equation*}
for all $\bm{p} \in \mathcal{P}$.
Let $k=\text{sup}_{\bm{p} \in \mathcal{P}}\left|\frac{\partial^2 r(\bm{p}, {\bm{\beta}_{0}})}{{p}_{i}^2}\right| < \infty$, we have
\begin{equation*}
    \left|r(\bm{p}, {\bm{\beta}_{0}})-r(\bm{p}^{*}, {\bm{\beta}_{0}})\right|
    \leq k \left\|\bm{p}-\bm{p}^{*}\right\|^2 \, .
\end{equation*}
By the definition of implicit function theorem in Duistermaat and Kolk \cite{DuistermaatKolk2004},
there is an open and bounded neighbourhood $V$ such that the function $\bm{\beta} \rightarrow \bm{p}(\bm{\beta})$ is continuously differentiable with bounded derivatives.
Thus, for all ${\bm{\beta}} \in V$ and some non-random constant $K_{2}> 0$, we have
\begin{equation*}
    \left\|\bm{p}({\bm{\beta}})-\bm{p}({{\bm{\beta}}}_{0})\left\| \leq K_{2} \right\| {\bm{\beta}} - {\bm{\beta}}_{0}\right\| \, .
\end{equation*}
Assume there exits $\widehat{\bm{\beta}}_{t} \in V$ for all $t$, then
\begin{equation*}
    \left\|\bm{p}({\widehat{\bm{\beta}}_{t}})-\bm{p}({\bm{\beta}}_{0})\right\|
    \leq K_{2} \left\| {\widehat{\bm{\beta}}_{t}} - {\bm{\beta}_{0}}\right\| \, .
\end{equation*}
This shows the upper bounds on the regret depend on the upper bounds on
$\left\|\widehat{\bm{\beta}}_{t}-{\bm{\beta}_{0}}\right\|^{2}$.

\subsection{Proof of Proposition \ref{proposition_GLMboundsClinkF}}\label{Proof_PropGLMboundsClinkF}
To prove Proposition \ref{proposition_GLMboundsClinkF}, we need the following lemma.
\begin{lemma}\label{ChenHfun_lemma}
Let $H$ be a smooth continuously differentiable injection from $\R^{d} \to \R^{d}$ with $H(x_{0}) = y_{0}$.
Define $B_{\rho} = \{x \in \R^{d}\,|\,\|x - x_{0}\|\leq \rho\}$ and
$\partial B_{\rho} = \{x \in \R^{d}\,|\,\|x - x_{0}\|= \rho\}$.
Then, $\inf_{x \in \partial B_{\rho}(x_{0})}\| H(x) - y_{0}) \| \geq \delta$ implies
$B_{\delta}(y_{0}) = \{y \in \R^{d}\,|\, \| y - y_{0} \| \leq \delta\} \subseteq H(B_{\rho}(x_{0}))$, which gives
$H^{-1}(B_{\delta}(y_{0})) \subseteq B_{\rho}(x_{0})$.
\end{lemma}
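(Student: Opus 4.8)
The plan is to treat this as a purely topological fact, proving it by invariance of domain together with a connectedness argument rather than by Brouwer degree (which would be awkward here, since a $C^{1}$ injection can have a vanishing Jacobian and hence need not be a local diffeomorphism). Write $\mathring{B}_{\rho}(x_{0})$ for the open ball. The starting observation is that $H$, being a continuous injection of the open set $\mathring{B}_{\rho}(x_{0}) \subseteq \R^{d}$ into $\R^{d}$, is an open map by Brouwer's invariance of domain; hence $U := H(\mathring{B}_{\rho}(x_{0}))$ is open and contains $y_{0} = H(x_{0})$. This openness is the one nontrivial ingredient and is what rules out $H$ ``folding'' the ball so as to miss nearby target points.

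First I would establish the inclusion for the open target ball $\mathcal{B} := \{y : \|y - y_{0}\| < \delta\}$. Set $A := \mathcal{B} \cap U$, which is open and nonempty (it contains $y_{0}$). The key step is to show $A$ is closed in $\mathcal{B}$: take $y_{n} \in A$ with $y_{n} \to y \in \mathcal{B}$, write $y_{n} = H(x_{n})$ with $x_{n} \in \mathring{B}_{\rho}(x_{0})$, and use compactness of the closed ball $B_{\rho}(x_{0})$ to pass to a subsequence $x_{n_{k}} \to x \in B_{\rho}(x_{0})$, so that $H(x) = y$ by continuity. The limit $x$ cannot lie on the sphere $\partial B_{\rho}(x_{0})$, for otherwise the boundary hypothesis forces $\|y - y_{0}\| = \|H(x) - y_{0}\| \geq \delta$, contradicting $y \in \mathcal{B}$. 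Hence $x \in \mathring{B}_{\rho}(x_{0})$ and $y = H(x) \in U$, i.e. $y \in A$. Since $\mathcal{B}$ is connected and $A$ is nonempty, open and closed in $\mathcal{B}$, we conclude $A = \mathcal{B}$, that is $\mathcal{B} \subseteq H(\mathring{B}_{\rho}(x_{0}))$.

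Next I would pass to the closed balls appearing in the statement. For the remaining points $y$ with $\|y - y_{0}\| = \delta$, approximate from inside by $y_{n} \to y$ with $\|y_{n} - y_{0}\| < \delta$; each $y_{n} = H(x_{n})$ with $x_{n} \in \mathring{B}_{\rho}(x_{0})$, and compactness plus continuity again yield $x_{n_{k}} \to x \in B_{\rho}(x_{0})$ with $H(x) = y$. This gives $B_{\delta}(y_{0}) = \{y : \|y - y_{0}\| \leq \delta\} \subseteq H(B_{\rho}(x_{0}))$. The preimage inclusion is then immediate from the global injectivity of $H$: if $H(x) \in B_{\delta}(y_{0})$, the previous inclusion produces $x' \in B_{\rho}(x_{0})$ with $H(x') = H(x)$, whence $x = x'$ and $x \in B_{\rho}(x_{0})$; thus $H^{-1}(B_{\delta}(y_{0})) \subseteq B_{\rho}(x_{0})$.

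The main obstacle is really just the openness of $U$, which is precisely invariance of domain and cannot be avoided; everything else is soft. The only delicate bookkeeping is the strict-versus-nonstrict distinction at the two sphere boundaries (the $x$-sphere $\partial B_{\rho}(x_{0})$ and the $y$-sphere $\|y - y_{0}\| = \delta$), both of which are dealt with by the limiting arguments above.
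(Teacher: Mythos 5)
Your proof is correct. The paper's own proof rests on the same essential ingredient---Brouwer's invariance of domain---but invokes it only by citation (Theorem 3.1 and Corollary 3.2 of Dugundji), asserting that interior points are carried to interior points under the homeomorphism obtained by restricting $H$ to the compact ball, and leaves the actual covering argument implicit. You supply that missing piece explicitly: invariance of domain gives openness of $U = H(\mathring{B}_{\rho}(x_{0}))$, and your connectedness argument (showing $\mathcal{B}\cap U$ is nonempty, open, and closed in the open target ball $\mathcal{B}$, with closedness coming from compactness of $B_{\rho}(x_{0})$ together with the boundary hypothesis $\inf_{x\in\partial B_{\rho}}\|H(x)-y_{0}\|\ge\delta$) is exactly what converts local openness into the global inclusion $\mathcal{B}\subseteq U$. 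Your limiting argument for the sphere $\|y-y_{0}\|=\delta$ and the use of global injectivity for the preimage inclusion are also handled correctly, and your observation that a degree-theoretic route would be awkward (a $C^{1}$ injection may have vanishing Jacobian, so it need not be a local diffeomorphism) is well taken. In short, what you have written is a self-contained and more rigorous version of the topological argument the paper only sketches; the trade-off is length versus the paper's one-line appeal to a textbook reference.
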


\begin{proof}
Since $H: B_{\rho}(x_{0}) \mapsto  H(B_{\rho}(x_{0}))$ is a homeomorphism, we can directly derive the result based on Theorem 3.1 \& Corollary 3.2 in Dugundji \cite{Dugundji1966}.
For any set $B \subset H^{d+1}$ and a homeomorphism $h: B \mapsto H^{d+1}$, if $x$ is an interior boundary point of $B$, then $h(x)$ is an interior boundary point of $h(B)$.
This is because by the Brouwer domain invariance theorem, for any space $B$, the property ``open in $H^{d+1}$" is a positional invariant of $B$ rel $H^{d+1}$.
\end{proof}
It simply tells us that for all $y \in \{y \in \R^{d}\,|\, \| y - y_{0} \| \leq \delta\}$, there is an $x \in \{x \in \R^{d}\,|\, \| x - x_{0} \| \leq \rho \}$ such that $H(x) = y$.
Similarly, we define a closed and bounded neighbourhood of ${\bm{\beta}}_{0}$ as
$B_{\rho} = \left\{{\bm{\beta}} \in \mathbb{R}^{2} \, \big| \, \left\|{{\bm{\beta}}}-{\bm{\beta}}_{0}\right\| \leq \rho\right\}$ and
$\partial B_{\rho} = \left\{{\bm{\beta}} \in \mathbb{R}^{2} \, \big| \, \left\|{{\bm{\beta}}}-{\bm{\beta}}_{0}\right\|= \rho \right\}$.

Now, we shall start our proof of Proposition \ref{proposition_GLMboundsClinkF}, that is to find the value of $\Exp \left[ \left\|\widehat{\bm{\beta}}_{t} - {\bm{\beta}}_{0}\right\|^2 \right]$.
\begin{proof}[Proof of Proposition \ref{proposition_GLMboundsClinkF}]
Since $\bm{\beta}$ is not effected by $\sigma$, without loss of generality we assume that $\sigma = 1$.
A Taylor expansion of $h(\cdot)$ yields
\begin{align*}
    \begin{split}
        y_{i} - h(\bm{p}({i})^{\top}\bm{\beta})
        & = y_{i} - h(\bm{p}({i})^{\top}\bm{\beta}_{0}) + h(\bm{p}({i})^{\top}\bm{\beta}_{0}) - h(\bm{p}({i})^{\top}\bm{\beta})  \\
        & = y_i - h(\bm{p}({i})^{\top}{\bm{\beta}}_{0}) - \dot{h}(\bm{p}({i})^{\top}\widetilde{\bm{\beta}})\bm{p}({i})^{\top}(\bm{\beta}-\bm{\beta}_{0}) \, ,
    \end{split}
\end{align*}
for some $\widetilde{\bm{\beta}}$ is on the line segment between $\bm{\beta}_{0}$ and $\bm{\beta}$.
Thus
\begin{align*}
    \begin{split}
        {\bm{l_{t}}}(\bm{\beta}) - {\bm{l_{t}}}(\bm{\beta}_{0})
        & = \sum_{i=1}^{t} \bm{p}({i}) (h(\bm{p}({i})^{\top}){\bm{\beta}}_{0} - h(\bm{p}({i})^{\top}){\bm{\beta}}) \\
        & = \sum_{i=1}^{t} \bm{p}({i}) \bm{p}({i})^{\top} \dot{h}(\bm{p}({i})^{\top}\widetilde{{\bm{\beta}}})({\bm{\beta}}_{0} - {\bm{\beta}}) \,.
    \end{split}
\end{align*}
Under the Assumptions A\ref{Assum_A1} and A\ref{Assum_A2}, the strong consistency for $\widehat{\bm{\beta}}_{t}$ in Lai and Wei \cite{LaiWei1982} holds.
We define
\begin{align*}
    \begin{split}
        E_{t}
        & = \left(\sum_{i=1}^{t} \bm{p}({i}) \bm{p}({i})^{\top} \right)^{-1}
        \sum_{i=1}^{t} \bm{p}({i}) \eta_{i}
        = P_{t}^{-1} \sum_{i=1}^{t} \bm{p}({i}) \eta_{i} \,,
    \end{split}
\end{align*}
here
$\eta_{i}:=y_i - h(\bm{p}({i})^{\top}{\bm{\beta}}_{0})$.
Since $\Exp[\eta_{i} \,| \,\mathcal{F}_{i-1}] = 0$,  $\|E_{t}\|\to 0$ a.s.
Write $H_{t}(\bm{\beta}) = P_{t}^{-1} l_{t}(\bm{\beta})$ and
let $\kappa = \inf_{\bm{\beta} \in B_{\rho}, \bm{p} \in \mathcal{P}}\dot{h}(\bm{p}({i})^{\top}{\bm{\beta}})>0 $.
If $\left\|{\bm{\beta}}-{\bm{\beta}}_{0}\right\| \leq \rho $,
then for all $\bm{\beta} \in \ B_{\rho}$, we have
\begin{align*}
    \begin{split}
        \left\|H_{t}(\bm{\beta}) - H_{t}({\bm{\beta}}_{0})\right\|^2
        & = \left\|P_{t}^{-1} \left(\bm{l}_{t}(\bm{\beta}) - \bm{l}_{t}(\bm{\beta}_{0})\right)\right\|^2 \\
        & = ({\bm{\beta}}_{0}-{\bm{\beta}})^{\top} \sum_{i=1}^{t} \bm{p}({i}) \bm{p}({i})^{\top} \dot{h}(\bm{p}({i})^{\top}\widetilde{\bm{\beta}}) \left(\sum_{i=1}^{t}\bm{p}({i}) \bm{p}({i})^{\top}\right)^{-2} \sum_{i=1}^{t} \bm{p}({i}) \bm{p}({i})^{\top} \dot{h}(\bm{p}({i})^{\top}\widetilde{\bm{\beta}}) ({\bm{\beta}}_{0}-{\bm{\beta}}) \\
        & \geq ({\bm{\beta}}_{0}-{\bm{\beta}})^{\top} \sum_{i=1}^{t} \bm{p}({i}) \bm{p}({i})^{\top} \dot{h}(\bm{p}({i})^{\top}\widetilde{\bm{\beta}}) \left(\sum_{i=1}^{t}\frac{\dot{h}(\bm{p}({i})^{\top}\widetilde{\bm{\beta}})}{\kappa}
        \bm{p}({i})\bm{p}({i})^{\top}\right)^{-2} \\
        & \sum_{i=1}^{t} \bm{p}({i}) \bm{p}({i})^{\top} \dot{h}(\bm{p}({i})^{\top}\widetilde{\bm{\beta}}) ({\bm{\beta}}_{0}-{\bm{\beta}}) \\
        & = \kappa^{2} \left\|{\bm{\beta}}_{0}-{\bm{\beta}} \right\|^{2} \, .
    \end{split}
\end{align*}
In particular, if $\left\|{\bm{\beta}}_{0}-{\bm{\beta}} \right\| = \rho$, we have
\begin{equation*}
    \left\|H_{t}(\bm{\beta}) - H_{t}({\bm{\beta}}_{0})\right\|^2 \geq \kappa^{2} \rho^{2} > 0  \,.
\end{equation*}
Let $ \widetilde{H}_{t}(\bm{\beta}) = H_{t}(\bm{\beta}) - H_{t}({\bm{\beta}}_{0})$ with $\widetilde{H}_{t}(\bm{\beta}_{0}) = 0$, we have $\Exp \left[ \left\|{\bm{\beta}}_{t} - {\bm{\beta}}_{0}\right\|^2 \right]$ = $O\left(\Exp \left[ \left\| \widetilde{H}_{t}({\bm{\beta}}_{t})\right\|^2  \right]\right)$.
By Lemma \ref{ChenHfun_lemma}, we know $\widetilde{H}_{t}^{-1}(E_{t})$ is well defined on $\{\bm{\beta}:\|\bm{\beta} - \bm{\beta}_{0}\|\leq \rho \}$.
Since
\begin{equation*}
    \sum_{i=1}^{t} \bm{p}({i}) \left(h(\bm{p}({i})^{\top}\widetilde{H}_{t}^{-1}(E_{t})) - y_{i}\right)
    =
    \sum_{i=1}^{t} \bm{p}({i})\bm{p}({i})^{\top} \left(\widetilde{H}_{t}\left(\widetilde{H}_{t}^{-1}(E_{t}))\right) - E_{t}\right)
    = 0 \,,
\end{equation*}
we obtain that
$\widehat{\bm{\beta}}_{t} = \widetilde{H}_{t}^{-1}(E_{t})$ exists and $\widetilde{H}_{t}(\widehat{\bm{\beta}}_{t}) = E_{t}$.
Given $\lambda_{\min}(t) \to \infty$ as $t \to \infty$ and ${P}_{t}$ is nonsingular for all large $t$, we have,
\begin{align*}
    \begin{split}
        \left\| \widetilde{H}_{t}(\widehat{\bm{\beta}}_{t})\right\|^2
        & =  \left\| P_{t}^{-1} \sum_{i=1}^{t} \bm{p}({i}) \eta_{i} \right\|^2
        \leq \Bigg\| P_{t}^{-\frac{1}{2}}\Bigg\|^2 \left\| P_{t}^{-\frac{1}{2}}   \sum_{i=1}^{t} \bm{p}({i}) \eta_{i} \right\|^2 \\
        & = \lambda_{\min}^{-1}(t) \left(\sum_{i=1}^{t}\bm{p}({i})^{\top}\eta_{i}\right) P_{t}^{-1} \left(\sum_{i=1}^{t}\bm{p}({i})\eta_{i}\right)\,.
    \end{split}
\end{align*}
Define $N=\inf\{t:P_{t}\text{ is nonsigular}\}$.
Assume $N < \infty$ and for $t \geq N$, we
let $V_{t} = P_{t}^{-1}$ and
\begin{equation*}
    Q_{t}
    = \left(\sum_{i=1}^{t}\bm{p}({i})^{\top}\eta_{i}\right) V_{t} \left(\sum_{i=1}^{t}\bm{p}({i})\eta_{i}\right)\,.
\end{equation*}
Apply the Sherman-Morrison formula, we obtain the recursive form of $Q_{t}$.
For $k>N$, we have
\begin{align*}
    \begin{split}
        Q_{k}
        & = \left(\sum_{i=1}^{k}\bm{p}({i})^{\top}\eta_{i}\right) V_{k} \left(\sum_{i=1}^{k}\bm{p}({i})\eta_{i}\right) \\
        & = \left(\sum_{i=1}^{k-1}\bm{p}({i})^{\top}\eta_{i}\right)V_{k} \left(\sum_{i=1}^{k-1}\bm{p}({i})\eta_{i}\right)
        + \bm{p}({k})^{\top} V_{k} \bm{p}({k}) \eta_{k}^{2}
        + 2 \bm{p}({k})^{\top} V_{k} \left(\sum_{i=1}^{k-1}\bm{p}({i})\eta_{i}\right) \eta_{k} \\
        & = Q_{k-1} + \left(\sum_{i=1}^{k-1}\bm{p}({i})^{\top}\eta_{i}\right)
        \left(-\frac{V_{k-1}\bm{p}({k})\bm{p}({k})^{\top}V_{k-1}}{1 + \bm{p}({k})^{\top}V_{k-1}\bm{p}({k})}\right)
        \left(\sum_{i=1}^{k-1}\bm{p}({i})\eta_{i}\right)  \\
        & + \bm{p}({k})^{\top} V_{k} \bm{p}({k}) \eta_{k}^{2}
        +  2 \left(\frac{\bm{p}({k})^{\top}V_{k-1}}{1 + \bm{p}({k})^{\top}V_{k-1}\bm{p}({k})}\right)
        \left(\sum_{i=1}^{k-1}\bm{p}({i})\eta_{i}\right) \eta_{k} \\
        & = Q_{k-1}
        - \frac{\left( \bm{p}({k})^{\top}V_{k-1}\sum_{i=1}^{k-1}\bm{p}({i})\eta_{i}\right)^{2}}{1 + \bm{p}({k})^{\top}V_{k-1}\bm{p}({k})} + \bm{p}({k})^{\top} V_{k} \bm{p}({k}) \eta_{k}^{2}
        +  2 \left(\frac{\bm{p}({k})^{\top}V_{k-1}}{1 + \bm{p}({k})^{\top}V_{k-1}\bm{p}({k})}\right)
        \left(\sum_{i=1}^{k-1}\bm{p}({i})\eta_{i}\right) \eta_{k} \, .
    \end{split}
\end{align*}
Let 
\begin{equation*}
    \gamma_{k} = \frac{\left( \bm{p}({k})^{\top}V_{k-1}\sum_{i=1}^{k-1}\bm{p}({i})\eta_{i}\right)^{2}}{1 + \bm{p}({k})^{\top}V_{k-1}\bm{p}({k})}\,, \quad
    \theta_{k} = \bm{p}({k})^{\top} V_{k} \bm{p}({k}) \eta_{k}^{2} \,, \quad
    \omega_{k-1} = 2 \left(\frac{\bm{p}({k})^{\top}V_{k-1}}{1 + \bm{p}({k})^{\top}V_{k-1}\bm{p}({k})}\right)
    \left(\sum_{i=1}^{k-1}\bm{p}({i})\eta_{i}\right)\,.
\end{equation*}
Here, $\gamma_{k}, \theta_{k}\geq 0$ and $\omega_{k-1}$ are $\mathcal{F}_{k-1}$-measurable.
Summing it, we have for $t > N$,
\begin{align*}
    \begin{split}
        Q_{t}
        & = Q_{N}
        - \sum_{k=N+1}^{t}
        \gamma_{k}
        + \sum_{k=N+1}^{t} \theta_{k}
        + \sum_{k=N+1}^{t}  \omega_{k-1}\eta_{k}
         \, .
    \end{split}
\end{align*}
Here, $Q_{t} \geq 0$ is an extended stochastic Liapounov function if it is $\mathcal{F}_{t}$-measurable (Lai \cite{Lai2003}).
By the strong laws for martingale, for any $\alpha>0$, we have
\begin{align*}
    \begin{split}
        \max \left(Q_{t}, \sum_{k=N+1}^{t}
        \gamma_{k} \right)
        = O \left(\sum_{k=N+1}^{t} \theta_{k} + \left(\sum_{k=N+1}^{t} \omega_{k-1}^{2}\right)^{\frac{1}{2}+\alpha}\right)
        \,.
    \end{split}
\end{align*}
The local martingale convergence theorem and the strong law of large numbers (Chow \cite{Chow1965}) show
$\sum_{k=N+1}^{t}  \omega_{k-1}^{2}\leq 4 \sum_{k=N+1}^{t}\gamma_{k}$.
When $\lim_{t \to \infty} \lambda_{\max}(t) = \infty$, by Kronecker's lemma and Freeman theorem \cite{Freedman1973}, we have
\begin{equation*}
    \sum_{k=N+1}^{t} \theta_{k}
    = O \left(\sum_{k=N+1}^{t} \bm{p}({k})^{\top}  V_{k} \bm{p}({k})\right)
    = O (\log \lambda_{\max}(t))\,.
\end{equation*}
It implies when $\lim_{t \to \infty} \lambda_{\max}(t) = \infty$, we have
\begin{equation*}
    Q_{t}
    = O \left(\log \lambda_{\max} (t)\right)
    \quad \text{ and } \quad
    \sum_{k=N+1}^{t} \gamma_{k} = O \left(\log \lambda_{\max} (t)\right) \, .
\end{equation*}
Assume Assumption A\ref{Assum_A1} and A\ref{Assum_A2} hold,
by Theorem 2 in Lai and Wei \cite{LaiWei1982} we obtain that $\widehat{\bm{\beta}}_{t}$ is strongly consistent with
\begin{equation*}
    \left\|\widehat{\bm{\beta}}_{t} - {\bm{\beta}}_{0}\right\|^2
    = O \left(\frac{\log \lambda_{\max} (t)}{\lambda_{\min} (t)}\right)
    = O\left(\frac{\log(t)}{\lambda_{\min}(t)}\right) \,.
\end{equation*}
Assume that $\lambda_{\min}(t) \geq L_{1}(t)$ holds, we have
\begin{equation*}
    \Exp\left[\left\|\widehat{\bm{\beta}}_{t}-{\bm{\beta}}_{0}\right\|^2 \right]
    = O \left(\frac{\log(t)}{L_{1}(t)}\right) \, .
\end{equation*}
\end{proof}

\subsection{Proof of Theorem \ref{theorem_GPBounds}}\label{Appendix_GPUCB:RegBounds}
In this section, we present lemmas used for the proof of Theorem \ref{theorem_GPBounds}.
This proof is based on the work of Srinivas et al.\@ \cite{Srinivas2010}.
We first present the following results from \cite{Srinivas2010}.

\begin{lemma} \cite[Lemma 5.3]{Srinivas2010}\label{Lemma:InfoGain}
If $f_{T} = (f({p}_{t})) \in \R^{T}$,
the information gain in GP can be expressed as
\begin{equation*}
    I\left(\bm{y}_{T}; f_{T}\right)
    = \frac{1}{2}\sum_{t=1}^{T}
    \log\left(1+{\sigma}^{-2}{\sigma}_{t-1}({p_{t}})^{2}\right) \,.
\end{equation*}
Here, ${\sigma}^{2}$ is the variance of Gaussian noise and ${\sigma}_{t-1}({p_{t}})^{2}$ is the posterior variance after $t-1$ observations.
\end{lemma}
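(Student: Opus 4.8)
The plan is to expand the mutual information by its definition $I(\bm{y}_{T}; f_{T}) = H(\bm{y}_{T}) - H(\bm{y}_{T} \mid f_{T})$ and then evaluate each entropy using the Gaussian structure of the model. Since the observation noise $\varepsilon_{t} \sim \mathcal{N}(0, \sigma^{2})$ is i.i.d.\@ and independent of $f$, conditioning on the function values makes the observations independent with variance $\sigma^{2}$ each, so that $H(\bm{y}_{T} \mid f_{T}) = \tfrac{1}{2}\log\lvert 2\pi e\, \sigma^{2} \bm{I}\rvert = \tfrac{T}{2}\log(2\pi e \sigma^{2})$. This disposes of the second term immediately.

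For the first term I would apply the chain rule for differential entropy, writing $H(\bm{y}_{T}) = \sum_{t=1}^{T} H(y_{t} \mid \bm{y}_{1:t-1})$, which turns the joint entropy into a telescoping sum of one-step conditional entropies. The key step is to identify the conditional law of $y_{t}$ given the past observations: because $f \sim \mathcal{GP}$, the posterior over $f$ after $t-1$ observations is again a Gaussian process with predictive variance $\sigma_{t-1}(p_{t})^{2}$ given in \eqref{posteriorGP}, and adding the fresh independent noise $\varepsilon_{t}$ yields $y_{t} \mid \bm{y}_{1:t-1} \sim \mathcal{N}\!\left(\mu_{t-1}(p_{t}),\, \sigma^{2} + \sigma_{t-1}(p_{t})^{2}\right)$. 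Hence $H(y_{t} \mid \bm{y}_{1:t-1}) = \tfrac{1}{2}\log\!\left(2\pi e\,(\sigma^{2} + \sigma_{t-1}(p_{t})^{2})\right)$.

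Combining the two computations, the constant $\tfrac{1}{2}\log(2\pi e)$ cancels term by term and the noise variance telescopes, leaving
\begin{align*}
    I\left(\bm{y}_{T}; f_{T}\right)
    = \frac{1}{2}\sum_{t=1}^{T} \log\!\left(\sigma^{2} + \sigma_{t-1}(p_{t})^{2}\right)
    - \frac{1}{2}\sum_{t=1}^{T}\log\!\left(\sigma^{2}\right)
    = \frac{1}{2}\sum_{t=1}^{T}\log\!\left(1 + \sigma^{-2}\sigma_{t-1}(p_{t})^{2}\right),
\end{align*}
which is the claimed identity. (As a consistency check, this also recovers the closed form $I = \tfrac{1}{2}\log\lvert \bm{I} + \sigma^{-2}\bm{K}_{T}\rvert$ stated earlier, since the determinant factorises exactly into the one-step predictive variances.)

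The main obstacle, and the point I would emphasise, is the justification of the one-step predictive variance. The crucial and somewhat subtle feature is that for a Gaussian process the posterior variance $\sigma_{t-1}(p_{t})^{2}$ depends only on the queried input locations $p_{1}, \dots, p_{t}$ and \emph{not} on the realised values $\bm{y}_{1:t-1}$; this is precisely the content of the second line of \eqref{posteriorGP}. This deterministic dependence is what allows each conditional entropy $H(y_{t} \mid \bm{y}_{1:t-1})$ to be evaluated in closed form and treated as a non-random quantity in the telescoping sum, and it is the only place where the Gaussianity of the prior, as opposed to merely of the noise, is genuinely used.
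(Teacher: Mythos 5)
Your proposal is correct and follows essentially the same route as the paper's proof: the decomposition $I(\bm{y}_{T};f_{T})=H(\bm{y}_{T})-H(\bm{y}_{T}\mid f)$, the closed form $\tfrac{T}{2}\log(2\pi e\sigma^{2})$ for the conditional entropy, and the chain-rule/telescoping expansion of $H(\bm{y}_{T})$ via the one-step predictive variances. Your closing remark on why $\sigma_{t-1}(p_{t})^{2}$ is determined by the input locations alone is a useful clarification that the paper leaves implicit, but it does not change the argument.
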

\begin{proof}
The Shannon Mutual Information $I$ is defined as
\begin{equation}\label{eq:defI}
    I\left(\bm{y}_{T}; f_{T}\right) = H(\bm{y}_{T}) - H(\bm{y}_{T}|f) \,.
\end{equation}
It quantifies the reduction in uncertainty (measured in terms of differential Shannon entropy) about $f$ from revealing $\bm{y}$.
By the definition of $H(\cdot)$, we have that for a Gaussian Process,
\begin{align}\label{eq:defH}
    \begin{split}
        H(\bm{y}_{T}|f)
        & = \frac{1}{2} \log |2\pi e \sigma^2 \bm{I}|=\frac{1}{2} \sum_{t=1}^{T} \log (2\pi e \sigma^2 )\,.
    \end{split}
\end{align}
We can expand $H(\bm{y}_{T})$ as
\begin{align*}
    \begin{split}
        H(\bm{y}_{T})
        & = H(\bm{y}_{T-1}) + H(\bm{y}_{T}|\bm{y}_{T-1})\\
        & = H(\bm{y}_{T-1}) + \frac{1}{2} \log(2\pi e (\sigma^2 +{\sigma}_{t-1}({p_{t}})^{2}))\,.
    \end{split}
\end{align*}
By expanding the entropy terms, we have
\begin{align}\label{eq:expdH}
    \begin{split}
        H(\bm{y}_{T})
        & = \frac{1}{2} \sum_{t=1}^{T}\log(2\pi e (\sigma^2 +{\sigma}_{t-1}({p_{t}})^{2}))\,.
    \end{split}
\end{align}
Substituting \eqref{eq:defH} and \eqref{eq:expdH} into \eqref{eq:defI}, we
can obtain the result.
\end{proof}

The following lemma is used to obtain the finite bound on $\gamma_{T}$.
The proof is quite involved and we refer the reader to Theorem 8 in Srinivas et al.\@ \cite{Srinivas2010}.
\begin{lemma} \cite[Theorem 8]{Srinivas2010}\label{InfoGainBound}
Assume that Assumption \ref{Ass:f} holds.
Choose $n_{T} = c T^{\tau} \log T$, where $c$ is a constant and $\tau > 0$.
For any $T_{*} \in \{1, \dots, \min(T, n_{T})\}$,
we let $B_{k}(T_{*}) = \sum_{s>T_{*}}\lambda_{s}$, here $\lambda_{s}$ is the eigenvalues of kernel $k({p}, {p'})$ w.r.t the uniform distribution over $\mathcal{P}$,
Then the bounds on $\gamma_{T}$ is given by
\begin{align*}
    \begin{split}
        \gamma_{T}
        & \leq \frac{1/2}{1-e^{-1}}
        \max_{r \in \{1, \dots, {T}\}}
        ( T_{*} \log(rn_{T}/\sigma^{2}) \\
        & + c \sigma^{2} (1 - r/T) (\log T)(T^{t+1}B_{k}(T_{*})+1) ) \\
        & + O(T^{1-t/d})\,.
    \end{split}
\end{align*}
\end{lemma}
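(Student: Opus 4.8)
The plan is to reproduce the argument of Srinivas et al.\ \cite{Srinivas2010}, whose Theorem~8 this lemma restates; the entire difficulty lies in controlling the combinatorial maximum $\gamma_T = \max_{|\mathcal{A}|=T}\tfrac12\log\det(\bm I + \sigma^{-2}\bm K_{\mathcal A})$ through the spectrum of the kernel operator. First I would record, via Lemma~\ref{Lemma:InfoGain}, that the information gain of any set built one point at a time equals $\tfrac12\sum_{t=1}^T\log(1+\sigma^{-2}\sigma_{t-1}(p_t)^2)$, so that the task reduces to bounding a sum of posterior-variance log terms rather than an abstract determinant.

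The first key step is a submodularity argument. The set function $\mathcal A\mapsto I(\bm y_{\mathcal A};f_{\mathcal A})$ is monotone and submodular, so the Nemhauser--Wolsey--Fisher greedy guarantee shows the greedy information gain is within a factor $(1-e^{-1})$ of the optimum. This is precisely where the prefactor $\frac{1/2}{1-e^{-1}}$ originates: it lets me replace the intractable maximum defining $\gamma_T$ by the value of the greedy procedure, which (being the additive form above) I can analyse spectrally.

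The second key step is discretization followed by a spectral head/tail split. I would uniformly sample $n_T=cT^{\tau}\log T$ candidate points, reducing the maximization over the continuum $\mathcal P$ to one over a finite set at the cost of an additive approximation error; invoking the almost-sure differentiability and the Gaussian tail bound on $\partial f/\partial p_j$ from Assumption~\ref{Ass:f} together with a covering/Lipschitz estimate, this error is $O(T^{1-\tau/d})$, the final term of the bound. On the finite set the Gram-matrix eigenvalues are comparable to $n_T$ times the Mercer eigenvalues $\lambda_s$ of the kernel; splitting the spectrum at index $T_*$, the top $T_*$ eigenvalues are bounded crudely and contribute the $T_*\log(rn_T/\sigma^2)$ term (the auxiliary parameter $r$ tracks how much of the $T$-point selection carries the dominant information), while the tail mass $B_k(T_*)=\sum_{s>T_*}\lambda_s$ controls the remainder, yielding the $(1-r/T)(\log T)(T^{\tau+1}B_k(T_*)+1)$ term once one passes to the worst case over $r$.

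I expect the main obstacle to be this spectral/sampling step. One must (i) make rigorous the comparison between the eigenvalues of the random Gram matrix $\bm K_V$ on the sampled set and the operator eigenvalues $\lambda_s$ with respect to the uniform measure, and (ii) choose the split index $T_*$ and the parameter $r$ so that the head and tail contributions balance. Both ingredients rely on the concentration and covering estimates in the appendix of Srinivas et al.\ rather than on anything specific to the insurance model, so I would import them directly; the rest is bookkeeping that assembles the stated bound.
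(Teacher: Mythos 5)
Your proposal is correct and follows essentially the same route as the source: the paper gives no proof of this lemma, deferring entirely to Theorem~8 of Srinivas et al., and your sketch --- the submodular/greedy argument supplying the $\frac{1/2}{1-e^{-1}}$ factor, discretization to $n_T = cT^{\tau}\log T$ sampled points with an $O(T^{1-\tau/d})$ error via the Lipschitz bound of Assumption~\ref{Ass:f}, and the spectral head/tail split at $T_*$ with tail mass $B_k(T_*)$ --- accurately reproduces the structure of that cited argument. Note only that your exponents $T^{\tau+1}$ and $T^{1-\tau/d}$ are the correct reading of the statement's $T^{t+1}$ and $T^{1-t/d}$, which are transcription typos inherited from the restatement.
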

We also need to establish the following lemmas before we prove Theorem \ref{theorem_GPBounds}.
Lemma \ref{lemma:Confbound1} provides a confidence bound on a finite decision set $|\mathcal{P}| < \infty$, where all decisions are chosen.
Lemma \ref{lemma:Confbound2} shows a confidence bound on a set of discretizations $\mathcal{P}_{t} \subset \mathcal{P}$ where $\mathcal{P} \subset \R^{d}$ is a general compact set.

Denote a sequence of $\pi_{t} > 0$ such that $\sum_{t}\pi_{t}^{-1} = 1$.
\begin{lemma}\label{lemma:Confbound1}
Pick $\delta \in (0,1)$ and set $\varphi_{t} = 2 \log(|\mathcal{P}|\pi_{t} / \delta)$.
Then with probability greater than $ 1-\delta$, for any $p \in \mathcal{P}$ and $t \geq 1$, we have
\begin{equation}\label{eq:fmu}
    \left| r({p}) - \mu_{t-1}^{r}({p}) \right| \leq \sqrt{\varphi_{t}} {\sigma}_{t-1}^{r}({p}_{t})\,.
\end{equation}
Here, ${\sigma}_{t-1}^{r}({p}) = p_{t-1}{\sigma}_{t-1}^{d}({p}) + {\sigma}_{t-1}^{c}({p})$.
\end{lemma}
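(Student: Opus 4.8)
The plan is to establish the bound pointwise for a single fixed price and time by a Gaussian concentration inequality, and then lift it to a uniform statement over all $p \in \mathcal{P}$ and all $t \geq 1$ by a union bound that uses both the finiteness of $\mathcal{P}$ and the summability $\sum_t \pi_t^{-1} = 1$. First I would fix $p \in \mathcal{P}$ and $t \geq 1$ and condition on the history $\mathcal{F}_{t-1}$. Since $f_d$ and $f_c$ are sampled independently from their GP priors, the posterior update \eqref{posteriorGP} gives that, conditionally on $\mathcal{F}_{t-1}$, the values $f_d(p)$ and $f_c(p)$ are independent Gaussians with means $\mu_{t-1}^d(p), \mu_{t-1}^c(p)$ and variances $(\sigma_{t-1}^d(p))^2, (\sigma_{t-1}^c(p))^2$. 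As $r(p) = p\cdot f_d(p) - f_c(p)$ is a linear combination of these, it is itself Gaussian, with conditional mean the posterior mean $\mu_{t-1}^r(p)$ and conditional variance $\widetilde{\sigma}_{t-1}^r(p)^2 := p^2 (\sigma_{t-1}^d(p))^2 + (\sigma_{t-1}^c(p))^2$.

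The next step is to note that the quantity $\sigma_{t-1}^r(p) = p\,\sigma_{t-1}^d(p) + \sigma_{t-1}^c(p)$ appearing in the statement dominates the true posterior standard deviation: since $\sqrt{a^2+b^2} \leq a+b$ for $a,b \geq 0$, we have $\widetilde{\sigma}_{t-1}^r(p) \leq \sigma_{t-1}^r(p)$. Standardizing, $Z := (r(p) - \mu_{t-1}^r(p))/\widetilde{\sigma}_{t-1}^r(p)$ is a standard normal given $\mathcal{F}_{t-1}$, so the Gaussian tail bound $\Pro(|Z| > c) \leq e^{-c^2/2}$ for $c \geq 0$ yields
\begin{equation*}
    \Pro\left( \left| r(p) - \mu_{t-1}^r(p) \right| > \sqrt{\varphi_t}\,\sigma_{t-1}^r(p) \right)
    \leq
    \Pro\left( |Z| > \sqrt{\varphi_t} \right)
    \leq e^{-\varphi_t/2} \,,
\end{equation*}
where the first inequality uses $\widetilde{\sigma}_{t-1}^r(p) \leq \sigma_{t-1}^r(p)$.

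Substituting $\varphi_t = 2\log(|\mathcal{P}|\pi_t/\delta)$ gives a per-event failure probability of $e^{-\varphi_t/2} = \delta/(|\mathcal{P}|\pi_t)$. Finally I would take a union bound over the finite set $\mathcal{P}$ and over all $t \geq 1$:
\begin{equation*}
    \sum_{t \geq 1}\sum_{p \in \mathcal{P}} \frac{\delta}{|\mathcal{P}|\pi_t}
    = \delta \sum_{t \geq 1} \pi_t^{-1}
    = \delta \,,
\end{equation*}
using $\sum_t \pi_t^{-1} = 1$ (for instance $\pi_t = \pi^2 t^2/6$). Hence with probability at least $1-\delta$ the bound \eqref{eq:fmu} holds simultaneously for every $p \in \mathcal{P}$ and every $t \geq 1$.

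The main obstacle is conceptual rather than computational: one must justify that conditioning on the adaptively chosen history $\mathcal{F}_{t-1}$ still leaves $f_d(p), f_c(p)$ Gaussian for each \emph{fixed} $p$. This is precisely why the argument fixes $p$ before conditioning and only afterwards unions over $\mathcal{P}$: although the posterior mean and variance in \eqref{posteriorGP} depend on the random query locations $p_1,\dots,p_{t-1}$, for a prescribed $p$ the conditional law of $r(p)$ is genuinely Gaussian, so the concentration step remains valid under adaptive sampling. The finiteness of $\mathcal{P}$ is essential to keep the union bound summable; the extension to a continuous price set is treated separately by discretization in Lemma~\ref{lemma:Confbound2}.
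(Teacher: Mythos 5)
Your proposal is correct and follows the same overall strategy as the paper: a pointwise Gaussian tail bound followed by a union bound over the finite set $\mathcal{P}$ and over $t$ using $\sum_t \pi_t^{-1}=1$. The one genuine difference is in how the demand and claims components are combined. The paper applies the confidence interval separately to $f_{d}(p)$ and $f_{c}(p)$ and then stitches the two bounds together with the triangle inequality, $|r(p)-\mu^{r}_{t-1}(p)| \le |p f_{d}(p) - p\mu^{d}_{t-1}(p)| + |f_{c}(p)-\mu^{c}_{t-1}(p)|$; strictly speaking this requires holding two high-probability events simultaneously, which doubles the per-point failure probability (a constant the paper's write-up glosses over). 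You instead observe that, conditionally on the history, $r(p)=p f_{d}(p)-f_{c}(p)$ is itself Gaussian with variance $p^{2}(\sigma^{d}_{t-1}(p))^{2}+(\sigma^{c}_{t-1}(p))^{2}$, apply a single tail bound, and then relax the true standard deviation to the additive surrogate via $\sqrt{a^{2}+b^{2}}\le a+b$. This buys a cleaner probability accounting (one event per $(p,t)$ pair rather than two) at the cost of having to justify the conditional independence of the two posteriors, which you do. Both routes land on the same inequality; note also that the $\sigma^{r}_{t-1}(p_{t})$ on the right-hand side of the lemma statement should read $\sigma^{r}_{t-1}(p)$, and your argument correctly proves the version with $\sigma^{r}_{t-1}(p)$, which is what is used downstream.
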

\begin{proof}
Conditioned on  $\bm{y}_{t-1} = ({y}_{1}, \dots, {y}_{t-1})$, $\{{p}_{1}, \dots, {p}_{t-1}\}$ are deterministic and the marginals follow $f({p}) \sim \mathcal{N}(\mu_{t-1}({p}), \sigma_{t-1}^{2}({p}))$ for any fixed $t \geq 1$ and ${p} \in \mathcal{P}$ .
By the following tail bound, we know $\Pro(z>c) \leq (1/2)e^{-c^2 /2}$ for $c > 0$ if $z\sim\mathcal{N}(0, 1)$.
Let $z = \left({f({p}) - \mu_{t-1}({p})}\right) /{\sigma_{t-1}({p})}$ and
$c = \varphi_{t}^{1/2}$, then
\begin{equation*}
    \Pro\left(\frac{\left| f({p}) - \mu_{t-1}({p}) \right|}{\sigma_{t-1}({p})} > \varphi_{t}^{1/2} \right)
    \leq e^{-\varphi_{t} /2} \,.
\end{equation*}
With probability greater than $ 1 - |\mathcal{P}|e^{-\varphi_{t} /2}$, we have
\begin{equation*}
    {\left| f({p}) - \mu_{t-1}({p}) \right|}\leq \varphi_{t}^{1/2} {\sigma_{t-1}({p})}\,.
\end{equation*}
Given $r(p) = p \cdot f_{d}(p) - f_{c}(p)$, with probability greater than $ 1 - |\mathcal{P}|e^{-\varphi_{t} /2}$, we have
\begin{align*}
    \begin{split}
         {\left| r({p}) - \mu_{t-1}^r({p}) \right|}
         & = \left| \left(p f_{d}(p) - f_{c}(p)\right) - \left( p \mu^{d} - \mu^{c}\right) \right| \\
         & \leq \left| \left(p f_{d}(p) - p \mu^{d} \right)\right| + \left|\left(  f_{c}(p)- \mu^{c}\right) \right|\\
         & \leq p \varphi_{t}^{1/2}  {\sigma_{t-1}^{d}({p})} + \varphi_{t}^{1/2} {\sigma_{t-1}^{c}({p})}\,.
    \end{split}
\end{align*}
Let ${\sigma}_{t-1}^{r}({p}) = p_{t-1}{\sigma}_{t-1}^{d}({p}) + {\sigma}_{t-1}^{c}({p})$ and choose $|\mathcal{P}|e^{-\varphi_{t} /2} = \delta/\pi_{t}$.
By the union bound on all $t$, we obtain the results.
\end{proof}

By the Assumption \ref{Ass:f} and the union bound we have $\Pro \left(\sup_{{p} \in \mathcal{P}} \left|{\partial f }/{\partial p_{j}}\right| > J \right) \leq a' e^{-(J/b')^2}$.
Therefore, there exits $c>0$, for $a, b > 0$, such that
\begin{equation*}
    \Pro \left(\sup_{p \in \mathcal{P}} \left| \frac{\partial r}{\partial p_{j}}\right| > J \right) \leq a e^{-(J/b)^2}\,.
\end{equation*}
Then with probability greater than $ 1 - a e^{-(J/b)^2}$, for all $p \in \mathcal{P}$, we have
\begin{equation}\label{eq:f-f}
    \left| r(p) - r(p')\right|
    \leq
    J \left\|p - p'\right\| \,.
\end{equation}
Now, consider a sequence of discretisation $\mathcal{P}_{t}$ of cardinality $\zeta_{t}$ that satisfies
\begin{equation} \label{eq:xbound}
    \left\|p - [p]_{t}\right\| \leq r /\zeta_{t} \,,
\end{equation}
where $r = p_{h} - p_{l}$ is a constant which is the length of price set and $[p]_{t}$ is the closest price to $p$ in $\mathcal{P}_{t}$.
Let $\delta/2 = a e^{-(J/b)^2}$ and apply \eqref{eq:f-f}.
With probability greater than $ 1 - \delta/2$,
\begin{equation*}
    \left| r(p) - r(p')\right|
    \leq
    b\sqrt{\log(2a/\delta)} \left\|p - p'\right\| \,.
\end{equation*}
Together with \eqref{eq:xbound}, we can derive
\begin{align*}
    \begin{split}
        \left| r(p) - r([p]_{t})\right|
        & \leq
        b\sqrt{\log(2a/\delta)} \left\|p - [p]_{t}\right\| \\
        & \leq
        rb\sqrt{\log(2a/\delta)} /\zeta_{t} \,.
    \end{split}
\end{align*}
Choosing $\zeta_{t} =t^2$ yields
\begin{align*}
    \left| r(p) - r([p]_{t})\right| \leq {r b \sqrt{\log(2a/\delta)}}/{t^2}\,.
\end{align*}

We can now derive the bounds on regret.
\begin{lemma}\label{lemma:Confbound2}
Pick $\delta \in (0,1)$ and set
$$
\varphi_{t} = 2 \log( 2\pi_{t}^2 t^2 / 3\delta)
+ 2 \log\left(t^2 \right) \,.
$$
Then with probability greater than $ 1-\delta/2$, for any $p \in \mathcal{P}$ and $t \geq 1$, we have
\begin{equation}\label{eq:f[mu]}
    \left| r({p}) - \mu_{t-1}^{r}([{p}]_{t}) \right| \leq \varphi_{t}^{1/2} \sigma_{t-1}^{r}([{p}]_{t}) + \frac{r b  \sqrt{\log(2a/\delta)}}{t^2}\,.
\end{equation}
\end{lemma}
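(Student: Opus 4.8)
The plan is to bound $|r(p) - \mu_{t-1}^r([p]_t)|$ by inserting the grid value $r([p]_t)$ and splitting, via the triangle inequality, into a \emph{discretisation error} $|r(p) - r([p]_t)|$ and a \emph{confidence error} $|r([p]_t) - \mu_{t-1}^r([p]_t)|$ evaluated at the nearest grid point $[p]_t \in \mathcal{P}_t$. The two pieces are controlled by the two ingredients already assembled before the statement: the finite-set confidence bound of Lemma~\ref{lemma:Confbound1}, and the Lipschitz/discretisation estimate derived from Assumption~\ref{Ass:f}.

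First I would apply Lemma~\ref{lemma:Confbound1} with the finite set $\mathcal{P}_t$ in place of $\mathcal{P}$. Since $[p]_t$ ranges over $\mathcal{P}_t$ and $|\mathcal{P}_t| = \zeta_t = t^2$, substituting this cardinality into the prescription $\varphi_t = 2\log(|\mathcal{P}_t|\pi_t/\delta')$ of Lemma~\ref{lemma:Confbound1} produces precisely the additive term $2\log(t^2)$ appearing in the stated $\varphi_t$, while the remaining term $2\log(2\pi_t^2 t^2 / 3\delta)$ carries the union-bound weights $\pi_t$ (with $\sum_t \pi_t^{-1} = 1$) together with the confidence level. This yields, simultaneously for all $t \geq 1$ and all points of $\mathcal{P}_t$, hence in particular for $[p]_t$, the bound
\begin{equation*}
    \left| r([p]_t) - \mu_{t-1}^r([p]_t) \right| \leq \sqrt{\varphi_t}\,\sigma_{t-1}^r([p]_t)
\end{equation*}
on a high-probability event.

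Next I would invoke the Lipschitz estimate established in \eqref{eq:f-f}--\eqref{eq:xbound}: under Assumption~\ref{Ass:f}, with probability at least $1-\delta/2$ the sampled revenue $r$ is Lipschitz with constant $b\sqrt{\log(2a/\delta)}$, and because the grid is chosen so that $\|p-[p]_t\| \leq r/\zeta_t = r/t^2$, this gives
\begin{equation*}
    \left| r(p) - r([p]_t) \right| \leq \frac{r b \sqrt{\log(2a/\delta)}}{t^2}
\end{equation*}
uniformly in $p$ and $t$. On the intersection of the confidence event and this Lipschitz event, the triangle inequality
\begin{equation*}
    \left| r(p) - \mu_{t-1}^r([p]_t) \right| \leq \left| r(p) - r([p]_t) \right| + \left| r([p]_t) - \mu_{t-1}^r([p]_t) \right|
\end{equation*}
then combines the two displays into exactly \eqref{eq:f[mu]}.

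The only delicate point, and the step I expect to require the most care, is the probability accounting. One must verify that the weights $\pi_t$ make the union bound over $t$ and over the $t^2$ grid points of $\mathcal{P}_t$ cost only the allotted share of $\delta$, and that substituting $|\mathcal{P}_t| = t^2$ into Lemma~\ref{lemma:Confbound1} reproduces the stated $\varphi_t$ up to the intended constants; once the confidence event and the Lipschitz event are allotted their shares of the failure probability, the union bound delivers the conclusion with the stated probability. Everything else is a mechanical triangle-inequality combination of results already in hand.
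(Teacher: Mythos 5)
Your proposal is correct and follows essentially the same route as the paper: split $|r(p)-\mu_{t-1}^{r}([p]_t)|$ by the triangle inequality into the discretisation error controlled by the Lipschitz bound from Assumption \ref{Ass:f} with grid spacing $r/\zeta_t = r/t^2$, and the confidence error at $[p]_t$ controlled by Lemma \ref{lemma:Confbound1} applied to the finite set $\mathcal{P}_t$ with $|\mathcal{P}_t|=t^2$ absorbed into $\varphi_t$. Your explicit bookkeeping of how the two events share the failure probability is actually more careful than the paper's own writeup, which states both sub-events at level $\delta/2$ without spelling out the union bound.
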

\begin{proof}

Together with \eqref{eq:xbound}, we can derive
\begin{align*}
    \begin{split}
        \left| r(p) - r([p]_{t})\right|
        & \leq
        b\sqrt{\log(2a/\delta)} \left\|p - [p]_{t}\right\| \\
        & \leq
        rb\sqrt{\log(2a/\delta)} /\zeta_{t} \,.
    \end{split}
\end{align*}
By Lemma \ref{lemma:Confbound1}, for all $p \in \mathcal{P}$ with probability greater than $ 1 - \delta/2$,
\begin{align*}
    \left| r({p}) - \mu_{t-1}^{r}([{p}]_{t}) \right|
    & \leq
    \left| r({p}) - r([p]_{t}) \right| +
    \left| r([p]_{t}) -  \mu_{t-1}^{r}([{p}]_{t})\right| \\
    & \leq {r b \sqrt{\log(2a/\delta)}}/{t^2} + \varphi_{t}^{1/2} \sigma_{t-1}^{r}([{p}]_{t})\,.
\end{align*}
Then obtain the results.
\end{proof}

\begin{lemma}\label{lemma:simpleregretbound2}
with probability greater than $ 1 - \delta$,  for all $t \geq 1$, the regret is bounded by
\begin{align*}
    r(p^{*}) -  r(p_{t}) \leq 2\sqrt{\varphi_{t}} {\sigma}_{t-1}^{r}({p}_{t})+  \frac{r b  \sqrt{\log(2a/\delta)}}{t^2}\,.
\end{align*}
\end{lemma}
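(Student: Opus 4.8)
The plan is to run the standard GP-UCB ``sandwich'' argument of Srinivas et al., adapted to the continuous price set $\mathcal{P}$. The instantaneous regret $r(p^{*})-r(p_{t})$ will be controlled by two confidence statements that together hold with probability at least $1-\delta$: one at the unknown optimal price $p^{*}$, routed through its discretisation $[p^{*}]_{t}$, and one at the actually chosen price $p_{t}$. The selection rule $p_{t}=\argmax_{p\in\mathcal{P}}\mu_{t-1}^{r}(p)+\sqrt{\varphi_{t}}\,\sigma_{t-1}^{r}(p)$ is what glues the two together, since it guarantees that the acquisition value at $p_{t}$ dominates that at any competing point, in particular at $[p^{*}]_{t}$.

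First I would apply Lemma \ref{lemma:Confbound2} with $p=p^{*}$ to obtain, on the good event,
\begin{equation*}
    r(p^{*}) \leq \mu_{t-1}^{r}([p^{*}]_{t}) + \sqrt{\varphi_{t}}\,\sigma_{t-1}^{r}([p^{*}]_{t}) + \frac{r b \sqrt{\log(2a/\delta)}}{t^{2}}\,.
\end{equation*}
Since $[p^{*}]_{t}\in\mathcal{P}_{t}\subseteq\mathcal{P}$ and $p_{t}$ maximises the acquisition function over $\mathcal{P}$, the upper confidence value at $[p^{*}]_{t}$ cannot exceed that at $p_{t}$, so
\begin{equation*}
    \mu_{t-1}^{r}([p^{*}]_{t}) + \sqrt{\varphi_{t}}\,\sigma_{t-1}^{r}([p^{*}]_{t}) \leq \mu_{t-1}^{r}(p_{t}) + \sqrt{\varphi_{t}}\,\sigma_{t-1}^{r}(p_{t})\,.
\end{equation*}
Then I would invoke the confidence bound at $p_{t}$ in the form $\mu_{t-1}^{r}(p_{t}) \leq r(p_{t}) + \sqrt{\varphi_{t}}\,\sigma_{t-1}^{r}(p_{t})$, which is the relevant half of Lemma \ref{lemma:Confbound1} applied to the chosen point. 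Chaining the three displays and cancelling $\mu_{t-1}^{r}(p_{t})$ gives $r(p^{*}) \leq r(p_{t}) + 2\sqrt{\varphi_{t}}\,\sigma_{t-1}^{r}(p_{t}) + r b \sqrt{\log(2a/\delta)}\,t^{-2}$, which rearranges to the claim. Throughout, the additive form $\sigma_{t-1}^{r}(p)=p_{t-1}\sigma_{t-1}^{d}(p)+\sigma_{t-1}^{c}(p)$ is inherited from the triangle-inequality split of $r(p)=p\,f_{d}(p)-f_{c}(p)$ into its demand and claims parts, exactly as in the proof of Lemma \ref{lemma:Confbound1}, so no fresh estimate is needed there.

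The main obstacle is the probabilistic bookkeeping rather than the algebra: I must ensure \emph{both} confidence events hold simultaneously for every $t\geq 1$ with total failure probability at most $\delta$. The plan is to split the budget, assigning $\delta/2$ to the Lipschitz-plus-discretisation event underlying Lemma \ref{lemma:Confbound2} at $p^{*}$ and $\delta/2$ to the pointwise Gaussian tail bound at $p_{t}$, and to union over $t$ using a sequence $\pi_{t}>0$ with $\sum_{t}\pi_{t}^{-1}=1$. This forces $\varphi_{t}$ to absorb three sources of error at once: the union over the discretisation $\mathcal{P}_{t}$ of cardinality $\zeta_{t}=t^{2}$, the pointwise tail at the random point $p_{t}$, and the union over $t$; reconciling these is precisely what produces the composite expression $\varphi_{t}=2\log(2\pi_{t}^{2}t^{2}/(3\delta))+2\log(t^{2})$ appearing in Lemma \ref{lemma:Confbound2} and Theorem \ref{theorem_GPBounds}. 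Extra care is needed because $p_{t}$ is $\mathcal{F}_{t-1}$-measurable but random, so the tail bound at $p_{t}$ must be argued conditionally on the history before the union over $t$ is taken.
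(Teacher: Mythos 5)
Your proposal is correct and follows essentially the same route as the paper's proof: the same three-step chain (Lemma \ref{lemma:Confbound2} at $[p^{*}]_{t}$, the argmax property of the UCB rule, and Lemma \ref{lemma:Confbound1} at $p_{t}$) yielding the stated bound. Your additional care about the $\delta/2$ split and the union over $t$ is the bookkeeping the paper leaves implicit, but it is the same intended argument.
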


\begin{proof}
By the definition that ${p}_{t} = \argmax_{{p} \in \mathcal{P}}{{\mu}_{t-1}^{r}({p})+ \sqrt{\varphi_{t}} {\sigma}_{t-1}^{r}({p})}$, we have
\begin{equation}\label{eq:ucb1}
    {\mu}_{t-1}^{r}({p}_{t})+ \sqrt{\varphi_{t}} {\sigma}_{t-1}^{r}({p}_{t}) \geq {\mu}_{t-1}^{r}([{p}^*]_{t})+ \sqrt{\varphi_{t}} {\sigma}_{t-1}^{r}([{p}^*]_{t}) \,.
\end{equation}
By \eqref{eq:f[mu]}, we have
\begin{equation}\label{eq:ucb2}
    {\mu}_{t-1}^{r}([{p}^*]_{t})+ \sqrt{\varphi_{t}} {\sigma}_{t-1}^{r}([{p}^*]_{t}) \geq r({p}^*) - r b\sqrt{\log(2a/\delta)} /t^2 \,.
\end{equation}
\eqref{eq:ucb1} and \eqref{eq:ucb2} imply
\begin{equation*}
    {\mu}_{t-1}^{r}({p}_{t})+ \sqrt{\varphi_{t}}
    {\sigma}_{t-1}^{r}({p}_{t}) \geq r({p}^*) - r b\sqrt{\log(2a/\delta)} /t^2 \,.
\end{equation*}
Together with 
\eqref{eq:fmu}, we can derive
\begin{align*}
    \begin{split}
        r(p^{*}) - r(p_{t})
        & \leq {\mu}_{t-1}^{r}({p}_{t})+ \sqrt{\varphi_{t}} {\sigma}_{t-1}^{r}({p}_{t}) + \frac{r b  \sqrt{\log(2a/\delta)}}{t^2} - r({p}_{t}) \\
        & \leq 2\sqrt{\varphi_{t}} {\sigma}_{t-1}^{r}({p}_{t})+  \frac{r b  \sqrt{\log(2a/\delta)}}{t^2}\,.
    \end{split}
\end{align*}
\end{proof}

\begin{lemma}\label{lemma:InfoGain2kernels}
For the combination of additive kernels $k_{d}({p},  {p}') + k_{c}({p},  {p}')$, we have
\begin{equation*}
    \gamma_{T}(k_{d}+k_{c})
    \leq \gamma_{T}(k_{d}) + \gamma_{T}(k_{c}) \,.
\end{equation*}
\end{lemma}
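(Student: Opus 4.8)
The plan is to reduce the statement to a single determinant inequality for positive semidefinite matrices and then pass to the maximum. Fix a finite subset $\mathcal{A} \subset \mathcal{P}$ with $|\mathcal{A}| = T$. By the closed form recorded before \eqref{eq:gamma}, the information gain at these points is $I(\bm{y}_{\mathcal{A}}; f_{\mathcal{A}}) = \frac{1}{2}\log|\bm{I} + \sigma^{-2}\bm{K}_{\mathcal{A}}|$, and for the additive kernel the Gram matrix splits as $\bm{K}_{\mathcal{A}} = \bm{K}^{d}_{\mathcal{A}} + \bm{K}^{c}_{\mathcal{A}}$, where $\bm{K}^{d}_{\mathcal{A}}, \bm{K}^{c}_{\mathcal{A}}$ are the Gram matrices of $k_{d}, k_{c}$ on $\mathcal{A}$. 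Writing $\bm{A} = \sigma^{-2}\bm{K}^{d}_{\mathcal{A}}$ and $\bm{B} = \sigma^{-2}\bm{K}^{c}_{\mathcal{A}}$, both symmetric positive semidefinite, it therefore suffices to prove, for every such $\mathcal{A}$,
\begin{equation*}
    \det(\bm{I} + \bm{A} + \bm{B}) \leq \det(\bm{I} + \bm{A})\,\det(\bm{I} + \bm{B})\,.
\end{equation*}

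To prove this determinant inequality I would use a congruence factorization. Since $\bm{I} + \bm{A}$ is positive definite, set $\tilde{\bm{B}} = (\bm{I}+\bm{A})^{-1/2}\bm{B}(\bm{I}+\bm{A})^{-1/2}$; then $\bm{I} + \bm{A} + \bm{B} = (\bm{I}+\bm{A})^{1/2}(\bm{I} + \tilde{\bm{B}})(\bm{I}+\bm{A})^{1/2}$, so taking determinants gives $\det(\bm{I}+\bm{A}+\bm{B}) = \det(\bm{I}+\bm{A})\det(\bm{I}+\tilde{\bm{B}})$. It then remains to check $\det(\bm{I}+\tilde{\bm{B}}) \leq \det(\bm{I}+\bm{B})$. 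The matrix $\tilde{\bm{B}}$ has the same eigenvalues as $\bm{B}^{1/2}(\bm{I}+\bm{A})^{-1}\bm{B}^{1/2}$ (the two being of the form $XY$ and $YX$ with $X = (\bm{I}+\bm{A})^{-1/2}\bm{B}^{1/2}$), and because $\bm{A}\succeq 0$ forces $(\bm{I}+\bm{A})^{-1}\preceq \bm{I}$, congruence by $\bm{B}^{1/2}$ yields $\bm{B}^{1/2}(\bm{I}+\bm{A})^{-1}\bm{B}^{1/2} \preceq \bm{B}$. Monotonicity of eigenvalues under the Loewner order then gives $\lambda_{i}(\tilde{\bm{B}}) \leq \lambda_{i}(\bm{B})$ for every $i$, whence $\det(\bm{I}+\tilde{\bm{B}}) = \prod_{i}(1 + \lambda_{i}(\tilde{\bm{B}})) \leq \prod_{i}(1+\lambda_{i}(\bm{B})) = \det(\bm{I}+\bm{B})$.

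Taking $\tfrac{1}{2}\log$ of the determinant inequality gives, at the fixed set $\mathcal{A}$, that the information gain for $k_{d}+k_{c}$ is at most the sum of the information gains for $k_{d}$ and $k_{c}$ evaluated on the same $\mathcal{A}$. Choosing $\mathcal{A}$ to attain $\gamma_{T}(k_{d}+k_{c})$ and bounding each term on the right by its own maximum over cardinality-$T$ subsets, per \eqref{eq:gamma}, then yields $\gamma_{T}(k_{d}+k_{c}) \leq \gamma_{T}(k_{d}) + \gamma_{T}(k_{c})$. I expect the only real obstacle to be the determinant inequality of the second paragraph, in particular making the eigenvalue domination for $\tilde{\bm{B}}$ rigorous, since the naive hope $\tilde{\bm{B}}\preceq\bm{B}$ is false and one must instead argue through the spectrally equivalent matrix $\bm{B}^{1/2}(\bm{I}+\bm{A})^{-1}\bm{B}^{1/2}$ and Weyl monotonicity. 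An equivalent route is purely information-theoretic: using independence of $f_{d}, f_{c}$ and the chain rule $I(\bm{y};f_{d},f_{c}) = I(\bm{y};f_{d}) + I(\bm{y};f_{c}\mid f_{d})$, where treating the other component as additional noise can only decrease the information about each component; this reproduces the same inequality and stays closer to the entropy computations already used in Lemma~\ref{Lemma:InfoGain}.
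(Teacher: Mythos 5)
Your proposal is correct and follows the same route as the paper: both reduce the claim to the log-determinant inequality $\log\left|\bm{I}+\sigma^{-2}(\bm{K}_{d}+\bm{K}_{c})\right|\leq\log\left|\bm{I}+\sigma^{-2}\bm{K}_{d}\right|+\log\left|\bm{I}+\sigma^{-2}\bm{K}_{c}\right|$ and then pass to the maximum over cardinality-$T$ subsets. The only difference is that the paper simply asserts this inequality (``We can show that\dots''), whereas you actually prove it via the congruence factorization and the eigenvalue comparison through $\bm{B}^{1/2}(\bm{I}+\bm{A})^{-1}\bm{B}^{1/2}\preceq\bm{B}$, which is a sound and complete argument.
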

\begin{proof}
For $\mathcal{A} \subset \mathcal{P}$, let $\bm{K}_{d}$ be the Gram matrix on $\mathcal{A}$ for $k_{d}({p},  {p}')$ and $\bm{K}_{c}$ be the Gram matrix for $k_{c}({p},  {p}')$.
We can show that
\begin{align*}
    \begin{split}
        \log \left|\bm{I} + \sigma^{-2} \left(\bm{K}_{d}+\bm{K}_{c}\right)\right|
        \leq
        \log \left|\bm{I} + \sigma^{-2} \bm{K}_{d}\right| + \log \left|\bm{I} + \sigma^{-2} \bm{K}_{c}\right| \,.
    \end{split}
\end{align*}
It gives that
\begin{equation*}
    \gamma_{T}(k_{d}+k_{c})
    \leq \gamma_{T}(k_{d}) + \gamma_{T}(k_{c}) \,.
\end{equation*}
\end{proof}
\subsection{Proof of Theorem \ref{theorem_GLMDelayBound}}\label{Appendix_DelayGLM}
To prove Theorem \ref{theorem_GLMDelayBound}, we need following lemmas.
\begin{lemma}\label{lemma:sumtildetau}
$\sum_{s=1}^{T}\widetilde{\tau}_{s} = \sum_{t=1}^{T}{\tau}_{t}$ \,.
\end{lemma}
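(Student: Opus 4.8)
The plan is to prove the identity by double counting, recognising it as the discrete form of Little's law: the total accumulated delay equals the sum over time of the number of claims still in flight. First I would rewrite each delay as a sum of indicators, $\tau_{t}=\sum_{u}\mathbbm{1}\{t\le u<t+\tau_{t}\}$, and interchange the two finite sums:
\begin{equation*}
    \sum_{t=1}^{T}\tau_{t}
    =\sum_{u}\,\#\{\,t\le u:\ t+\tau_{t}>u\,\}
    =\sum_{u} q_{u},
\end{equation*}
where $q_{u}$ is the number of claims that have been initiated by time $u$ but are not yet observed. The swap of the order of summation is the only manipulation here and is immediate; the horizon assumption (every claim observed by time $T$) makes all sums finite.

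Next I would show that $\widetilde{\tau}_{s}$ counts exactly these same in-flight claims, but read off along the observation order instead of along real time. By definition $\widetilde{\tau}_{s}=(s-1)-N(\rho(s))$: the term $s-1$ is the number of claims that would already have been incorporated by the arrival of the $s$-th feedback if there were no delay, while $N(\rho(s))$ is the number genuinely observed strictly before time $\rho(s)$, so their difference is the size of the outstanding set at that instant. The goal is then the reindexing identity $\sum_{s=1}^{T}\widetilde{\tau}_{s}=\sum_{u}q_{u}$. I would obtain it by matching each observation index $s$ with the real time $\rho(s)$ at which its claim is observed, using that $s\mapsto\rho(s)$ enumerates the observation times $\{i+\tau_{i}\}$ and that $N(\cdot)$ is constant between consecutive arrivals, so that $\sum_{s}[(s-1)-N(\rho(s))]$ telescopes and collects each outstanding claim exactly once.

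The main obstacle is precisely this change of clocks between the real-time index $t$ and the observation-order index $s$. Along real time several claims may be observed in one step — so $N$ can jump by more than one, and some steps carry no arrival at all — whereas $s$ advances by exactly one per claim; making the bijection between the pairs $(t,\,t'\in\mathcal{C}_{t})$ and the indices $s$ explicit, and verifying that the telescoping counts every in-flight claim once and only once, is the delicate part of the argument. Once this correspondence is pinned down the identity follows, and the companion bound $\widetilde{\tau}_{s}\le 2m$ together with the finite horizon guarantees that all the sums rearranged above are absolutely convergent, so every interchange is justified.
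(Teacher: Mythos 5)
Your argument is correct and is essentially the paper's own proof in different clothing: the paper writes $\sum_s\widetilde{\tau}_s=\sum_s(s-1)-\sum_s N(\rho(s))$, uses that $\rho$ is a permutation of $\{1,\dots,T\}$ to turn this into $\sum_t(t-1-N(t))$, and then invokes the definition of $N(t)$ — which is exactly your indicator-interchange/Little's-law step — to conclude $\sum_t\tau_t$. Your ``telescoping'' language for the reindexing is unnecessary (the permutation identity does all the work), but the substance matches.
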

\begin{proof}
By the definition of $\widetilde{\tau}_{s}$, we have
\begin{align*}
    \begin{split}
        \sum_{s=1}^{T}\widetilde{\tau}_{s}
        & = \sum_{s=1}^{T} \left(s-1-N(\rho(s))\right)\\
        & = \sum_{t=1}^{T} \left(t-1\right)-\sum_{s=1}^{T} N(\rho(s))\\
        & = \sum_{t=1}^{T} \left(t-1\right)-\sum_{t=1}^{T} N(t)\\
        & = \sum_{t=1}^{T} \left(t-1-N(t)\right)\\
        & = \sum_{t=1}^{T} \tau_{t} \,.
    \end{split}
\end{align*}
The third equality is obtained because $\{\rho(s) : s=1, \dots, T\}$ is a permutation of $\{1, \dots, T\}$.
The forth equality is derived by the definition of $N(t) = \sum_{i=1}^{t-1} \mathbbm{1}\{{i + \tau_{i} \leq t-1}\}$.
\end{proof}

\begin{lemma}\label{lemma:tildetau}
Given a maximum waiting time $m$ and ${\tau}_{t} \leq m$ for all $t$, we have $\widetilde{\tau}_{s} \leq 2m$ for all $s$.
\end{lemma}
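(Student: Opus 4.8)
The plan is to argue directly from the defining identity $\widetilde{\tau}_{s} = s - 1 - N(\rho(s))$ and to bound the two pieces $s-1$ and $N(\rho(s))$ separately, each in terms of the observation time $\rho(s)$. The guiding intuition is that $s-1$ counts all claims observed before the $s$-th one (in arrival order), while $N(\rho(s))$ counts those whose arrival time is strictly smaller than $\rho(s)$; the discrepancy between the two is governed entirely by how many claims can be ``in flight'' around a single time step, and since no claim is delayed by more than $m$, this discrepancy can be converted into the stated bound.

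First I would establish an upper bound on $s$. Because $\tau_{i} \ge 0$, the claim triggered by a price set at time $i$ is observed at time $i + \tau_{i} \ge i$, so every claim observed by time $\rho(s)$ must come from a price set at one of the times $1, \dots, \rho(s)$. Since each of the $T$ prices produces exactly one claim, at most $\rho(s)$ claims can have been observed by time $\rho(s)$; as the $s$-th observation occurs at time $\rho(s)$, this gives
\begin{equation*}
    s \le \#\{i : i + \tau_{i} \le \rho(s)\} \le \rho(s)\,.
\end{equation*}
Next I would produce a matching lower bound on $N(\rho(s)) = \sum_{i=1}^{\rho(s)-1} \mathbbm{1}\{i + \tau_{i} \le \rho(s) - 1\}$. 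The key point is that every price set at a time $i \le \rho(s) - 1 - m$ has its claim observed by time $i + \tau_{i} \le i + m \le \rho(s) - 1$, and is therefore counted by $N(\rho(s))$. Using $\tau_{i} \le m$, this yields $N(\rho(s)) \ge \max\{0,\, \rho(s) - 1 - m\}$.

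Combining the two estimates gives $\widetilde{\tau}_{s} = s - 1 - N(\rho(s)) \le (\rho(s) - 1) - (\rho(s) - 1 - m) = m$ whenever $\rho(s) \ge m+1$, and $\widetilde{\tau}_{s} \le \rho(s) - 1 \le m$ in the boundary regime $\rho(s) \le m$. In every case $\widetilde{\tau}_{s} \le m \le 2m$, so the claimed bound holds (in fact with room to spare, the constant $2m$ being a safe over-estimate). The proof requires no analytic input beyond $0 \le \tau_{i} \le m$.

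I do not expect a genuine analytic obstacle here; the difficulty is purely combinatorial bookkeeping. The step I would be most careful about is justifying that the observation index $s$ and the count $N(\rho(s))$ really do sandwich $\rho(s)$ as above, which rests on $\rho$ being an enumeration of the claim-arrival times $\{i + \tau_{i}\}$ together with $\tau_{i} \ge 0$. I would also pin down the tie-breaking convention for claims arriving simultaneously (so that $s$ is well defined) and check the small-time boundary cases $\rho(s) \le m$ explicitly, since there the lower bound on $N(\rho(s))$ degenerates to $0$.
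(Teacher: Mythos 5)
Your argument has a genuine gap, and it sits in the single inequality $s \le \rho(s)$. You read $\rho(s)$ as the time at which the $s$-th claim is \emph{observed}; the paper's prose does invite that reading, but it is inconsistent with how $\rho(s)$ is used everywhere else. The regret decomposition needs $\{\rho(s): s=1,\dots,T\}$ to be a permutation of $\{1,\dots,T\}$ and needs $p_{\rho(s)}=\widetilde{p}_{s-\widetilde{\tau}_s}=\widetilde{p}_{N(\rho(s))+1}$ to be the price \emph{selected} at time $\rho(s)$; both force $\rho(s)$ to be the time at which the price was chosen whose claim turns out to be the $s$-th one observed. That claim is then observed at time $\rho(s)+\tau_{\rho(s)}$, not at time $\rho(s)$, so your counting argument only yields $s\le \rho(s)+\tau_{\rho(s)}\le \rho(s)+m$. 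Combined with your (correct) lower bound $N(\rho(s))\ge \rho(s)-1-m$, this gives $\widetilde{\tau}_s\le 2m$ --- which is exactly the paper's proof --- and not $m$.

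The stronger conclusion $\widetilde{\tau}_s\le m$ is in fact false, which confirms the step cannot be repaired under your reading. Take $m=2$, $T=5$, $\tau_1=\tau_2=\tau_3=2$ and $\tau_4=\tau_5=0$. The claims from prices $1,2,3$ arrive at times $3,4,5$ and those from prices $4,5$ at times $4,5$. If the tie at time $5$ is broken so that price $5$'s claim is processed before price $3$'s, then price $3$'s claim is the fifth observed, so $\rho(5)=3$, while $N(3)=\#\{i\le 2: i+\tau_i\le 2\}=0$; hence $\widetilde{\tau}_5=5-1-0=4=2m$. So the constant $2m$ is not a safe over-estimate but is attained, and the factor of $m$ you dropped comes precisely from the delay $\tau_{\rho(s)}$ of the $s$-th observed claim itself. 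Your lower bound on $N(\rho(s))$ and your handling of the boundary cases $\rho(s)\le m$ are otherwise fine.
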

\begin{proof}
By the definition of $\widetilde{\tau}_{s}$, we have
\begin{align*}
    \begin{split}
        \widetilde{\tau}_{s}
        & = s - 1 - N(\rho(s))\\
        & \leq s - 1 - \left(\rho(s) - 1- m \right)\\
        & \leq \tau_{\rho(s)} + m \\
        & \leq 2m \,.
    \end{split}
\end{align*}
In the first inequality, we use $N(\rho(s))  \geq \rho(s) - 1 - m$.
This is because at the beginning of time $\rho(s)$, there are at least $\rho(s) - 1 - m$ and at most $\rho(s)-1$ claims can be observed.
In the second inequality, we assume that the $s$-th claim incurs when a price is chosen at time $\rho(s)$ and then observed at time $\rho(s) + \tau_{\rho(s)}$.
The number of claims that observed by time $\rho(s) + \tau_{\rho(s)}-1$ is $N(\rho(s) + \tau_{\rho(s)})$.
Then the first claim that observed in time $\rho(s) + \tau_{\rho(s)}$ is $N(\rho(s) + \tau_{\rho(s)}) + 1$, which is also the $s$-th claim.
It implies
\begin{align*}
    \begin{split}
        s = N(\rho(s) + \tau_{\rho(s)}) + 1
        \leq \rho(s) + \tau_{\rho(s)}\,.
    \end{split}
\end{align*}
Since ${\tau}_{t} \leq m$ for all $t$, we obtain $\widetilde{\tau}_{s} \leq 2m$.
\end{proof}

\section*{References}

\bibliography{paper}

\end{document}